\def\maketag@@@#1{\hbox{\m@th\normalfont\normalsize#1}}
\newcommand{\e}{\mathrm{e}}
\def\be{\mathbf{e}}
\def\IN{\mathbb {N}}
\def\IZ{\mathbb {Z}}
\def\IC{\mathbb {C}}
\def\IP{\mathbb {P}}
\def\ba{\boldsymbol{a}}
\def\bb{\boldsymbol{b}}
\def\br{\boldsymbol{r}}
\def\bk{\boldsymbol{k}}
\def\bs{\boldsymbol{s}}
\def\bm{\boldsymbol{m}}
\def\bx{\boldsymbol{x}}
\def\bsig{\boldsymbol{\sigma}}
\def\bmu{\boldsymbol{\mu}}
\def\bdel{\boldsymbol{\delta}}
\def\bY{\boldsymbol{Y}}
\def\bW{\boldsymbol{W}}
\def\bN{\boldsymbol{N}}
\def\bB{\boldsymbol{B}}
\def\ffc{\mathfrak{c}}
\def\fq{\mathfrak{q}}
\def\ft{\mathfrak{t}}
\def\bft{\boldsymbol{\mathfrak{t}}}
\def\bfx{\boldsymbol{\mathsf{x}}}
\def\bfc{\boldsymbol{\mathfrak{c}}}
\def\ll{\left\lgroup}
\def\rr{\right\rgroup}
\def\inc#1{%
\setbox1\hbox{\includegraphics[scale=.375]{#1}}%
\vcenter{\box1}%
}
\newcommand{\wsq}{\square}
\DeclareMathOperator{\partit}{par}
\def\bell{\boldsymbol{\ell}}
\def\bX{\boldsymbol{X}}
\def\bfone{{\boldsymbol{1}}}
\newcommand{\cartan}{\mathfrak{h}}
\newcommand{\ocartan}{\overline{\cartan}}
\newcommand{\ofwt}{\overline{\Lambda}}
\newcommand{\orho}{\overline{\rho}}
\newcommand{\oA}{\overline{A}}
\newcommand{\dimfin}{{n}}    
\newcommand{\dimaff}{{n}}    
\newcommand{\levelaff}{{N}}    
\newcommand{\indfin}{\overline{\mathcal I}_\dimfin}
\newcommand{\indaff}{{\mathcal I}_\dimaff}
\newcommand{\indaffm}{\overline{\mathcal I}_\dimaff}
\newcommand{\inner}[2]{\left\langle{#1},{#2}\right\rangle}
\newcommand{\slfin}[1]{\mathfrak{sl}(#1)}
\newcommand{\slchap}[1]{\widehat{\mathfrak{sl}}(#1)}
\newcommand{\fwt}{\Lambda}
\newcommand{\pochinf}[2]{(#1;#2)_{\infty}}
\newcommand{\vpochinf}[2]{\left(#1;#2\right)_{\infty}}
\newcommand{\burgeset}[1]{\mathcal{C}^{#1}} 
\newcommand{\burgecolset}[2]{\mathcal{C}^{#1}_{#2}} 
\newcommand{\burgecolfun}[2]{{X}^{#1}_{#2}} 
\newcommand{\rburgecolset}[3]{\mathcal{C}^{#1}_{#2;#3}} 
\newcommand{\rburgecolfun}[3]{{X}^{#1}_{#2;#3}} 
\newcommand{\colfun}[1]{{X}_{#1}}
\newcommand{\rcolfun}[2]{{X}_{#1;#2}}
\newcommand{\multicol}[1]{\mathcal{M}^{#1}}
\newcommand{\multicolref}[2]{\mathcal{M}^{#1,#2}}
\newcommand{\multicolHL}[1]{\mathcal{M}^{#1}_\ast}
\newcommand{\partitset}{\textsl{Par}\,}
\newcommand{\xchi}{\bar\chi}
\newcommand{\xa}{\bar a}
\newcommand{\qchi}{\chi}
\newcommand{\qvar}{\fq}
\newcommand{\mydelta}{\eta}
\theoremstyle{plain}
  \newtheorem{prob}{Problem}[section]
  \newtheorem{conj}[prob]{Conjecture}
  \newtheorem{prop}[prob]{Proposition}
   \newtheorem{cor}[prob]{Corollary}
    \newtheorem{lemm}[prob]{Lemma}
    \newtheorem{defi}[prob]{Definition}
\theoremstyle{remark}
  \newtheorem{remark}[prob]{\bf Remark}
\newtheorem{exam}[prob]{\bf Example}
\newdimen\tableauside\tableauside=1.0ex
\newdimen\tableaurule\tableaurule=0.4pt
\newdimen\tableaustep
\def\phantomhrule#1{\hbox{\vbox to0pt{\hrule height\tableaurule width#1\vss}}}
\def\phantomvrule#1{\vbox{\hbox to0pt{\vrule width\tableaurule height#1\hss}}}
\def\sqr{\vbox{%
  \phantomhrule\tableaustep
  \hbox{\phantomvrule\tableaustep\kern\tableaustep\phantomvrule\tableaustep}%
  \hbox{\vbox{\phantomhrule\tableauside}\kern-\tableaurule}}}
\def\squares#1{\hbox{\count0=#1\noindent\loop\sqr
  \advance\count0 by-1 \ifnum\count0>0\repeat}}
\def\tableau#1{\vcenter{\offinterlineskip
  \tableaustep=\tableauside\advance\tableaustep by-\tableaurule
  \kern\normallineskip\hbox
    {\kern\normallineskip\vbox
      {\gettableau#1 0 }%
     \kern\normallineskip\kern\tableaurule}%
  \kern\normallineskip\kern\tableaurule}}
\def\gettableau#1 {\ifnum#1=0\let\next=\null\else
  \squares{#1}\let\next=\gettableau\fi\next}
\def\Left#1#2\Right{\begingroup%
   \def\ts@r{\nulldelimiterspace=0pt \mathsurround=0pt}%
   \let\@hat=#1%
   \def\sht@im{#2}%
   \def\@t{{\mathchoice{\def\@fen{\displaystyle}\k@fel}%
          {\def\@fen{\textstyle}\k@fel}%
          {\def\@fen{\scriptstyle}\k@fel}%
          {\def\@fen{\scriptscriptstyle}\k@fel}}}%
   \def\g@rin{\ts@r\left\@hat\vphantom{\sht@im}\right.}%
   \def\k@fel{\setbox0=\hbox{$\@fen\g@rin$}\hbox{%
      $\@fen \kern.3875\wd0 \copy0 \kern-.3875\wd0%
      \llap{\copy0}\kern.3875\wd0$}}%
      \def\pt@h{\mathopen\@t}\pt@h\sht@im%
      \Right}%
\def\Right#1{\let\@hat=#1%
   \def\st@m{\mathclose\@t}%
   \st@m\endgroup}
 \renewcommand{\theequation}{%
       \thesection.\arabic{equation}}
\def\eqnarray{%
 \stepcounter{equation}%
 \let\@currentlabel=\theequation
 \global\@eqnswtrue
 \global\@eqcnt\z@
 \tabskip\@centering
 \let\\=\@eqncr
 $$\halign to \displaywidth\bgroup\@eqnsel\hskip\@centering
 $\displaystyle\tabskip\z@{##}$&\global\@eqcnt\@ne
 \hfil$\displaystyle{{}##{}}$\hfil
 &\global\@eqcnt\tw@$\displaystyle\tabskip\z@{##}$\hfil
 \tabskip\@centering&\llap{##}\tabskip\z@\cr}
\begin{document}

\title[]{$\widehat{\mathfrak{sl}}(n)_N$ WZW conformal blocks from 
\\
$SU(N)$ instanton partition functions on ${\mathbb {C}}^2/{\mathbb {Z}}_n$
}

\author[]{Omar Foda, Nicholas Macleod, Masahide Manabe, 
and Trevor Welsh} 

\address{
School of Mathematics and Statistics, 
University of Melbourne, 
Royal Parade, Parkville, Victoria 3010, Australia
} 

\dedicatory{
To Prof Rodney J Baxter on the occasion of his 80th birthday
}

\email{omar.foda@unimelb.edu.au,
n.macleod@student.unimelb.edu.au, 
masahidemanabe@gmail.com,
twelsh1@unimelb.edu.au} 

\begin{abstract}
Generalizations of the AGT correspondence between 
4D $\mathcal{N}=2$ $SU(2)$ supersymmetric gauge 
theory on ${\mathbb {C}}^2$ with $\Omega$-deformation and 
2D Liouville conformal field theory 
include a correspondence between 
4D $\mathcal{N}=2$ $SU(N)$ supersymmetric gauge theories,
$N = 2, 3, \ldots$, 
on ${\mathbb {C}}^2/{\mathbb {Z}}_n$, 
$n = 2, 3, \ldots$, with $\Omega$-deformation and 
2D conformal field theories with
$\mathcal{W}^{\, para}_{N, n}$
($n$-th parafermion $\mathcal{W}_N$) 
symmetry and 
$\widehat{\mathfrak{sl}}(n)_N$ symmetry. 
In this work, we 
trivialize the factor with $\mathcal{W}^{\, para}_{N, n}$ 
symmetry in the 4D $SU(N)$ instanton partition functions 
on ${\mathbb {C}}^2/{\mathbb {Z}}_n$
(by using specific choices of parameters
and imposing specific conditions on 
the $N$-tuples of Young diagrams that label the states),
and extract the 2D $\widehat{\mathfrak{sl}}(n)_N$ WZW 
conformal blocks, $n = 2, 3, \ldots$, $N = 1, 2, \ldots\, .$
\end{abstract}

\maketitle

\section{Introduction}

\subsection{Algebras on the equivariant cohomology of instanton 
moduli spaces}

In \cite{Alday:2009aq}, Alday, Gaiotto and Tachikawa conjectured 
a profound correspondence between $SU(2)$ instanton 
partition functions in $\mathcal{N}=2$ supersymmetric gauge theories 
on ${\IC}^2$, with $\Omega$-deformation \cite{Nekrasov:2002qd}, and 
Virasoro conformal blocks on the sphere and on the torus 
(see \cite{Alba:2010qc} for a proof
\footnote{\,
In the context of geometric representation theory, the AGT correspondence 
for pure $SU(N)$ supersymmetric gauge theory on ${\IC}^2$ was proved in 
\cite{SchiffmannVasserot, Maulik:2012wi} 
(see \cite{Braverman:2014xca} for a generalization to all simply-laced 
gauge groups).
}). 

Their conjecture was further generalized to correspondences between 
$SU(N)$ instanton partition functions on ${\IC}^2$ and $\mathcal{W}_N$ 
conformal blocks \cite{Wyllard:2009hg, Mironov:2009by},  
$SU(2)$ instanton partition functions 
on ${\IC}^2/{\IZ}_2$ and $\mathcal{N}=1$ super-Virasoro conformal blocks 
\cite{Belavin:2011pp, Bonelli:2011jx, Belavin:2011tb, Bonelli:2011kv, 
Ito:2011mw, Belavin:2012aa},  
$SU(2)$ instanton partition functions on ${\IC}^2/{\IZ}_4$ and conformal 
blocks of $S_3$ parafermion algebra 
\cite{Wyllard:2011mn, Alfimov:2011ju},  \textit{etc.} 

In \cite{Nishioka:2011jk}, by considering $N$ M5-branes compactified on 
${\IC}^2/{\IZ}_n$ with $\Omega$-deformation, Nishioka and Tachikawa, following 
a proposal in \cite{Belavin:2011pp}, suggested that $\mathcal{N}=2$ $SU(N)$ 
supersymmetric gauge theories on ${\IC}^2/{\IZ}_n$ are in correspondence with 
2D CFTs with $n$-th parafermion $\mathcal{W}_N$ symmetry, which we refer to as 
$\mathcal{W}^{\, para}_{N, n}$, and affine $\widehat{\mathfrak{sl}}(n)_N$ symmetry.

In \cite{Belavin:2011sw}, it was proposed that the AGT correspondence 
for $U(N)$ supersymmetric gauge theory on ${\IC}^2/{\IZ}_n$ can 
be understood in terms of a 2D CFT based on the algebra
\begin{align}
\mathcal{A}(N,n;p)=\mathcal{H} \oplus 
\widehat{\mathfrak{sl}}(n)_N \oplus 
\frac{\widehat{\mathfrak{sl}}(N)_n \oplus \widehat{\mathfrak{sl}}(N)_{p-N}}
{\widehat{\mathfrak{sl}}(N)_{n+p-N}},
\label{alg_A_intro}
\end{align}
which acts on the equivariant cohomology of the moduli space of $U(N)$ 
instantons on ${\IC}^2/{\IZ}_n$, $n = 2, 3, \ldots\, .$ 
Here, 
the first factor $\mathcal{H} \cong \mathfrak{u}(1)$ is the affine Heisenberg 
algebra, the second factor is the affine $\mathfrak{sl}(n)$ level-$N$ algebra, 
and the third (coset) factor is the $\mathcal{W}^{\, para}_{N, n}$ algebra, 
whose parameter $p$, which controls the central charge
\footnote{\, 
In general $p \in \IC$.
}, 
is related to the $\Omega$-deformation parameters $\epsilon_1, \epsilon_2$ by 
\begin{align}
\frac{\epsilon_1}{\epsilon_2}=-1-\frac{n}{p}\,.
\label{rat_omega_intro}
\end{align}
The coset factor gives 
a Virasoro algebra when $(N,n)=(2,1)$,  
a $\mathcal{W}_N$ algebra when $(N,n)=(N,1)$,  
an $\mathcal{N}=1$ super-$\mathcal{W}_N$ algebra when $(N,n)=(N,N)$,  
and 
an $S_3$ parafermion algebra when $(N,n)=(2,4)$. 
 
\subsection{Burge conditions}\label{subsec:Burge_int}

Let $p \ge N$ be a positive integer. 
For $n=1$, the $\widehat{\mathfrak{sl}}(n)_N$ factor in the algebra 
$\mathcal{A}(N,n;p)$ is trivialised, while the coset (third) factor 
describes the $\mathcal{W}_N$ $(p, p+1)$-minimal model. 
In \cite{Bershtein:2014qma, Alkalaev:2014sma} for $(N, n)=(2,1)$ and 
further in \cite{Belavin:2015ria} for $(N, 1)$, $N = 3, 4, \ldots$ , 
it was shown that to obtain minimal model conformal blocks from the 
$SU(N)$ instanton partition functions on ${\IC}^2$ with 
$\Omega$-deformation \eqref{rat_omega_intro}, 
we need to remove the \textit{non-physical poles}, 
corresponding to $\mathcal{W}_N$ minimal model null states, from 
the instanton partition functions. 
These non-physical poles emerge when the Coulomb and mass parameters 
of the gauge theory take special values labeled by integers $r_I, s_I$, 
$I=1,\ldots,N-1$ with $N-1 \le \sum_{I=1}^{N-1}r_I \le p-1$, 
$N-1 \le \sum_{I=1}^{N-1}s_I \le p$. 
The conditions that exclude the non-physical poles 
were shown to be \textit{($N$-)Burge conditions} 
(see \cite{Burge, Foda:1997du} for $N=2$ and 
\cite{GesselKrattenthaler, Feigin:2010qea1, Feigin:2010qea2} 
for general $N$)
\begin{align}
Y_{I,i} \ge Y_{I+1, i+r_I-1} - s_I+1 \quad 
\textrm{for}\ i \ge 1,\ 0 \le I < N,
\label{Burge_intro}
\end{align}
for $N$-tuples of Young diagrams $(Y_1,\ldots,Y_N)$ 
which define the instanton partition functions, where 
$Y_{0}=Y_N$, and 
$r_0=p-\sum_{I=1}^{N-1}r_I$, $s_0=p+1-\sum_{I=1}^{N-1}s_I$. 
For  $n \ge 2$, the coset factor in the algebra $\mathcal{A}(N,n;p)$ 
is considered to describe a $\mathcal{W}^{\, para}_{N, n}$ $(p, p+n)$-minimal model. 
In this paper, we show that the same ($N$-)Burge conditions above 
also remove the non-physical poles from the $SU(N)$ 
instanton partition functions on ${\IC}^2/{\IZ}_n$ with $\Omega$-deformation
\eqref{rat_omega_intro}.

\subsection{Trivialization of the coset factor}

For $p=N$, the coset factor in the algebra $\mathcal{A}(N,n;p)$ is 
trivialized (the partition function reduces to $1$), 
\begin{align}
\mathcal{A}(N,n;N)=\mathcal{H} \oplus \widehat{\mathfrak{sl}}(n)_N,
\label{alg_A_triv_intro}
\end{align}
and the $SU(N)$ instanton partition functions on ${\IC}^2/{\IZ}_n$ 
provide the $\widehat{\mathfrak{sl}}(n)_N$ WZW conformal blocks. 
Since all parameters are now integral (or at least non-generic), 
the affine factor will include non-physical poles due to null 
states. To remove these, we need to impose the appropriate Burge 
conditions \eqref{Burge_intro} on the gauge theory side. 
In the present work, we show that the integrable 
$\widehat{\mathfrak{sl}}(n)_N$ WZW conformal blocks can be extracted 
from the instanton partition functions by an appropriate choice of 
the parameters and imposing the appropriate Burge conditions.

\subsection{Plan of the paper}

In Section \ref{sec:instantons} we briefly recall 
the generating functions of coloured Young diagrams and 
the instanton partition functions in $\mathcal{N}=2$ $U(N)$ 
supersymmetric gauge theories on ${\IC}^2/{\IZ}_n$ with 
$\Omega$-deformation. 
The relevant AGT-corresponding 2D CFTs are reviewed in Section 
\ref{sec:agt_rel}. 
In Section \ref{sec:Burge} we derive the Burge conditions 
(Proposition \ref{prop:Burge}) from the requirement that 
the $SU(N)$ instanton partition functions on ${\IC}^2/{\IZ}_n$, 
with $\Omega$-deformation \eqref{rat_omega_intro}, labeled by 
a positive integer $p$, do not have non-physical poles of the 
type described in Section \ref{subsec:Burge_int}. 
In subsequent sections, we only consider the Burge conditions 
that correspond taking $p=N$, which we need to trivialize the 
coset factor. 
In Section \ref{subsec:t_ref_red_ch}, by imposing the Burge 
conditions, we introduce what we refer to as 
\textit{Burge-reduced generating functions} of the coloured Young diagrams.
In Proposition \ref{Prop:Chern=Char} we show,
using the crystal graph theory developed by the Kyoto group
\cite{DJKMO:1989}, that these coincide
with the integrable $\widehat{\mathfrak{sl}}(n)_N$ WZW characters.
In Section \ref{subsec_red_inst_pf}, we introduce what we refer to
as \textit{Burge-reduced instanton partition functions}, by imposing the 
appropriate Burge conditions, and find that specific integrable
$\widehat{\mathfrak{sl}}(n)_N$ WZW conformal blocks are obtained from 
them (Conjectures \ref{conj:ex1}, \ref{conj:ex2} and \ref{conj:ex3}).
Our proposal, for computing WZW conformal blocks from models based on 
the $\mathcal{A} (N, n; N)$ algebra, is tested in Section \ref{sec:examples} 
for 
$(N,n)=(2,2), (2,3)$ and $(3,2)$. Finally, in Section \ref{sec:remarks} 
we make some remarks. 
In Appendix \ref{app:cartan}, we summarize the notation of Lie algebras that 
is used in this paper, 
in Appendix \ref{app:agt_check}, we review some AGT correspondences 
to confirm our conventions, and 
in Appendix \ref{app:wzw_4pt}, we recall a class of integrable 
$\widehat{\mathfrak{sl}}(n)_N$ WZW 4-point conformal blocks computed 
in \cite{Knizhnik:1984nr}, which we compare in Section \ref{sec:examples} 
with our results.

\section{$U(N)$ instanton counting on ${\IC}^2/{\IZ}_n$}\label{sec:instantons}

\textit{\noindent
We review how the moduli space of $U(N)$ instantons on ${\IC}^2/{\IZ}_n$ 
with $\Omega$-deformation is 
characterized by coloured Young diagrams, and define 
the generating functions of the coloured Young diagrams and 
the instanton partition functions.}

\subsection{Characterisation of the instanton moduli space by 
coloured Young diagrams}
\label{sec:characterisation}

Consider the $U(N)$ instantons on ${\IC}^2/{\IZ}_n$, where
${\IZ}_n$ acts on $(z_1,z_2)\in {\IC}^2$ by
\begin{align}
{\IZ}_n:
\quad 
\ll                          z_1,                             z_2 \rr \ \ 
\to \ \ 
\ll \e^{\frac{2 \pi i}{n}\, \sigma}\, z_1,\, \e^{-\frac{2 \pi i}{n} \, \sigma}\, z_2 \rr ,  
\qquad
\sigma=0,1,\ldots,n-1,
\label{orb_c2}
\end{align}
and introduce the $\Omega$-deformation parameters $(\epsilon_1, \epsilon_2)$ 
\cite{Nekrasov:2002qd, Nekrasov:2003rj}, by 
\begin{equation}
U(1)^2: \quad 
\ll                 z_1,                 z_2 \rr \to 
\ll \e^{\epsilon_1} z_1, \e^{\epsilon_2} z_2 \rr \,.
\end{equation}
Using localization, the $U(N)$ instantons on ${\IC}^2/{\IZ}_n$ are described 
by the fixed points, on the instanton moduli space, of the $U(1)^2 \times U(1)^{N}$ 
torus generated by $\e^{\epsilon_{1}}$, $\e^{\epsilon_{2}}$ and $\e^{a_{I}}$, 
where $a_{I}$, $I=1, \ldots, N$, are the Coulomb parameters in the $U(N)$ gauge 
theory. 
The Coulomb parameters have charges $\sigma_{I} \in \{0,1,\ldots,n-1\}$ 
under the ${\IZ}_n$ action
\begin{align}
{\IZ}_n:\quad a_I \ \ \to \ \ 
\e^{\frac{2 \pi i}{n}\, \sigma_{I}}\, a_I\,.
\label{orb_coulomb}
\end{align}
Let $Y^{\sigma}$ be a coloured Young diagram, with a ${\IZ}_n$ charge 
$\sigma\in \{0,1,\ldots,n-1\}$, in other words, $Y^{\sigma}$ is composed 
of boxes such that the box at position $(i,j) \in Y^{\sigma}$ is assigned 
the colour $\sigma-i+j$ (mod $n$). The fixed points of $U(N)$ $k$-instanton 
moduli space on ${\IC}^2/{\IZ}_n$ are labeled by $N$-tuples of coloured Young 
diagrams $\bY^{\bsig}=(Y_{1}^{\sigma_{1}},\ldots,Y_{N}^{\sigma_{N}})$ with 
\begin{align}
k=\left|\bY^{\bsig}\right|:=\sum_{I=1}^N \left|Y_{I}^{\sigma_{I}}\right|
\end{align}
total number of boxes \cite{KronheimerNakajima90, Fucito:2004ry}, 
where $Y_{I}^{\sigma_{I}}$ are charged by \eqref{orb_coulomb}, and 
$\left|Y_{I}^{\sigma_{I}}\right|$ denotes the number of boxes in $Y_{I}^{\sigma_{I}}$. 
Let $N_{\sigma}$ and $k_{\sigma}$ be the number of Young diagrams with charge 
$\sigma$ and the total number of boxes with colour $\sigma$, respectively. 
Then,
\begin{align}
\sum_{\sigma=0}^{n-1}N_{\sigma} = N,\qquad
\sum_{\sigma=0}^{n-1}k_{\sigma} = k\,.
\label{charge_N_k}
\end{align}
In what follows, we use $\bN$ to denote the
sequence $[N_0,\ldots,N_{n-1}]$.%
\footnote{\label{Ftn:Vector2Weight}
It is often convenient to regard $\bN$ 
as a vector on the basis of fundamental weights of $\slchap{n}$.
That each $N_\sigma\ge0$ with $\sum_{\sigma=0}^{n-1}N_{\sigma} = N$
then implies that $\bN\in P^{+}_{n,N}$.
Moreover, if we set $\lambda=\partit(\bN)$ using (\ref{Eq:Wt2Par}),
then $(\sigma_1,\sigma_2,\ldots)=\lambda^T$, the partition
conjugate to $\lambda$, by Lemma \ref{Lem:Par2Wt}.
}
Figure \ref{fig:young_ex} shows an example of a coloured Young diagram.
\begin{figure}[t]
\centering
\includegraphics[width=90mm]{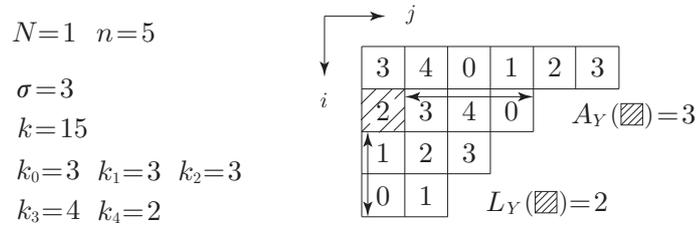}
\caption{An example of a coloured Young diagram $Y=Y^{\sigma}$ with 
charge $\sigma=3$, 
$k=15$, in the case of $(N,n)=(1,5)$. 
For $\wsq=(2,1)$, the arm length and the leg length 
defined in \eqref{arm_leg} are    
$A_{Y}(\wsq)=3$ and $L_{Y}(\wsq)=2$, respectively.}
\label{fig:young_ex}
\end{figure}

\begin{remark}\label{rem:order_charge}
In this paper, without less of generality, we assume
\begin{align}
\sigma_{1}\ge \sigma_{2}\ge \ldots \ge\sigma_{N},
\label{Eq:sigmaOrd}
\end{align}
by arranging the ordering of the Coulomb parameters. 
Then, given non-negative integers $\bN=[N_0,\ldots,N_{n-1}]$,
the charges $\sigma_1,\ldots,\sigma_N$ are uniquely fixed
by the above prescription.
This implies that $\bsig=(\sigma_1,\ldots,\sigma_N,0,0,\ldots)$
is a partition having at most $N$ non-zero parts and $\sigma_1<n$.
\end{remark}

Using the above notation of coloured Young diagrams, 
the $U(N)$ instantons on ${\IC}^2/{\IZ}_n$ are 
further characterized by the first Chern class of the gauge bundle
\begin{align}
c_1 = \sum_{\sigma=0}^{n-1} \mathfrak{c}_{\sigma}\, c_1(\mathcal{T}_{\sigma}),
\label{first_chern}
\end{align}
where
\begin{align}
\label{fc_label}
\mathfrak{c}_{\sigma} 
& = 
N_{\sigma} + \delta k_{\sigma-1} - 2\delta k_{\sigma} + \delta k_{\sigma+1} 
\\
& = 
N_{\sigma} - \sum_{i=0}^{n-1} A_{\sigma i}\, \delta k_i,
\qquad
\sigma=0,1,\ldots,n-1, \quad \delta k_{\sigma}:=k_{\sigma}-k_0,
\nonumber
\end{align}
with $k_n = k_0$ and $k_{-1} = k_{n-1}$. 
Here, $c_1(\mathcal{T}_{\sigma})$ is the first Chern class 
of the vector bundle $\mathcal{T}_{\sigma}$ on the ALE space with holonomy 
$\e^{2\pi i \sigma/n}$, and $A$ is the Cartan matrix of $\slchap{n}$.
Note that $c_1(\mathcal{T}_0)=0$, and the instanton moduli space is labelled 
by the $n-1$ integers $\bfc=(\mathfrak{c}_1,\ldots,\mathfrak{c}_{n-1})$. 
The inverse of the Cartan matrix $\oA$ of the
finite dimensional $\slfin{n}$ (see Appendix \ref{Sec:Notationfin})
allows the relations \eqref{fc_label} to be inverted, giving
\begin{align}
\delta k_{\sigma} = \sum_{i=1}^{n-1} \Bigl(\oA^{-1}\Bigr)_{\sigma\, i} 
\bigl(N_{i} - \mathfrak{c}_{i}\bigr) ,\qquad
\Bigl(\oA^{-1}\Bigr)_{\sigma i}
=\min\{\sigma,i\}-\frac{\sigma i}{n}
\label{delk_inv_cartan}
\end{align}
for $1\le\sigma<n$.

\subsection{Generating functions of coloured Young diagrams}\label{subsec:character}

Let $\mathcal{P}_{\bsig;\bdel \bk}$ be the set of 
$N$-tuples of coloured Young diagrams $\bY^{\bsig}$ 
labelled by the charges $\bsig=(\sigma_1,\ldots,\sigma_N)$ and  
$\bdel \bk=(\delta k_1, \ldots, \delta k_{n-1})$. 
We introduce a generating function of the coloured Young diagrams, 
that counts the number of torus fixed points of the $U(N)$ instanton moduli 
space on ${\IC^2}/{\IZ}_n$, as
\begin{equation}
{X}_{\bsig;\bdel \bk}(\fq)=
\sum_{\bY^{\bsig}\in \mathcal{P}_{\bsig;\bdel \bk}}
\fq^{\, \frac{1}{n}\left|\bY^{\bsig}\right|}\,.
\label{inst_ch_cpt}
\end{equation}

\begin{exam}[$n=1$]
For  $n=1$, $\bdel \bk=\emptyset$, 
the generating function \eqref{inst_ch_cpt} is
\begin{align}
{X}_{\boldsymbol{0};\emptyset}(\fq)&=
\chi_{\mathcal{H}}(\fq)^N:=
\frac{1}
{\left( \fq;\fq \right)_{\infty}^N}\,,
\label{ex_ch_n_1}
\end{align}
where 
\begin{align}
\left(  a;\fq \right)_{\infty}=\prod_{n=0}^{\infty}
\left( 1-a\, \fq^n \right) \,.
\end{align}
\end{exam}

\begin{exam}[$N=1$, see \cite{Fujii:2005dk, Alfimov:2013cqa}]
For  $N=1$, the generating function \eqref{inst_ch_cpt} with a charge 
$\sigma\in \{0,1,\ldots,n-1\}$ and 
$\bdel \bk=(\delta k_1, \ldots, \delta k_{n-1})$ is 
\begin{align}
{X}_{(\sigma);\bdel \bk}(\fq) =
\frac{1}{\left( \fq;\fq \right)_{\infty}^n}\, 
\fq^{\sum_{i=1}^{n-1}\ll \delta k_{i}^2 + \frac{\delta k_{i}}{n} 
-\delta k_{i-1}\, \delta k_{i} - \delta_{\sigma i}\, \delta k_{i}\rr }\,.
\label{ex_ch_N1n_g}
\end{align}
For example, for $(N, n)=(1, 2)$, the generating functions are
\begin{align}
{X}_{(0);(\ell)}(\fq) =
\frac{\fq^{\frac12 \ell (2\ell+1)}}
{\left( \fq;\fq \right)_{\infty}}\,,
\quad
{X}_{(1);(\ell)}(\fq) =
\frac{\fq^{\frac12 \ell (2\ell-1)}}
{\left( \fq;\fq \right)_{\infty}}\,.
\label{ex_ch_N1n2}
\end{align}
\end{exam}

\begin{exam}[$N=2, n=2$, see \cite{Belavin:2012aa}]\label{ex:N2n2_ch}
For $(N, n)=(2, 2)$, the generating functions \eqref{inst_ch_cpt} are 
\begin{align}
\begin{split}
&
{X}_{(0,0);(\ell)}(\fq) + {X}_{(1,1);(1+\ell)}(\fq) = 
\frac{\fq^{\frac12{\ell}(\ell+1)}}
{\left( \fq;\fq \right)_{\infty}}\, 
\chi_{\text{NS}}(\fq)^2,
\\
&
{X}_{(0,1);(\ell)}(\fq) = {X}_{(1,0);(\ell)}(\fq) = 
\frac{\fq^{\frac12{\ell^2}}}
{\left( \fq;\fq \right)_{\infty}}\, 
\chi_{\text{R}}(\fq)^2,
\end{split}
\end{align}
where
\begin{align}
\chi_{\textrm{NS}}(\fq)=
\frac{\left( -\fq^{\frac12};\fq \right)_{\infty}}
{\left( \fq;\fq \right)_{\infty}}\,,
\qquad
\chi_{\textrm{R}}(\fq)=
\frac{\left( -\fq;\fq \right)_{\infty}}
{\left( \fq;\fq \right)_{\infty}}\,,
\label{ns_r_ch}
\end{align}
are, respectively, the NS sector and Ramond sector characters in 
$\mathcal{N}=1$ super-Virasoro algebra.
\end{exam}

\subsection{Instanton partition functions}

To define instanton partition function, we introduce 
a fundamental building block, which is associated with 
$U(N)\times U(N)$ gauge symmetry, 
with coloured Young diagrams $\bY^{\bsig}=(Y_{1}^{\sigma_{1}},\ldots,Y_{N}^{\sigma_{N}})$ 
and $\bW^{\bsig^{\prime}}=(W_{1}^{\sigma_{1}^{\prime}},\ldots,W_{N}^{\sigma_{N}^{\prime}})$ by
\footnote{\, 
By shifting $a_J^{\prime} \to a_J^{\prime} - \mu$, 
it is possible to introduce the mass parameter $\mu$ of bifundamental hypermultiplet.}
\begin{align}
\begin{split}
Z_{\mathrm{bif}}\left( \ba, \bY^{\bsig}; \ba^{\prime}, \bW^{\bsig^{\prime}} \right) =
\prod_{I,J=1}^N 
&\prod_{\wsq \in Y_{I}^{\sigma_{I}}}^{*} 
E\left( -a_{I}+a_{J}^{\prime}, Y_{I}^{\sigma_{I}}(\wsq), W_{J}^{\sigma^{\prime}_{J}}(\wsq)\right) 
\\
&
\times \prod_{\wsq \in W_{J}^{\sigma^{\prime}_{J}}}^{*} 
\ll \epsilon_1+\epsilon_2-E\left( a_{I}-a_{J}^{\prime}, 
W_{J}^{\sigma^{\prime}_{J}}(\wsq), Y_{I}^{\sigma_{I}}(\wsq)\right)  \rr ,
\label{z_bif}
\end{split}
\end{align}
where
\begin{align}
E\left( P, Y(\wsq), W(\wsq)\right) =P - \epsilon_1\, L_{W}(\wsq) + 
\epsilon_2 \ll A_{Y}(\wsq)+1\rr \,.
\end{align}
Here the arm length $A_{Y}(\wsq)$ and the leg length $L_{Y}(\wsq)$ of a Young diagram $Y$ are 
defined by
\begin{align}
A_{Y}(\wsq)=Y_{i}-j,\quad
L_{Y}(\wsq)=Y_{j}^{T}-i, \quad \textrm{for}\ 
\textrm{all}\ \wsq=(i,j)\in {\IN}^2,
\label{arm_leg}
\end{align}
where $Y_{i}$ (resp. $Y_{j}^{T}$) is the length of the $i$-row in $Y$ 
(resp. the $j$-row in the transposed Young diagram $Y^{T}$ of $Y$, 
\textit{i.e.} the $j$-column in $Y$). 
The product $\prod_{\wsq \in Y}^{*}$ in \eqref{z_bif} means to take 
the ${\IZ}_n$ invariant factors in the product, 
modulo $2\pi i$, 
under the shift of parameters following \eqref{orb_c2} and \eqref{orb_coulomb},
\begin{align}
\epsilon_1\ \to \ \epsilon_1 + \frac{2 \pi i}{n},\quad
\epsilon_2\ \to \ \epsilon_2 - \frac{2 \pi i}{n},\quad
a_{I}\ \to \ a_{I} + \sigma_{I}\, \frac{2 \pi i}{n},\quad
a_{J}^{\prime}\ \to \ a_{J}^{\prime} + \sigma_{J}^{\prime}\,
\frac{2 \pi i}{n}\,.
\end{align}
Thus, the factors in the first and second products of \eqref{z_bif} 
are constrained, respectively, by
\begin{align}
\begin{split}
-\sigma_{I}+\sigma^{\prime}_{J} -L_{W_{J}^{\sigma^{\prime}_{J}}}(\wsq)
-A_{Y_{I}^{\sigma_{I}}}(\wsq)-1 \equiv 0 \ \ (\mathrm{mod}\ n),
\\
\sigma_{I}-\sigma^{\prime}_{J} -L_{Y_{I}^{\sigma_{I}}}(\wsq)
-A_{W_{J}^{\sigma^{\prime}_{J}}}(\wsq)-1 \equiv 0 \ \ (\mathrm{mod}\ n)\,.
\label{zn_condition}
\end{split}
\end{align}

\begin{defi}
Using the building block \eqref{z_bif}, the $U(N)$ instanton partition 
function on ${\IC}^2/{\IZ}_n$ with $N$ fundamental and $N$ anti-fundamental 
hypermultiplets, which is defined by an equivariant integration over the moduli 
space of instantons \cite{Nekrasov:2002qd} 
(see also \cite{Nekrasov:2003rj, Nakajima:2003uh, Shadchin:2005mx}), 
is \cite{Alfimov:2011ju} (see also \cite{Fucito:2004ry, Belavin:2012aa}),
\begin{align}
Z_{\bsig;\bdel \bk}^{\bb,\bb^{\prime}}\left(  \ba, \bm, \bm^{\prime}; \fq \right) =
\sum_{\bY^{\bsig}\in \mathcal{P}_{\bsig;\bdel \bk}}
\frac{Z_{\mathrm{bif}}\left( \bm, {\boldsymbol \emptyset}^{\bb}; \ba, \bY^{\bsig} \right) 
Z_{\mathrm{bif}}\left( \ba, \bY^{\bsig}; -\bm^{\prime}, 
{\boldsymbol \emptyset}^{\bb^{\prime}} \right) }
{Z_{\mathrm{vec}}\left( \ba, \bY^{\bsig}\right) }\,
\fq^{\, \frac{1}{n}\left|\bY^{\bsig}\right|},
\label{inst_pf}
\end{align}
where $\bm=(m_{1}, \ldots, m_{N})$ and 
$\bm^{\prime}=(m_{1}^{\prime}, \ldots, m_{N}^{\prime})$ 
are the mass parameters, associated with $U(N)^2$ flavor symmetry, 
of $N$ fundamental and $N$ anti-fundamental hypermultiplets, respectively. 
The denominator, which is the contribution from the $U(N)$ vector multiplet 
with Coulomb parameters $\ba=(a_{1}, \ldots, a_{N})$, is   
\begin{align}
Z_{\mathrm{vec}}\left( \ba, \bY^{\bsig}\right) =
Z_{\mathrm{bif}}\left(  \ba, \bY^{\bsig}; \ba, \bY^{\bsig} \right)\,.
\label{inst_vec_pf}
\end{align}
The instanton partition function \eqref{inst_pf} depends on not only 
the Chern classes $\bfc=(\mathfrak{c}_1,\ldots,\mathfrak{c}_{n-1})$, 
but also the ${\IZ}_n$ boundary charges $\bb=(b_{1}, \ldots, b_{N})$ and 
$\bb^{\prime}=(b_{1}^{\prime}, \ldots, b_{N}^{\prime})$, 
which take values in $\{0,1,\ldots,n-1\}$, assigned to 
the empty Young diagrams. Similar to \eqref{Eq:sigmaOrd}, we assume 
\begin{align}
b_{1} \ge b_{2} \ge \ldots \ge b_{N},\qquad 
b_{1}^{\prime} \ge b_{2}^{\prime} \ge \ldots \ge b_{N}^{\prime},
\end{align}
by arranging the ordering of the mass parameters.
\end{defi}

\section{2D CFT for $U(N)$ instantons on ${\IC}^2/{\IZ}_n$}
\label{sec:agt_rel}

\textit{\noindent
We recall various versions of the AGT correspondence, focusing on 
the algebra acting on the equivariant cohomology of the moduli 
space of $U(N)$ instantons on ${\IC}^2/{\IZ}_n$, and on 
explicit parameter relations.}

\subsection{Algebra on the moduli space of instantons and 2D CFT}

In \cite{Belavin:2011pp, Belavin:2011sw}, it has been proposed 
that the algebra
\begin{align}
\mathcal{A}(N,n;p)=
\mathcal{H} \oplus 
\widehat{\mathfrak{sl}}(n)_N \oplus 
\frac{\widehat{\mathfrak{sl}}(N)_n \oplus \widehat{\mathfrak{sl}}(N)_{p-N}}
{\widehat{\mathfrak{sl}}(N)_{p^{\prime}-N}},
\qquad
p^{\prime}=p+n,
\label{alg_A}
\end{align} 
naturally acts on the equivariant cohomology of 
the moduli space of $U(N)$ instantons on ${\IC}^2/{\IZ}_n$ with $\Omega$-deformation, 
where $\mathcal{H}\cong \mathfrak{u}(1)$ is the Heisenberg algebra.%
\footnote{\, 
The first works on this subject, in the absence of an $\Omega$-deformation, 
are by Nakajima \cite{Nakajima:1994nid, Nakajima:1998}.
}  
The parameter $p$ is identified with 
the $\Omega$-deformation parameters $\epsilon_1$, $\epsilon_2$ by the relation
\begin{align}
\frac{\epsilon_1}{\epsilon_2} = -\frac{p^{\prime}}{p} = -1-\frac{n}{p}.
\label{omega_pp}
\end{align}
This proposal implies that there exists a combined system of 2D CFTs,  
one with $\mathcal{H} \oplus \widehat{\mathfrak{sl}}(n)_N$ symmetry and 
the other with $\mathcal{W}^{\, para}_{N, n}$ symmetry, corresponding to 
4D $\mathcal{N}=2$ $U(N)$ supersymmetric gauge theory on ${\IC}^2/{\IZ}_n$ 
with $\Omega$-deformation \cite{Nishioka:2011jk}.
The central charges of these CFTs are   
\begin{align}
\begin{split}
c\left( \mathcal{H} \oplus \widehat{\mathfrak{sl}}(n)_N\right) &=
1+\frac{N\,(n^2-1)}{n+N},
\\
c\left( \mathcal{W}^{\, para}_{\, N, n} \right) 
&=
\frac{n\,(N^2-1)}{n+N} + 
\frac{N\,(N^2-1)}{n}\,
\frac{(\epsilon_1+\epsilon_2)^2}{\epsilon_1\, \epsilon_2}
\\
&\hspace{-0.25em}\mathop{=}\limits^{(\ref{omega_pp})}
\frac{n\,(N^2-1)}{n+N} -
\frac{n\,N\,(N^2-1)}{p\,(p+n)}.
\label{central_charge}
\end{split}
\end{align}
In \eqref{alg_A}, the first and second factors are realized by 
the $\widehat{\mathfrak{sl}}(n)_N$ WZW model with 
an additional $\mathfrak{u}(1)$ symmetry, 
and the third (coset) factor is realized by 
a $\mathcal{W}^{\, para}_{N, n}$ $(p, p+n)$-minimal model,%
\footnote{\,
While the $\mathcal{W}^{\, para}_{N, n}$ $(p, p+n)$-minimal models 
are in general not well-understood except in special cases, see 
\cite{Bouwknegt:1992wg} and references therein, in this work, 
we only need to assume that they exist.} 
where $p$ is taken to be a positive integer with $p \ge N$.

\subsection{Instanton partition functions as 4-point conformal blocks}
\label{subsec:4pt_inst}


We now provide the relations between the parameters of the instanton partition 
function \eqref{inst_pf} for $N\ge 2$ and those of the conformal blocks of 
the 4-point function on ${\IP}^1$ of primary fields 
$\psi_{\bmu_r}$ with momenta $\bmu_{r}$, $r=1,2,3,4$ 
(see Remark \ref{rem:cft_convention}), 
\begin{align}
\left<
\psi_{\bmu_1}( \infty)\, \psi_{\bmu_2}(1)\, \psi_{\bmu_3}(\fq)\, \psi_{\bmu_4}(0) 
\right>_{{\IP}^1}^{\mathcal{W}^{\, para}_{N, n}}
\label{para_WN_block}
\end{align}
in the $\mathcal{W}^{\, para}_{N, n}$ CFT described by the coset factor in 
\eqref{alg_A}. Using the notation of 
the finite dimensional Lie algebra $\mathfrak{sl}(N)$ in 
Appendix \ref{Sec:Notationfin} for $M=N$, 
we propose that the mass parameters $\bm$ and $\bm^{\prime}$ 
in \eqref{inst_pf} are related to the external momenta $\bmu_{r}$ of the four 
primary fields by
\begin{align}
\begin{split}
&
2\, \bmu_1 = \ll \epsilon_1+\epsilon_2\rr  \overline{\rho} + 
\sum_{I=1}^{N-1} \ll m_{I}-m_{I+1}\rr  \overline{\Lambda}_I,\quad 
2\, \bmu_2 =
\ll \sum_{I=1}^{N} m_{I} \rr  \overline{\Lambda}_{N-1},
\\
&
2\, \bmu_4 = \ll \epsilon_1+\epsilon_2\rr  \overline{\rho} - 
\sum_{I=1}^{N-1} \ll m_{I}^{\prime}-m_{I+1}^{\prime}\rr  \overline{\Lambda}_I,\quad
2\, \bmu_3 =
\ll \sum_{I=1}^{N} m_{I}^{\prime} \rr  \overline{\Lambda}_{1}.
\label{mass_rel}
\end{split}
\end{align}
We consider this as a generalisation of the $n=1$ case 
in \cite{Alday:2009aq, Wyllard:2009hg, Mironov:2009by, Kanno:2009ga, Itoyama:2009sc} 
to positive integer $n$.
By writing $\bmu_{2}=\mu_{2}\, \overline{\Lambda}_{N-1}$, $\bmu_{3}=\mu_{3}\, \overline{\Lambda}_{1}$, 
and $\bmu_{r}=\sum_{I=1}^{N-1}\mu_{r,I}\, \overline{\Lambda}_I$ for $r=1,4$, 
the relations \eqref{mass_rel} are equivalent to
\begin{align}
\begin{split}
&
2\, \mu_{1,I} = \ll \epsilon_1+\epsilon_2\rr  + \ll m_{I}-m_{I+1}\rr ,
\quad 
2\, \mu_2 = \sum_{I=1}^{N} m_{I},
\\
&
2\, \mu_{4,I} = \ll \epsilon_1+\epsilon_2\rr  - 
\ll m_{I}^{\prime} - m_{I+1}^{\prime}\rr ,
\quad
2\, \mu_3 = \sum_{I=1}^{N} m_{I}^{\prime},
\\
\Longleftrightarrow\quad
&m_I=\ll I-\frac{N+1}{2}\rr \ll \epsilon_1+\epsilon_2\rr +
\frac{2}{N}\ll -\sum_{J=1}^{I-1}J\, \mu_{1,J}+
\sum_{J=I}^{N-1}\ll N-J\rr \mu_{1,J} + \mu_2\rr ,
\\
&m_I^{\prime}=-\ll I-\frac{N+1}{2}\rr \ll \epsilon_1+\epsilon_2\rr +
\frac{2}{N}\ll \sum_{J=1}^{I-1}J\, \mu_{4,J}-
\sum_{J=I}^{N-1}\ll N-J\rr \mu_{4,J} + \mu_3\rr .
\label{mass_rel_base}
\end{split}
\end{align}
Note that the momenta $\bmu_2$ and $\bmu_3$ of two of the primary fields 
are taken to be  
proportional to  $\overline{\Lambda}_1$ or $\overline{\Lambda}_{N-1}$, \textit{i.e.} $\mathcal{W}$-null, 
which ensures the matching of the number of free parameters 
$\{m_I, m_I^{\prime}\}_{I=1,\ldots,N}$ and $\{\mu_{1,I}, \mu_{2}, \mu_{3}, \mu_{4,I}\}_{I=1,\ldots,N-1}$ 
\cite{Wyllard:2009hg, Mironov:2009by, Kanno:2009ga}. 
The Coulomb parameters $\ba$ in \eqref{inst_pf} are related to  
the internal momenta $\bmu^v=\sum_{I=1}^{N-1} \mu^v_{I}\, \overline{\Lambda}_I$ by
\begin{align}
\begin{split}
&
2\, \bmu^v = \ll \epsilon_1 + \epsilon_2\rr \overline{\rho} + 
\sum_{I=1}^N a_I\, \boldsymbol{\varepsilon}_I,\qquad
\boldsymbol{\varepsilon}_I:=\be_I- \be_0,
\\
\Longleftrightarrow\quad
&
a_I - \frac{1}{N}\sum_{I=1}^N a_I 
= \inner{2\, \bmu^v - \ll \epsilon_1 + \epsilon_2\rr \overline{\rho}}{\be_I}.
\label{coulomb_rel}
\end{split}
\end{align}

\begin{remark}[$U(1)$ factor]
The $U(N)$ instanton partition function \eqref{inst_pf} contains a $U(1)$ factor 
coming from the Heisenberg algebra $\mathcal{H}$ in the algebra $\mathcal{A}(N,n;p)$. 
To obtain it, we need to impose the traceless condition 
\begin{align}
\sum_{I=1}^N a_I=0.
\end{align}
Then, following \cite{Alday:2009aq, Wyllard:2009hg, Mironov:2009by, Belavin:2011tb, Alfimov:2011ju}, 
we find an overall $U(1)$ factor for general $N$ and $n$ in 
the instanton partition function \eqref{inst_pf},
\begin{align}
Z_{\mathcal{H}}\left( \bm, \bm^{\prime}; \fq \right)  :=
\left( 1-\fq \right) ^{\frac{\ll \sum_{I=1}^N m_I\rr 
\ll \epsilon_1+\epsilon_2 - \frac{1}{N}\sum_{I=1}^N m_I^{\prime}\rr }{n\, \epsilon_1\, \epsilon_2}}.
\label{u1_factor}
\end{align}
\end{remark}

In Appendix \ref{app:agt_check}, we confirm the above parameter relations 
and the $U(1)$ factor by checking some AGT correspondences.

By analogy with known minimal model CFTs, we propose that, 
in the $\mathcal{W}^{\, para}_{N, n}$ 
$(p, p+n)$-minimal models, the momenta should take the degenerate values
\begin{align}
2\, \bmu^{\br,\bs} = - \sum_{I=1}^{N-1} 
\ll \ll r_I-1 \rr \epsilon_1 + \ll s_I-1 \rr \epsilon_2\rr 
\overline{\Lambda}_I,
\label{deg_momenta}
\end{align}
where $\br=[r_0,r_1,\ldots,r_{N-1}]$ and $\bs=[s_0,s_1,\ldots,s_{N-1}]$
are sequences of positive integers for which
\begin{align}
\sum_{I=0}^{N-1}r_I = p\,,\qquad 
\sum_{I=0}^{N-1}s_I = p^{\prime}=p+n.
\label{condition_rs}
\end{align}
It will be useful to note that if $\br$ and $\bs$ are regarded as
vectors on the basis of fundamental weights of $\slchap{N}$
then $\br\in P^{++}_{N,p}$ and $\bs\in P^{++}_{N,p'}$.

\begin{remark}[Free field realization]\label{rem:cft_convention}
We check our normalization conventions 
by focusing on the well-understood $n=1$ CFT with 
$\mathcal{W}_N$ symmetry. 
In this case, one can introduce the energy-momentum tensor by
\begin{align}
T(z)=\frac{1}{2}\, \sum_{I=1}^N : \partial \phi_I(z)^2 : + \,
Q \inner{\overline{\rho}}{\partial^2 \boldsymbol{\phi}(z)},\quad
Q=\frac{\epsilon_1+\epsilon_2}{g_s},\quad g_s^2=-\epsilon_1\,\epsilon_2,
\label{em_tensor}
\end{align}
where $:\cdots:$ is the normal ordered product, 
\begin{align}
\boldsymbol{\phi}(z)=\sum_{I=1}^N \phi_I(z)\, \boldsymbol{\varepsilon}_I,
\qquad
\sum_{I=1}^N \phi_I(z)=0,\quad
\boldsymbol{\varepsilon}_I=\be_I- \be_0,
\end{align}
are $N$ free chiral bosons with 
\begin{align}
\partial \phi_I (z)\, \phi_J(w) 
= 
\frac{\inner{\boldsymbol{\varepsilon}_I}{\boldsymbol{\varepsilon}_J}}{z-w} + 
: \partial \phi_I (z)\, \phi_J(w) :
\, ,\quad
\inner{\boldsymbol{\varepsilon}_I}{\boldsymbol{\varepsilon}_J}=
\delta_{IJ} - \frac{1}{N},
\end{align}
and $g_s$ is introduced as a mass parameter just for a convention. 
Then, the Virasoro central charge
\begin{align}
c = ( N-1)  - N \left( N^2 - 1\right)  Q^2=
(N-1)  + 
N\,(N^2-1)\, \frac{(\epsilon_1+\epsilon_2)^2}{\epsilon_1\, \epsilon_2},
\end{align}
which is the one in \eqref{central_charge} for $n=1$, is obtained.
One can also introduce the primary field with momenta $\bmu$ by
\begin{align}
\psi_{\bmu}(z) = : \e^{\inner{2\, \frac{\bmu}{g_s}}{\boldsymbol{\phi}(z)}}:
\, , \quad
\bmu=\sum_{I=1}^{N-1} \mu_I\, \overline{\Lambda}_I,
\end{align}
which has the conformal dimension
\begin{align}
\Delta_{\bmu} = 2\inner{\frac{\bmu}{g_s}}{\frac{\bmu}{g_s}-Q\, \overline{\rho}} =
-\frac{2}{\epsilon_1\, \epsilon_2}\inner{\bmu}
{\bmu-\ll \epsilon_1+\epsilon_2 \rr \overline{\rho}},
\label{conf_dim_n1_w}
\end{align}
under the action of the energy-momentum tensor \eqref{em_tensor}. 
For example, when $N=2$ with the $\Omega$-background 
$\frac{\epsilon_1}{\epsilon_2}=-\frac{p^{\prime}}{p}$ 
(Virasoro $(p, p^{\prime})$-minimal model case), 
the conformal dimension of 
the primary field with degenerate momentum 
$2 \mu^{r,s}=-(r-1)\epsilon_1-(s-1)\epsilon_2$ is   
\begin{align}
\Delta_{\mu^{r,s}} = 
\frac{\mu^{r,s} \left(\epsilon_1+\epsilon_2-\mu^{r,s}\right)}
{\epsilon_1\, \epsilon_2} = 
\frac{\left(r\, p^{\prime} - s\, p\right)^2-
\left( p^{\prime} - p \right)^2}
{4\, p\, p^{\prime}}.
\end{align}
Similarly, for general $n$ the central charge and the conformal 
dimension of the primary field $\psi_{\bmu}(z)$ are found to be 
$c(\mathcal{W}^{\, para}_{\, N, n})$ in 
\eqref{central_charge} and $\Delta_{\bmu}/n$ in 
\eqref{conf_dim_n1_w}, 
respectively.
\end{remark}

\section{Burge conditions from $SU(N)$ instanton partition 
functions on ${\IC}^2/{\IZ}_n$}
\label{sec:Burge}

\textit{\noindent
We deduce the Burge conditions in Proposition \ref{prop:Burge} 
by looking at the non-physical poles of the $SU(N)$ instanton partition 
function \eqref{inst_pf}, with $\sum_{I=1}^N a_I=0$, on ${\IC}^2/{\IZ}_n$ 
with the rational $\Omega$-deformation \eqref{omega_pp}.}


For the rational $\Omega$-background \eqref{omega_pp}, \textit{i.e.} 
$p\,\epsilon_1+p^{\prime}\,\epsilon_2=0$, $p \ge N$, 
we see that the instanton partition function \eqref{inst_pf} 
with $\sum_{I=1}^N a_I=0$ has poles at
the values 
\begin{align}
a_I= a_I^{\br,\bs} := -\sum_{J=1}^{N-1} 
\inner{\overline{\Lambda}_J}{\be_I}  
\ll r_J\, \epsilon_1 + s_J\, \epsilon_2 \rr
=
-\sum_{J=I}^{N-1} \ll r_J\, \epsilon_1 + s_J\, \epsilon_2 \rr 
+\frac{1}{N}\sum_{J=1}^{N-1} J \ll r_J\, \epsilon_1 + s_J\, \epsilon_2\rr 
\label{sp_coulomb}
\end{align}
of the Coulomb parameters \eqref{coulomb_rel} 
corresponding to the degenerate momenta \eqref{deg_momenta}.
These poles correspond to the propagation of null-states and need to be removed.
Taking a shift of the central $U(1)$ factor in the $U(N)$ gauge symmetry, from 
\eqref{orb_c2} and \eqref{orb_coulomb}, into account, the ${\IZ}_n$ charges $\sigma_I$ 
assigned to the $a_I$ are related to $\br$ and $\bs$ by 
\begin{align}
\sigma_I - \sigma_{I+1} \equiv -r_I + s_I \quad (\mathrm{mod}\ n),
\quad
I=1,\ldots,N-1.
\label{charge_burge}
\end{align}
We refer to \eqref{charge_burge} as the ${\IZ}_n$ \textit{charge conditions}.

\begin{defi}[Burge Conditions]
For sequences $\br=[r_0,r_1,\ldots,r_{\levelaff-1}]$ and
$\bs=[s_0,s_1,\ldots,s_{\levelaff-1}]$ of positive integers,
the $N$-tuple $\bY=(Y_1,\ldots,Y_\levelaff)$ of Young diagrams
is said to satisfy the Burge conditions if
\begin{equation}\label{Eq:Burge}
Y_{I,i}\ge Y_{I+1,i+r_I-1}-s_I+1
\quad
\text{for}\ i \ge 1, \ 0 \le I < \levelaff,
\end{equation}
where we set $Y_0=Y_{\levelaff}$.
Then define $\burgeset{\br,\bs}$ to be the set of $\levelaff$-tuples
$\bY=(Y_1,\ldots,Y_\levelaff)$ of Young diagrams that satisfy
\eqref{Eq:Burge}.%
\footnote{\,
The Burge conditions specify sets of tuples of partitions
that are equivalent to certain cylindric (plane) partitions
defined in \cite{GesselKrattenthaler}
(what we define as $\burgeset{\br,\bs}$ here is equivalent to
the set of cylindric partitions denoted
${\mathcal C}^{\levelaff}_{\bs-\rho,\br-\rho}$
in \cite[Section 3.4]{Foda:2015bsa}).} 
\end{defi}

The following result is easily obtained by exchanging the roles
of the rows and columns in \eqref{Eq:Burge}:

\begin{lemm}[\cite{Bershtein:2014qma}]\label{Lem:BurgeDual}
Let $\bY=(Y_1,\ldots,Y_N)$ and, by conjugating the Young diagrams therein,
define $\bY^{T}=(Y^{T}_1,\ldots,Y^{T}_{N})$. Then
\begin{equation}\label{Eq:BurgeDual}
\bY\in\burgeset{\br,\bs} 
\quad
\Longleftrightarrow
\quad
\bY^{T}\in\burgeset{\bs,\br}.
\end{equation}
\end{lemm}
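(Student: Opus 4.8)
The plan is to verify directly that conjugating all the Young diagrams in an $N$-tuple $\bY$ turns the Burge inequalities \eqref{Eq:Burge} for the pair $(\br,\bs)$ into those for the swapped pair $(\bs,\br)$. First I would recall the elementary fact that for any two partitions $\lambda,\mu$ and any integers $a,b\ge 1$, the inequality $\lambda_a\ge \mu_{a+b-1}$ among row lengths is equivalent, upon passing to conjugates, to $\mu^{T}_{b}\ge \lambda^{T}_{a+b-1}$ among the row lengths of the conjugates; this is just the standard ``staircase'' description of partition containment read off the boundary lattice path, or equivalently the statement that $\lambda_a\ge c$ iff $\lambda^T_c\ge a$ applied twice. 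I would state this as a one-line sublemma (or simply invoke it as well known), since it is the only non-bookkeeping ingredient.

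Next I would apply this sublemma index by index. Fix $I$ with $0\le I<N$ and $i\ge 1$, and write $a=i$, $b=r_I$ (so $a+b-1=i+r_I-1$). The condition $Y_{I,i}\ge Y_{I+1,i+r_I-1}-s_I+1$ says $Y_{I,a}\ge Y_{I+1,a+b-1}-s_I+1$. After conjugation the sublemma converts this into a statement relating $Y^T_{I+1}$ and $Y^T_I$ in which the roles of the shift parameters $r_I$ and $s_I$ are interchanged: the ``column shift'' $r_I$ becomes the new ``row shift'' $s_I$ and vice versa, yielding exactly $Y^T_{I+1,j}\ge Y^T_{I,j+s_I-1}-r_I+1$ for all $j\ge 1$. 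Reindexing $I\mapsto I$ (noting that $Y_0=Y_N$ is preserved, since conjugation commutes with the cyclic identification $Y^T_0=Y^T_N$), this is precisely the Burge condition \eqref{Eq:Burge} for the tuple $\bY^T$ with the pair $(\bs,\br)$. Since the sublemma is an equivalence, each implication is reversible, so $\bY\in\burgeset{\br,\bs}\Leftrightarrow \bY^T\in\burgeset{\bs,\br}$.

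The only thing that requires any care — and the place I expect a referee to want full detail — is getting the index arithmetic in the sublemma exactly right: tracking how the single offset $i+r_I-1$ splits into a ``base'' index plus a shift when one passes between the row picture and the column picture, and confirming that the additive constants $-s_I+1$ and $-r_I+1$ end up attached to the correct diagram. Once that bookkeeping is pinned down the result is immediate, and indeed the statement attributes it to \cite{Bershtein:2014qma} precisely because it is a routine duality. I would therefore keep the proof to a few lines, perhaps displaying the single chain of equivalences
\begin{equation*}
Y_{I,i}\ge Y_{I+1,i+r_I-1}-s_I+1
\iff
Y^{T}_{I+1,j}\ge Y^{T}_{I,j+s_I-1}-r_I+1,
\end{equation*}
with a sentence explaining that it holds for all $i\ge1$ iff it holds for all $j\ge1$, and that it is symmetric under $(\br,\bs,\bY)\leftrightarrow(\bs,\br,\bY^T)$, which gives both directions of \eqref{Eq:BurgeDual} at once.
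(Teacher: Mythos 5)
Your overall strategy --- reduce the claim to a duality statement for a single pair of partitions and apply it index by index over $I$ --- is the natural one, and is essentially all the paper does (it offers only the remark that the result is "easily obtained by exchanging the roles of rows and columns"). However, the sublemma you invoke and the displayed chain of equivalences are wrong as stated: you have interchanged the roles of the two partitions. The condition $Y_{I,i}\ge Y_{I+1,i+r_I-1}-s_I+1$ (for all $i$) is \emph{not} equivalent to any condition of the form $Y^{T}_{I+1,\cdot}\ge Y^{T}_{I,\cdot}$. Concretely, take $r_I=1$, $s_I=2$, $Y_I=(2,2,2)$ and $Y_{I+1}=\emptyset$: the left-hand condition holds trivially for every $i$, while your claimed dual $Y^{T}_{I+1,j}\ge Y^{T}_{I,j+s_I-1}-r_I+1$ fails already at $j=1$, where it asks $0\ge 3$. (Your one-line sublemma fails even without shifts: for $\lambda=(2,2)$, $\mu=(2)$ one has $\lambda_1\ge\mu_1$ but not $\mu^{T}_1\ge\lambda^{T}_1$.) If one pushes your bookkeeping through the cyclic identification $Y_0=Y_N$, the conclusion is that the \emph{reversed} tuple of conjugates satisfies Burge conditions for reversed sequences, which is not the statement \eqref{Eq:BurgeDual}.

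The correct duality keeps the two diagrams in the same order and swaps only $r_I\leftrightarrow s_I$. The clean way to see it: for all $i\ge1$, $Y_{I,i}\ge Y_{I+1,i+r_I-1}-s_I+1$ holds if and only if for all $i,j\ge1$,
\begin{equation*}
(i+r_I-1,\,j+s_I-1)\in Y_{I+1}\ \Longrightarrow\ (i,j)\in Y_{I}\,,
\end{equation*}
and this box-implication form is manifestly symmetric under simultaneously transposing both diagrams and exchanging the roles of the row shift $r_I-1$ and the column shift $s_I-1$. Hence it is equivalent to $Y^{T}_{I,j}\ge Y^{T}_{I+1,j+s_I-1}-r_I+1$ for all $j\ge1$, with $Y^{T}_I$ still on the left. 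Running over all $I$ (the cyclic convention $Y_0=Y_N$ is untouched by conjugation) gives exactly $\bY^{T}\in\burgeset{\bs,\br}$, and both directions follow since the box-implication reformulation is an equivalence. With that single correction your proof goes through.
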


As with the Young diagrams in Section \ref{sec:characterisation},
we colour those in the $N$-tuples in  $\burgeset{\br,\bs}$.
Given a Young diagram $Y$, and $0\le\sigma<N$,
let $Y^\sigma$ denote a Young diagram in which the
box $(i,j)$ of $Y^{\sigma}$ is coloured $(\sigma-i+j)$ modulo $n$.
The value of $\sigma$ is known as the charge of $Y^{\sigma}$.
By Remark \ref{rem:order_charge},
$\bsig=(\sigma_1,\sigma_2,\ldots)$ is a partition
that has at most $N$ non-zero parts with $\sigma_1<n$.
For $\bY=(Y_1,\ldots,Y_N)\in\burgeset{\br,\bs}$,
we define $\bY^{\bsig}=(Y_1^{\sigma_1},\ldots,Y_N^{\sigma_N})$.
Set $\burgecolset{\br,\bs}{\bsig}$ to be the set of all
such $N$-tuples.
We will often drop the superscripts on $\bY^{\bsig}$ or
$Y^{\sigma}$ if these can be determined from the context.

\begin{prop}\label{prop:Burge}
If $\bY^{\bsig}\in \burgecolset{\br,\bs}{\bsig}$
then the instanton partition function \eqref{inst_pf}
at $a_I= a_I^{\br,\bs}$, in the background
$p\,\epsilon_1+p^{\prime}\,\epsilon_2=0$ where $p^{\prime}=p+n$, 
does not have poles.
\end{prop}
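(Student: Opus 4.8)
The plan is to show that, under the Burge conditions together with the $\IZ_n$ charge conditions \eqref{charge_burge}, every would-be pole of the instanton partition function \eqref{inst_pf} at $a_I = a_I^{\br,\bs}$ is in fact cancelled. A pole of the summand in \eqref{inst_pf} at these special Coulomb values can only come from a zero of the denominator $Z_{\mathrm{vec}}(\ba,\bY^{\bsig}) = Z_{\mathrm{bif}}(\ba,\bY^{\bsig};\ba,\bY^{\bsig})$, so the first step is to track precisely which boxes $\wsq$ in which diagram $Y_I^{\sigma_I}$ produce a vanishing factor $E(a_I - a_J, \cdot, \cdot)$ or $\epsilon_1+\epsilon_2 - E(a_J - a_I, \cdot, \cdot)$ when $a_I - a_J = a_I^{\br,\bs} - a_J^{\br,\bs}$. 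Using \eqref{sp_coulomb}, for $I < J$ this difference is $\sum_{K=I}^{J-1}(r_K\epsilon_1 + s_K\epsilon_2)$, and the $\IZ_n$-invariance constraint \eqref{zn_condition} restricts which arm/leg combinations are even allowed; combining these two facts pins down the "dangerous" boxes to those where $L_{Y_I^{\sigma_I}}(\wsq)$ and $A_{Y_J^{\sigma_J}}(\wsq)$ (or the transposed pairing) satisfy a linear relation forced by the $r_K, s_K$ and the rational background $p\epsilon_1 + p'\epsilon_2 = 0$.

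The second and main step is the pole-cancellation mechanism: for each dangerous box in the denominator producing a simple zero, one must exhibit a matching box in the numerator $Z_{\mathrm{bif}}(\bm,\boldsymbol{\emptyset}^{\bb};\ba,\bY^{\bsig})\, Z_{\mathrm{bif}}(\ba,\bY^{\bsig};-\bm',\boldsymbol{\emptyset}^{\bb'})$ — or, more robustly, a compensating zero among the other factors of $Z_{\mathrm{vec}}$ itself — and the cleanest route (following Bershtein–Feigin for $n=1$, and its parafermion extensions) is to set up a sign-reversing involution on the set of boxes contributing poles and zeros, or equivalently to group the product over boxes into "hook" contributions and show that the net order of vanishing is non-negative. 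Here is where the Burge inequality $Y_{I,i}\ge Y_{I+1,i+r_I-1}-s_I+1$ enters decisively: it is exactly the combinatorial condition guaranteeing that whenever box $\wsq\in Y_I^{\sigma_I}$ gives a zero in the denominator via its pairing with $Y_{I+1}^{\sigma_{I+1}}$, the Burge shift forces a corresponding box to lie in (or outside) $Y_{I+1}^{\sigma_{I+1}}$ in just the way needed to produce an offsetting factor, so that no uncancelled pole survives. The charge conditions \eqref{charge_burge} are what make the $\IZ_n$-invariant subset of factors line up with the $r_I - s_I$ appearing in the Burge shift, so Step 1 and Step 2 are genuinely interlocked.

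The final step is bookkeeping over all pairs $(I,J)$, not just consecutive ones: a pole associated with a non-adjacent pair $I<J$ is controlled by telescoping the Burge conditions for $I, I+1, \ldots, J-1$ (this is the standard observation that iterated Burge inequalities compose), and one invokes Lemma \ref{Lem:BurgeDual} to handle the row/column-exchanged factors in \eqref{z_bif} uniformly. One should also check the boundary/cyclic case $I = 0$ with $Y_0 = Y_N$, which uses $r_0 = p - \sum r_I$, $s_0 = p' - \sum s_I$ and the rational relation $p\epsilon_1 + p'\epsilon_2 = 0$ to close the cycle; this is the place where the constraint $\sum_{I=0}^{N-1} r_I = p$, $\sum_{I=0}^{N-1} s_I = p+n$ from \eqref{condition_rs} is actually consumed.

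I expect \textbf{Step 2}, the explicit pole–zero matching via the Burge inequalities, to be the main obstacle: getting the combinatorics of arm/leg lengths under the $\IZ_n$-colouring to interact correctly with the Burge shift is delicate, and one must be careful that the compensating factor is a genuine zero (not merely a non-pole) and that multiplicities match when several dangerous boxes coincide. The $n=1$ case is in the literature, so the real content is checking that colouring and the charge conditions do not spoil the cancellation — and that it is precisely the same Burge conditions, with no $n$-dependence in \eqref{Eq:Burge} itself, that do the job.
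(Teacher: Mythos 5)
Your overall skeleton is right — locate the boxes that make the denominator $Z_{\mathrm{vec}}$ vanish, use the ${\IZ}_n$ charge conditions \eqref{charge_burge} to constrain them, reduce non-adjacent pairs $(I,J)$ to adjacent ones by telescoping, close the cycle at $I=0$ via \eqref{condition_rs}, and invoke Lemma \ref{Lem:BurgeDual} at the end. But your Step 2 misidentifies the mechanism, and this is a genuine gap rather than a stylistic difference. There is no pole--zero cancellation, no matching of dangerous boxes against compensating factors in the numerator or elsewhere in $Z_{\mathrm{vec}}$, and no sign-reversing involution. The numerator factors in \eqref{inst_pf} involve the mass parameters paired against empty diagrams and simply have nothing at the locations $a_I-a_J$ where the denominator could vanish, so the cancellation program you flag as the main obstacle has nothing to latch onto. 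What actually happens is more direct: the Burge conditions make the denominator \emph{nonvanishing} on all of $\burgecolset{\br,\bs}{\bsig}$, so each summand is manifestly finite.

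Concretely, the zero condition \eqref{e_vanish} for a box $\wsq=(i,j)$ forces a \emph{simultaneous} pair of equalities for the leg length $L_{Y_I}(\wsq)=Y^T_{I,j}-i$ and the arm length $A_{Y_{I+\ell}}(\wsq)$, as in \eqref{zero_c_i_2}, with an integer ambiguity $\gamma$ that the ${\IZ}_n$ condition pins to multiples of $d=\gcd(p,p')$ via \eqref{zn_cd} and that positivity of the arm length then bounds below. Feeding these equalities into the trivial inequality $Y^T_{I+\ell,\,j+A_{Y_{I+\ell}}(\wsq)}\ge i$, valid for any box of any Young diagram, yields a necessary condition \eqref{zero_c_i_3} on the column lengths of $Y_I$ and $Y_{I+\ell}$ — and the Burge inequalities (in the transposed, $\br\leftrightarrow\bs$-exchanged form \eqref{non_zero_c_i}, strongest at $\gamma_d=0$, $\mydelta=0$ for $I>J$ and at $\gamma_d=1$, $\mydelta=0$ for $I<J$) directly contradict that necessary condition. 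So no box satisfies \eqref{e_vanish}, full stop. You should replace the involution/cancellation machinery with this two-line logical step: Burge $\Rightarrow$ the arm/leg equations are unsatisfiable $\Rightarrow$ $Z_{\mathrm{vec}}\ne 0$. Your worry about "genuine zeros versus non-poles" and about matching multiplicities then evaporates, since no order-of-vanishing bookkeeping is ever needed.
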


\begin{proof}
We follow the proof of \cite{Belavin:2015ria} 
for $n=1$ (see also \cite{Bershtein:2014qma, Alkalaev:2014sma} for 
$N=2$, $n=1$). 
At $a_I= a_I^{\br,\bs}$ with \eqref{omega_pp}, 
the instanton partition function \eqref{inst_pf} has poles if and 
only if the denominator vanishes, \textit{i.e.} 
there exists $\wsq \in Y_{I}$ such that
\begin{align}
E_{I, J}^{\br,\bs}(\wsq)+\mydelta=0,\quad
\mydelta=0\ \ \textrm{or}\ \ n,
\label{e_vanish}
\end{align}
where $n=p^{\prime}-p$ and
\begin{align}
\begin{split}
E_{I, J}^{\br,\bs}(\wsq)&=
\frac{p}{\epsilon_2}\, E\Big(a_{J}^{\br,\bs}-a_{I}^{\br,\bs}, Y_{I}(\wsq), Y_{J}(\wsq)\Big) 
\\
&=
\sum_{K=1}^{N-1} \inner{\overline{\Lambda}_K}{\be_J-\be_I}  
\ll r_K\, p^{\prime} - s_K\, p \rr 
+ p^{\prime}\, L_{Y_{J}}(\wsq) + 
p \ll A_{Y_{I}}(\wsq)+1\rr \,.
\end{split}
\end{align}
Because $E_{I, I}^{\br,\bs}(\wsq) \ne 0$ for $\wsq \in Y_{I}$, to find $\wsq \in Y_{I}$ 
which satisfies \eqref{e_vanish} we only need to consider the case (i) $I>J$ and case 
(ii) $I<J$.

\subsubsection*{Case (i) $I>J$}
In this case, the zero-condition $\eqref{e_vanish}$ is    
$E_{I+\ell, I}^{\br,\bs}(\wsq)+\mydelta=0$ 
for $\wsq \in Y_{I+\ell}$, where 
$1 \le I \le N-1$ and $1 \le \ell \le N-I$. 
By $\sum_{K=1}^{N-1} 
\inner{\overline{\Lambda}_K}{\be_I-\be_{I+\ell}}
=\sum_{K=1}^{N-1}\sum_{J=1}^{\ell}\delta_{K,I+J-1}$, 
this zero-condition is written as
\begin{align}
\sum_{J=1}^{\ell} \ll r_{I+J-1}\, p^{\prime} - s_{I+J-1}\, p \rr 
+ p^{\prime}\, L_{Y_{I}}(\wsq) + 
p \ll A_{Y_{I+\ell}}(\wsq)+1\rr +\mydelta=0,\quad 
\wsq \in Y_{I+\ell}\,.
\label{zero_c_i}
\end{align}
Let $d=\gcd(p, p^{\prime})$, $p=d\, p_d$ and $p^{\prime}=d\, p^{\prime}_d$, then 
the zero-condition \eqref{zero_c_i} is equivalent to
\begin{align}
\begin{split}
&
L_{Y_{I}}(\wsq)=
-\sum_{J=1}^{\ell} r_{I+J-1} -\gamma\, p_d - \delta_{\mydelta n},
\\
&
A_{Y_{I+\ell}}(\wsq)=
\sum_{J=1}^{\ell} s_{I+J-1} + \gamma\, p^{\prime}_d  -1 + \delta_{\mydelta n},
\quad \wsq \in Y_{I+\ell},
\label{zero_c_i_2}
\end{split}
\end{align}
where $\gamma$ is an indeterminate integer. 
For $\wsq=(i,j)\in Y_{I+\ell}$, using 
$L_{Y_I}(\wsq)=Y_{I,j}^{T}-i$, 
the zero-conditions \eqref{zero_c_i_2} imply that an obvious condition 
for any Young diagrams,
\begin{align}
Y_{I+\ell,j+A_{Y_{I+\ell}}(\wsq)}^{T} \ge i,
\end{align}
yields
\begin{align}
Y_{I+\ell,j+\sum_{J=1}^{\ell} s_{I+J-1} + \gamma\, p^{\prime}_d  -1 + \delta_{\mydelta n}}^{T} 
\ge 
Y_{I,j}^{T}
+ \sum_{J=1}^{\ell} r_{I+J-1} +\gamma\, p_d + \delta_{\mydelta n}\,.
\label{zero_c_i_3}
\end{align}

For the above zero-conditions, 
$\wsq \in Y_{I+\ell}$ needs to be restricted 
by the ${\IZ}_n$ condition like \eqref{zn_condition}   
\begin{align}
\sigma_I-\sigma_{I+\ell} - L_{Y_I}(\wsq) - A_{Y_{I+\ell}}(\wsq)-1 \equiv 0
\quad (\mathrm{mod}\ n).
\label{zn_c_i}
\end{align}
By the ${\IZ}_n$ charge conditions \eqref{charge_burge} and 
the zero-conditions \eqref{zero_c_i_2}, the ${\IZ}_n$ condition 
\eqref{zn_c_i} yields
\begin{align}
0 \equiv \gamma\, (p_d - p^{\prime}_d) \equiv -\frac{n}{d}\, \gamma \ \ 
(\mathrm{mod}\ n) 
\quad \Longleftrightarrow \quad
\gamma = d\, \gamma_d,
\label{zn_cd}
\end{align}
where the indeterminate integer $\gamma_d$ should be $\gamma_d \ge 0$ by 
$A_{Y_{I+\ell}}(\wsq)\ge 0$ and \eqref{condition_rs}.
As a result, the zero-condition \eqref{zero_c_i_3} yields 
\begin{align}
Y_{I+\ell,j+\sum_{J=1}^{\ell} s_{I+J-1} + \gamma_d\, p^{\prime}  -1 + \delta_{\mydelta n}}^{T} 
\ge 
Y_{I,j}^{T}
+ \sum_{J=1}^{\ell} r_{I+J-1} + \gamma_d\, p + \delta_{\mydelta n}.
\label{zero_condition_i}
\end{align}

Therefore, if conditions
\begin{align}
Y_{I,j}^{T} \ge 
Y_{I+\ell,j+\sum_{J=1}^{\ell} s_{I+J-1} + \gamma_d\, p^{\prime}  -1 + \delta_{\mydelta n}}^{T} 
-\sum_{J=1}^{\ell} r_{I+J-1} - \gamma_d\, p + 1 - \delta_{\mydelta n},\quad \gamma_d \ge 0
\label{non_zero_c_i}
\end{align}
are satisfied, there does not exist 
$\wsq \in Y_{I+\ell}$ such that $E_{I+\ell, I}^{\br,\bs}(\wsq)+\mydelta=0$.
These non-zero conditions follow from the ones for $\gamma_d=0$ and $\mydelta=0$:
\begin{align}
Y_{I,j}^{T} \ge 
Y_{I+\ell,j+\sum_{J=1}^{\ell} s_{I+J-1} -1}^{T} 
-\sum_{J=1}^{\ell} r_{I+J-1} + 1.
\label{non_zero_c_i_0}
\end{align}
All these non-zero conditions \eqref{non_zero_c_i_0} for 
$1 \le \ell \le N-I$ are obtained from the ones for $\ell=1$, \textit{i.e.} 
we arrive at the strongest non-zero conditions among them as
\begin{align}
Y_{I,j}^{T} \ge 
Y_{I+1,j + s_{I} -1}^{T} - r_{I} + 1,\quad I=1,\ldots,N-1,
\label{non_zero_condition_i}
\end{align}
which are the $I=1,\ldots,N-1$ cases of \eqref{Eq:Burge}
with $\br$ and $\bs$ interchanged.

\subsubsection*{Case (ii) $I<J$}
In this case, 
the zero-condition $\eqref{e_vanish}$ is    
$E_{I, I+\ell}^{\br,\bs}(\wsq)+\mydelta=0$ 
for $\wsq \in Y_{I}$, where 
$1 \le I \le N-1$ and $1 \le \ell \le N-I$. 
We repeat the proof of case (i). 
As \eqref{zero_c_i} and \eqref{zero_c_i_2}, the zero-condition 
$E_{I, I+\ell}^{\br,\bs}(\wsq)+\mydelta=0$ is
\begin{align}
-\sum_{J=1}^{\ell} \ll r_{I+J-1}\, p^{\prime} - s_{I+J-1}\, p \rr 
+ p^{\prime}\, L_{Y_{I+\ell}}(\wsq) + 
p \ll A_{Y_{I}}(\wsq)+1\rr +\mydelta=0,\quad 
\wsq \in Y_{I},
\label{zero_c_ii}
\end{align}
which is equivalent to
\begin{align}
\begin{split}
&
L_{Y_{I+\ell}}(\wsq)=
\sum_{J=1}^{\ell} r_{I+J-1} - \gamma_d\, p - \delta_{\mydelta n},
\\
&
A_{Y_{I}}(\wsq)=
-\sum_{J=1}^{\ell} s_{I+J-1} + \gamma_d\, p^{\prime}  -1 + \delta_{\mydelta n},
\quad \wsq \in Y_{I},
\label{zero_c_ii_2}
\end{split}
\end{align}
where we have used \eqref{zn_cd} obtained from the ${\IZ}_n$ condition.
From $A_{Y_{I}}(\wsq)\ge 0$ and \eqref{condition_rs}, 
the indeterminate integer $\gamma_d$ should be $\gamma_d \ge 1$.  
As in \eqref{zero_c_i_3}, these conditions yield a zero-condition
\begin{align}
Y_{I,j-\sum_{J=1}^{\ell} s_{I+J-1} + \gamma_d\, p^{\prime}  -1 + \delta_{\mydelta n}}^{T} 
\ge 
Y_{I+\ell,j}^{T}
- \sum_{J=1}^{\ell} r_{I+J-1} + \gamma_d\, p + \delta_{\mydelta n}.
\label{zero_condition_ii}
\end{align}

Therefore, if conditions
\begin{align}
Y_{I+\ell,j}^{T} \ge 
Y_{I,j-\sum_{J=1}^{\ell} s_{I+J-1} + \gamma_d\, p^{\prime}  -1 + \delta_{\mydelta n}}^{T} 
+\sum_{J=1}^{\ell} r_{I+J-1} - \gamma_d\, p + 1 - \delta_{\mydelta n},\quad \gamma_d \ge 1
\label{non_zero_c_ii}
\end{align}
are satisfied, there does not exist 
$\wsq \in Y_{I}$ such that $E_{I, I+\ell}^{\br,\bs}(\wsq)+\mydelta=0$. 
Among the non-zero conditions \eqref{non_zero_c_ii}, the strongest ones are 
   $\gamma_d=1$ and $\mydelta=0$:
\begin{align}
Y_{I+\ell,j}^{T} \ge 
Y_{I,j-\sum_{J=1}^{\ell} s_{I+J-1} + p^{\prime}  -1}^{T} 
+\sum_{J=1}^{\ell} r_{I+J-1} - p + 1.
\label{non_zero_c_ii_0}
\end{align}
In particular, for $\ell=N-I$, one obtains
\begin{align}
Y_{N,j}^{T} \ge 
Y_{1,j + s_{0} -1}^{T} - r_{0} + 1,
\label{non_zero_condition_ii}
\end{align}
which is the $I=0$ case of \eqref{Eq:Burge} with $r_0$ and $s_0$
exchanged. Together with \eqref{non_zero_condition_i} from case (i)
we thus obtain all cases of \eqref{Eq:Burge} with $\br$ and $\bs$
exchanged.
It is straightforward to see that together the conditions
\eqref{non_zero_condition_i} and \eqref{non_zero_condition_ii}
are stronger than the non-zero conditions \eqref{non_zero_c_ii_0}. 
Use of Lemma \ref{Lem:BurgeDual} then completes the proof.
\end{proof}

\section{Burge-reduced generating functions 
of coloured Young diagrams and $\slchap{n}_N$ WZW characters}
\label{subsec:t_ref_red_ch}

\noindent
\emph{%
In this and in subsequent sections, we concentrate on the case of $p=N$ 
in the algebra $\mathcal{A}(N,n;p)$.
This choice of parameters trivializes the coset factor,%
\footnote{\, 
When $p=N$, the central charge 
$c( \mathcal{W}^{\, para}_{\, N, n}) =0$ in (\ref{central_charge}).}
and we obtain  
$\mathcal{A}(N,n;N) = \mathcal{H} \oplus \widehat{\mathfrak{sl}}(n)_N$.
In this case, on imposing the Burge conditions \eqref{Eq:Burge}
on the generating functions of coloured Young diagrams
\footnote{\,
Without the Burge conditions, the generating functions
correspond to the partition functions of $\mathcal{N}=4$
topologically twisted $U(N)$ supersymmetric gauge theories 
\cite{Nakajima:1994nid, Vafa:1994tf}.
A description of the generating functions/WZW characters 
in terms of \textit{\lq orbifold partitions\rq}, and a realization in terms 
of intersecting D4 and D6-branes can be found in 
\cite{Dijkgraaf:2007sw, Dijkgraaf:2007fe}.}
and instanton partition functions, 
the $\widehat{\mathfrak{sl}}(n)_N$ WZW characters and conformal blocks emerge.
In this section we discuss these generating functions,
while the instanton partition functions are discussed
in Section \ref{subsec_red_inst_pf}.
We make use of notation and results pertaining to the
representation theory of $\slchap{n}$ that are described in
Appendix \ref{app:cartan}: for the current purposes the symbols
$M$ and $m$ in the appendix are replaced by $n$ and $N$, respectively.
}


\renewcommand{\dimaff}{{n}}    
\renewcommand{\levelaff}{{N}}    

In the case of the algebra
$\mathcal{A}(1,n;p)=\mathcal{H} \oplus \widehat{\mathfrak{sl}}(n)_1$ 
for $U(1)$ ($N=1$) instantons on ${\IC}^2/{\IZ}_n$,
the highest-weight representations have no null-states, 
and thus there is no restriction on the tuples of partitions $\bY$.
However, for $\levelaff>1$, eliminating the null states requires
the Burge conditions \eqref{Eq:Burge} to be imposed.

Throughout this section, $\bsig=(\sigma_1,\sigma_2,\ldots)$ is a
partition for which $\sigma_1<\dimaff$ and $\sigma_{\levelaff+1}=0$.
Thus, in particular, $\bsig$ has at most $\levelaff$ non-zero parts.
In addition, because we are restricting to the $p=\levelaff$ case,
the condition \eqref{condition_rs}
dictates that $r_0=r_1=\ldots=r_{\levelaff-1}=1$.
Thus, we define $\bfone=[1,1,\ldots,1]$ ($\levelaff$ components),
and use $\br=\bfone$ throughout this section.
The generating functions that we define below then depend
on a sequence $\bs=[s_0,s_1,\ldots,s_{\levelaff-1}]$ of positive integers
which satisfies \eqref{condition_rs} with $p'=\dimaff+\levelaff$,
and which satisfies the ${\IZ}_n$ charge conditions \eqref{charge_burge}.
With $r_0=r_1=\ldots=r_{\levelaff-1}=1$,
these requirements are readily accomplished by setting
\begin{equation}\label{Eq:sigma2s}
s_I=\sigma_{I}-\sigma_{I+1}+1
\end{equation}
for $1\le I <\levelaff$. 
Note then that \eqref{condition_rs} gives%
\footnote{\,
In fact, this solution is unique except
in the case where
$\sigma_1=\sigma_2=\cdots=\sigma_{\levelaff}$.
In that particular case, any other solution leads to the given solution
by cyclic permutation of the indices $I\in \{0,1,\ldots,\levelaff-1\}$.}
\begin{equation}\label{Eq:sigma2s0}
s_0=\sigma_\levelaff-\sigma_1+\dimaff+1\,.
\end{equation}

Using the above $\br=\bfone$ and $\bs$, we now define sets of
$\levelaff$-tuples of coloured Young diagrams
$\bY^{\bsig}=(Y_1^{\sigma_1},\ldots,Y_N^{\sigma_N})$
that respect the Burge conditions \eqref{Eq:Burge},
which now take the simplified form
\begin{align}
Y_{I,i}^{\sigma_I} \ge Y_{I+1, i}^{\sigma_{I+1}} - s_I+1 \quad
\textrm{for}\ i \ge 1,\ 0 \le I < N,
\label{sp_burge_c}
\end{align}
where we set $Y_0^{\sigma_0}=Y_N^{\sigma_N}$.
So define $\burgecolset{\bs}{\bsig}$ to be the set of all
$\levelaff$-tuples of coloured Young diagrams $\bY$
that respect \eqref{sp_burge_c}.
In order to impose the Chern relations \eqref{fc_label},
for each $\bY \in \burgecolset{\bs}{\bsig}$,
define $k_{\sigma}(\bY)$ to be the number of boxes in $\bY$
that are coloured $\sigma$ for each $\sigma\in\{0,1,\ldots,\dimaff-1\}$,
and then set $\delta k_{\sigma}(\bY)=k_{\sigma}(\bY)-k_0(\bY)$.

Now, to generalise $\rcolfun{\bsig}{\bell}(\qvar)$ in \eqref{inst_ch_cpt} 
and be able to relate the Chern classes \eqref{fc_label} 
to the representation theory 
of $\slchap{\dimaff}$,
we define two generating functions
$\burgecolfun{\bs}{\bsig}(\qvar,\bft)$ and
$\rburgecolfun{\bs}{\bsig}{\bell}(\qvar)$.
For fugacities $\bft=(\ft_1,\ldots,\ft_{\dimaff-1})$,
define the $U(N)$ $\ft$-refined Burge-reduced generating function 
\begin{equation}\label{Eq:BurgeGFdef}
\burgecolfun{\bs}{\bsig}(\qvar,\bft)
=
\sum_{\bY\in\burgecolset{\bs}{\bsig}}
\qvar^{\, \frac{1}{\dimaff}|\bY|}
\prod_{i=1}^{\dimaff-1} \ft_i^{\, \ffc_i(\bY)},
\end{equation}
where the $\ffc_i(\bY)$ are given by the Chern classes
(\textit{cf.}~\eqref{fc_label})
\begin{equation}\label{Eq:delta2chern}
\ffc_i(\bY)=N_i+\delta k_{i-1}(\bY)-2\delta k_i(\bY)+\delta k_{i+1}(\bY)
\end{equation}
for each $i\in\indaffm$,
with $\delta k_0(\bY)=\delta k_\dimaff(\bY)=0$.
In addition, for a vector
$\bell=(\ell_1,\ldots,\ell_{\dimaff-1})
 \in\IZ^{\dimaff-1}$,
define $\rburgecolset{\bs}{\bsig}{\bell}\subset\burgecolset{\bs}{\bsig}$
to be the set of all $\bY\in\burgecolset{\bs}{\bsig}$
for which $\delta k_i(\bY)=\ell_i$ for each $i\in\indaffm$.
Then define the Burge-reduced generating function 
\begin{equation}
\rburgecolfun{\bs}{\bsig}{\bell}(\qvar)
=
\sum_{\bY\in\rburgecolset{\bs}{\bsig}{\bell}}
\qvar^{\, \frac{1}{\dimaff}|\bY|}\,.
\label{red_ch_N}
\end{equation}
Because the values of $\ffc_i(\bY)$ are constant on each set
$\rburgecolset{\bs}{\bsig}{\bell}$, the generating function
\eqref{Eq:BurgeGFdef} can be written as
\begin{equation}\label{Eq:Chern2String}
\burgecolfun{\bs}{\bsig}(\qvar,\bft)
=
\sum_{\bell\in\IZ^{\dimaff-1}}
\rburgecolfun{\bs}{\bsig}{\bell}(\qvar)
\prod_{i=1}^{\dimaff-1} \ft_i^{\, N_i+\ell_{i-1}-2\ell_i+\ell_{i+1}}
\end{equation}
with $\ell_0=\ell_{\dimaff}=0$.
We also introduce the generating function
$\burgecolfun{\bs}{\bsig}(\qvar)$ defined as the specialisation
$\ft_1=\ft_2=\cdots=\ft_{\dimaff-1}=1$ of
$\burgecolfun{\bs}{\bsig}(\qvar,\bft)$:
\begin{align}
\burgecolfun{\bs}{\bsig}(\qvar)=
\sum_{\bY\in\burgecolset{\bs}{\bsig}}
\fq^{\, \frac{1}{\dimaff}|\bY|}=
\sum_{\bell\in\IZ^{\dimaff-1}}
\rburgecolfun{\bs}{\bsig}{\bell}(\qvar)\,.
\label{princ_gen_ch}
\end{align}

We now give expressions for $\burgecolfun{\bs}{\bsig}(\qvar,\bft)$
in terms of (graded) WZW characters
$\qchi^{\slchap{\dimaff}_\levelaff}_\Lambda(\qvar,\hat{\bft})$
that arise from level-$\levelaff$ irreducible highest weight
modules $L(\Lambda)$ of $\slchap{\dimaff}$.
The characters of these $\slchap{\dimaff}$-modules
are described in Appendix \ref{Sec:WZWChars}.
Let $\Lambda\in P^{+}_{\dimaff,\levelaff}$.
As also described in Appendix \ref{Sec:WZWChars},
the Virasoro algebra \emph{Vir} also acts on $L(\Lambda)$.
The central charge $c$ and conformal dimension $h_\Lambda$
of this \emph{Vir}-module are given by
\begin{equation}\label{Eq:CC-CD}
c=\frac{\levelaff(\dimaff^2-1)}{\levelaff+\dimaff},
\qquad
h_\Lambda=\frac{\inner{\Lambda}{\Lambda+2\rho}}{2(\levelaff+\dimaff)},
\end{equation}
respectively.

For indeterminates $\hat{\bft}=(\hat{\ft}_1,\ldots,\hat{\ft}_{\dimaff-1})$,
we define the (graded) character of $L(\Lambda)$ to be:
\begin{equation}\label{Eq:WZWCharDef}
\chi_{\Lambda}^{\slchap{\dimaff}_\levelaff}(\qvar,\hat{\bft})
=
\mathrm{Tr}_{L(\Lambda)}\,
\qvar^{\, L_0}
\prod_{i=1}^{\dimaff-1} \hat{\ft}_{i}^{\, H_i },
\end{equation} 
where $L_0$ is a Virasoro generator and $H_i$ are 
Chevalley elements in the Cartan subalgebra of $\slchap{\dimaff}$.
Making use of the crystal graph description of characters
of $\slchap{\dimaff}$ (see Appendix \ref{Sec:Crystal})
then leads to the following:

\begin{prop}\label{Prop:Chern=Char}
For a partition $\bsig=(\sigma_1,\sigma_2,\ldots)$ for which
$\sigma_1<\dimaff$ and $\sigma_{\levelaff+1}=0$,
define $\bs=[s_0,s_1,\ldots,s_{\levelaff-1}]$ by
\eqref{Eq:sigma2s} and \eqref{Eq:sigma2s0},
and set $\Lambda=\sum_{i=1}^{\levelaff}\fwt_{\sigma_i}$.
Then
\begin{equation}\label{Eq:Chern=Char}
\burgecolfun{\bs}{\bsig}(\qvar,\bft)=
\frac{\qvar^{\, w_{\Lambda}-h_\Lambda}}{\pochinf{\qvar}{\qvar}} \,
\qchi^{\slchap{\dimaff}_{\levelaff}}_\Lambda(\qvar,\hat{\bft}),
\end{equation}
where $\hat{\bft}=(\hat{\ft}_1,\ldots,\hat{\ft}_{\dimaff-1})$ is related to 
$\bft=(\ft_1,\ldots,\ft_{\dimaff-1})$ by
\begin{equation}\label{Eq:t2hat}
\hat{\ft}_i=\qvar^{\, -\frac{1}{2\dimaff}i(\dimaff-i)} \, \ft_i
\end{equation}
for $1\le i<\dimaff$, and
\begin{equation}\label{Eq:wdef}
w_\Lambda=
\frac{1}{2\dimaff}\sum_{i=1}^{\dimaff-1} i(\dimaff-i) N_i,
\end{equation}
when $\Lambda=[N_0,N_1,\ldots,N_{\dimaff-1}]$.
\end{prop}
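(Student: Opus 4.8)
The plan is to establish the identity \eqref{Eq:Chern=Char} by interpreting both sides combinatorially and matching them via the crystal graph of $L(\Lambda)$. First I would recall from Appendix \ref{Sec:Crystal} that the crystal $B(\Lambda)$ of the level-$\levelaff$ module $L(\Lambda)$ of $\slchap{\dimaff}$ can be realized (following \cite{DJKMO:1989}) as a set of paths, or equivalently as certain $\levelaff$-tuples of coloured Young diagrams subject to exactly the Burge-type constraints \eqref{sp_burge_c}; the point is that the set $\burgecolset{\bs}{\bsig}$, with $\bs$ determined from $\bsig$ by \eqref{Eq:sigma2s}--\eqref{Eq:sigma2s0} and $\Lambda=\sum_{i=1}^{\levelaff}\fwt_{\sigma_i}$, is in weight-preserving bijection with $B(\Lambda)$ after accounting for the Fock-space (Heisenberg) degrees of freedom, which contribute the factor $1/\pochinf{\qvar}{\qvar}$.

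Next I would track the two gradings carefully. On the generating-function side the box count $\tfrac1\dimaff|\bY|$ and the Chern exponents $\ffc_i(\bY)=N_i+\delta k_{i-1}(\bY)-2\delta k_i(\bY)+\delta k_{i+1}(\bY)$ from \eqref{Eq:delta2chern} record, respectively, the principal ($L_0$-type) grading and the weight (Cartan, $H_i$-type) grading of the corresponding crystal element. On the WZW side these are exactly the exponents appearing in \eqref{Eq:WZWCharDef}, up to two bookkeeping shifts: the change of variables \eqref{Eq:t2hat}, $\hat{\ft}_i=\qvar^{-\frac{1}{2\dimaff}i(\dimaff-i)}\ft_i$, which converts between the homogeneous grading natural to the coloured-Young-diagram count and the grading by $L_0$ and $H_i$ used for $\chi_\Lambda$; and the overall prefactor $\qvar^{\,w_\Lambda-h_\Lambda}$, where $w_\Lambda$ from \eqref{Eq:wdef} absorbs the constant produced by that change of variables applied to the weight $\Lambda$ itself, and $-h_\Lambda$ realigns the $L_0$-origin with the ground state of $L(\Lambda)$ using \eqref{Eq:CC-CD}. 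So the heart of the argument is a clean dictionary: a coloured tuple $\bY\in\burgecolset{\bs}{\bsig}$ with colour-content $(k_0,\ldots,k_{\dimaff-1})$ contributes a monomial whose $\qvar$-exponent and $\ft_i$-exponents, after \eqref{Eq:t2hat}, equal $L_0$ and $H_i$ on the matching vector in (a copy of the Fock space tensor) $L(\Lambda)$, with the global discrepancy being precisely $w_\Lambda-h_\Lambda$.

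I would then combine these: summing the crystal-side monomials gives $\pochinf{\qvar}{\qvar}^{-1}\chi_\Lambda^{\slchap{\dimaff}_\levelaff}$ up to the scalar shift, which after multiplying by $\qvar^{\,w_\Lambda-h_\Lambda}$ matches $\burgecolfun{\bs}{\bsig}(\qvar,\bft)$ term by term. It is worth checking the edge cases that the excerpt already flags: for $\levelaff=1$ the Burge conditions are vacuous and one should recover $\chi_\Lambda^{\slchap{\dimaff}_1}$ directly from \eqref{ex_ch_N1n_g}; and the degenerate partition case $\sigma_1=\cdots=\sigma_\levelaff$ (where $\bs$ is not unique) should be seen to give the same generating function under cyclic relabelling, consistently with the footnote after \eqref{Eq:sigma2s0}.

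The main obstacle I anticipate is making the bijection between $\burgecolset{\bs}{\bsig}$ and the crystal $B(\Lambda)$ genuinely rigorous with the correct identification of \emph{all} three data (the $\qvar$-grading, each $\ft_i$-grading, and the highest-weight label), rather than just matching Hilbert series. The Kyoto-group path model is stated for $\slchap{\dimaff}$ with a particular normalization of energy and weight functions, and one must verify that the \lq\lq energy\rq\rq\ of a path corresponds to $\tfrac1\dimaff|\bY|$ and the path weight to the $\delta k_i(\bY)$ combination in \eqref{Eq:delta2chern}; pinning down the constants there, and showing they assemble into exactly $w_\Lambda-h_\Lambda$ and the substitution \eqref{Eq:t2hat}, is the delicate computation. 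Once that correspondence is nailed down, the rest of the proof is a direct translation.
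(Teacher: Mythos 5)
Your plan follows essentially the same route as the paper's proof: the character is rewritten via the Kyoto-group crystal description as a sum over cylindric multipartitions $\multicol{\bsig}$ (equation \eqref{Eq:SLCharDynkin3}), the factor $1/\pochinf{\qvar}{\qvar}$ arises from the bijection $\multicol{\bsig}\to\multicolHL{\bsig}\times\partitset$ with the free partition playing the Heisenberg role, and the shifts $w_\Lambda$, $h_\Lambda$ together with the substitution \eqref{Eq:t2hat} are exactly the bookkeeping you describe. The one step you flag as delicate --- identifying $\burgecolset{\bs}{\bsig}$ with $\multicol{\bsig}$ --- is settled in the paper by observing that the cylindric conditions \eqref{Eq:Cylindrical} coincide with the Burge conditions defining $\burgeset{\bs,\bfone}$, passing to $\burgeset{\bfone,\bs}$ by conjugation via Lemma \ref{Lem:BurgeDual}, and noting that the opposite colouring conventions of the two sets ($(\sigma+i-j)$ versus $(\sigma-i+j)$ modulo $\dimaff$) compensate for the conjugation, so the colour contents $k_i(\bY)$ and hence all gradings match.
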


\begin{proof}
Comparison of the conditions \eqref{Eq:Cylindrical} with \eqref{Eq:Burge}
shows that there is a bijection
$\multicol{\bsig}\to\burgeset{\bs,\bfone}$, with the map
$\bY\mapsto\bX$ from the former to the latter being obtained
by ignoring the colours.
Combining this with the bijection described by \eqref{Eq:BurgeDual}
then yields a bijection $\multicol{\bsig}\to\burgeset{\bfone,\bs}$
described by $\bY\mapsto \bX\mapsto \bX^T$.
Moreover, because of the differing ways in which the colours are
ordered in $\multicol{\bsig}$ and $\burgecolset{\bfone,\bs}{\bsig}$,
colouring $\bX^T$ to give an element of $\burgecolset{\bfone,\bs}{\bsig}$,
results in $\bY^T$.
Thus, in the expression \eqref{Eq:BurgeGFdef},
$\burgecolset{\bs}{\bsig}\equiv\burgecolset{\bfone,\bs}{\bsig}$
can be replaced by $\multicol{\bsig}$.
Noting that $|\bY|=\sum_{i=0}^{\dimaff-1}k_i(\bY)
=\dimaff k_0(\bY)+\sum_{i=1}^{\dimaff-1}\delta k_i(\bY)$,
and using \eqref{Eq:delta2chern}, then gives
\begin{equation}
\begin{split}
\burgecolfun{\bs}{\bsig}(\qvar,\bft)
&=
\sum_{\bY\in\multicol{\bsig}}
\qvar^{\, k_0(\bY)}
\prod_{i=1}^{\dimaff-1}
\qvar^{\, \frac{1}{\dimaff}\delta k_i(\bY)}
\ft_i^{\, N_i+\delta k_{i-1}(\bY)-2\delta k_i(\bY)+\delta k_{i+1}(\bY)}\\
&=
\sum_{\bY\in\multicol{\bsig}}
\qvar^{\, k_0(\bY)}
\prod_{i=1}^{\dimaff-1}
\ft_i^{\, N_i}
\left(\qvar^{\, \frac{1}{\dimaff}}
\frac{\ft_{i-1}\ft_{i+1}}{\ft_{i}^2}\right)^{\delta k_i(\bY)},
\end{split}
\end{equation}
where we set $\ft_0=\ft_\dimaff=1$.
Substituting for each $\ft_i$ using \eqref{Eq:t2hat} then shows that
\begin{equation}
\burgecolfun{\bs}{\bsig}(\qvar,\bft)
=
\qvar^{\, w_{\Lambda}}
\sum_{\bY\in\multicol{\bsig}}
\qvar^{\, k_0(\bY)}
\prod_{i=1}^{\dimaff-1}
\hat{\ft}_i^{\, N_i}
\left(\frac{\hat{\ft}_{i-1}\hat{\ft}_{i+1}}{\hat{\ft}_{i}^2}\right)
  ^{\delta k_i(\bY)}
\end{equation}
which yields \eqref{Eq:Chern=Char} using \eqref{Eq:SLCharDynkin3}.
\end{proof}
This result enables $\rburgecolfun{\bs}{\bsig}{\bell}(\qvar)$
to be expressed in terms of the
$\slchap{\dimaff}$ string functions $\xa^\Lambda_{\bell}(\qvar)$ in \eqref{Eq:SLstring}:
\begin{cor}\label{Cor:Chern=Char1}
For a partition $\bsig=(\sigma_1,\sigma_2,\ldots)$ for which
$\sigma_1<\dimaff$ and $\sigma_{\levelaff+1}=0$,
define $\bs=[s_0,s_1,\ldots,s_{\levelaff-1}]$ by
\eqref{Eq:sigma2s} and \eqref{Eq:sigma2s0},
and set $\Lambda=\sum_{i=1}^{\levelaff}\fwt_{\sigma_i}$.
Then for each $\bell=(\ell_1,\ldots,\ell_{\dimaff-1})\in\IZ^{\dimaff-1}$,
\begin{equation}\label{Eq:Chern=Char1}
\rburgecolfun{\bs}{\bsig}{\bell}(\qvar)=
\frac{\qvar^{\, \frac1{\dimaff}|\bell|}}{\pochinf{\qvar}{\qvar}} \,
\xa^\Lambda_{\bell}(\qvar),
\end{equation}
where we set $|\bell|=\sum_{i\in\indaff}\ell_i$.
\end{cor}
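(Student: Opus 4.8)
The plan is to read off \eqref{Eq:Chern=Char1} from Proposition~\ref{Prop:Chern=Char} by comparing, on both sides of \eqref{Eq:Chern=Char}, the coefficients of the monomials in $\bft=(\ft_1,\ldots,\ft_{\dimaff-1})$. On the left-hand side, \eqref{Eq:Chern2String} already exhibits $\burgecolfun{\bs}{\bsig}(\qvar,\bft)=\sum_{\bell\in\IZ^{\dimaff-1}}\rburgecolfun{\bs}{\bsig}{\bell}(\qvar)\prod_{i=1}^{\dimaff-1}\ft_i^{\,c_i(\bell)}$, where I abbreviate $c_i(\bell)=N_i+\ell_{i-1}-2\ell_i+\ell_{i+1}$ with $\ell_0=\ell_\dimaff=0$. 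Since $c_i(\bell)-N_i=-(\oA\bell)_i$ for the (invertible) Cartan matrix $\oA$ of $\slfin{\dimaff}$, the assignment $\bell\mapsto(c_i(\bell))_{i}$ is injective, so the coefficient of the monomial $\prod_i\ft_i^{\,c_i(\bell)}$ on the left is unambiguously $\rburgecolfun{\bs}{\bsig}{\bell}(\qvar)$.

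For the right-hand side I would insert the string-function expansion \eqref{Eq:SLstring} of the graded WZW character, which writes $\qchi^{\slchap{\dimaff}_{\levelaff}}_\Lambda(\qvar,\hat{\bft})$ as a sum over $\bell\in\IZ^{\dimaff-1}$ of $\xa^\Lambda_\bell(\qvar)$ against the monomial $\prod_i\hat{\ft}_i^{\,c_i(\bell)}$, each term carrying an intrinsic normalising power of $\qvar$. Substituting $\hat{\ft}_i=\qvar^{-\frac1{2\dimaff}i(\dimaff-i)}\,\ft_i$ from \eqref{Eq:t2hat} converts $\prod_i\hat{\ft}_i^{\,c_i(\bell)}$ into $\qvar^{-\frac1{2\dimaff}\sum_i i(\dimaff-i)\,c_i(\bell)}\prod_i\ft_i^{\,c_i(\bell)}$; then, after multiplying by the prefactor $\qvar^{\,w_\Lambda-h_\Lambda}/\pochinf{\qvar}{\qvar}$ of Proposition~\ref{Prop:Chern=Char} and equating the coefficient of $\prod_i\ft_i^{\,c_i(\bell)}$ on the two sides, I obtain \eqref{Eq:Chern=Char1} once the powers of $\qvar$ are collected.

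It therefore remains to verify that the several $\qvar$-shifts combine correctly. The one genuine computation is the evaluation of $\sum_{i=1}^{\dimaff-1}i(\dimaff-i)\,c_i(\bell)$: writing $f(i)=i(\dimaff-i)$, one has the discrete identity $f(i-1)-2f(i)+f(i+1)=-2$, so a discrete summation by parts, whose boundary terms vanish because $f(0)=f(\dimaff)=\ell_0=\ell_\dimaff=0$, gives $\sum_i f(i)(\ell_{i-1}-2\ell_i+\ell_{i+1})=-2|\bell|$; combined with $\sum_i f(i)N_i=2\dimaff\,w_\Lambda$ from \eqref{Eq:wdef}, this yields $-\frac1{2\dimaff}\sum_i f(i)\,c_i(\bell)=-w_\Lambda+\frac1{\dimaff}|\bell|$. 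Adding the exponent $w_\Lambda-h_\Lambda$ from Proposition~\ref{Prop:Chern=Char} and the intrinsic $\qvar^{h_\Lambda}$ carried by the string functions in \eqref{Eq:SLstring} leaves precisely the factor $\qvar^{\,\frac1{\dimaff}|\bell|}$, as required. I expect this reconciliation of normalisations — $w_\Lambda$, $h_\Lambda$, the rescaling \eqref{Eq:t2hat}, and the intrinsic grading of the string functions — to be the only real obstacle; everything else is the routine manipulation of the generating functions already assembled in Section~\ref{subsec:t_ref_red_ch}.
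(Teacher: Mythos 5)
Your proposal is correct and is essentially the paper's own proof, which likewise reexpresses the two sides of \eqref{Eq:Chern=Char} via \eqref{Eq:Chern2String} and the string-function expansion, applies \eqref{Eq:t2hat}, and invokes the invertibility of $\oA$ to match coefficients; you have merely made explicit the bookkeeping of the $\qvar$-exponents (the identity $f(i-1)-2f(i)+f(i+1)=-2$ and the resulting $-2|\bell|$) that the paper leaves implicit. The only slip is a citation: the expansion of $\qchi^{\slchap{\dimaff}_{\levelaff}}_\Lambda(\qvar,\hat{\bft})$ you use is \eqref{Eq:SLCharDynkin}, not \eqref{Eq:SLstring}, though the content you describe is the right one.
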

\begin{proof}
This results from reexpressing the left and right sides of
\eqref{Eq:Chern=Char} using
\eqref{Eq:Chern2String} and \eqref{Eq:SLCharDynkin} respectively,
using \eqref{Eq:t2hat},
and then using the fact that the matrix $\oA$ is invertible.
\end{proof}

Combining \eqref{Eq:Chern=Char} with \eqref{Eq:Principal}
enables a product expression to be given for 
$\burgecolfun{\bs}{\bsig}(\qvar)$:
\begin{cor}\label{Cor:Chern=Char2}
For a partition $\bsig=(\sigma_1,\sigma_2,\ldots)$ for which
$\sigma_1<\dimaff$ and $\sigma_{\levelaff+1}=0$, let
$\bs=[s_0,s_1,\ldots,s_{\levelaff-1}]$ be defined by
\eqref{Eq:sigma2s} and \eqref{Eq:sigma2s0}.
Then
\begin{equation}\label{Eq:ChernProd}
\burgecolfun{\bs}{\bsig}(\qvar)=
\frac{1}
{\vpochinf{\qvar^{1+\frac{\levelaff}{\dimaff}}}
 {\qvar^{1+\frac{\levelaff}{\dimaff}}}}
\prod_{\begin{subarray}{c}
        1\le i<j\le \dimaff+\levelaff\\[.5ex]
        i\notin\Omega,j\in\Omega
       \end{subarray}}
\frac{1}
{\vpochinf{\qvar^{\frac{j-i}{\dimaff}}}
          {\qvar^{1+\frac{\levelaff}{\dimaff}}}}
\prod_{\begin{subarray}{c}
        1\le i<j\le \dimaff+\levelaff\\[.4ex]
        i\in\Omega,j\notin\Omega
       \end{subarray}}
\frac{1}
{\vpochinf{\qvar^{1+\frac{\levelaff-j+i}{\dimaff}}}
          {\qvar^{1+\frac{\levelaff}{\dimaff}}}},
\end{equation}
where $\Omega=\{\levelaff+j-\sigma^T_j\,|\,j=1,\ldots,\dimaff\}$.
\end{cor}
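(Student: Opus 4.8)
The plan is to derive \eqref{Eq:ChernProd} by evaluating Proposition~\ref{Prop:Chern=Char} at $\ft_1=\cdots=\ft_{\dimaff-1}=1$ and then feeding the result into the product formula \eqref{Eq:Principal}. Putting every $\ft_i=1$ in \eqref{Eq:Chern=Char}, the left side becomes $\burgecolfun{\bs}{\bsig}(\qvar)$ by \eqref{princ_gen_ch}, while \eqref{Eq:t2hat} forces $\hat{\ft}_i=\qvar^{-i(\dimaff-i)/(2\dimaff)}$, so that
\begin{equation*}
\burgecolfun{\bs}{\bsig}(\qvar)
=\frac{\qvar^{\,w_\Lambda-h_\Lambda}}{\pochinf{\qvar}{\qvar}}\;
\mathrm{Tr}_{L(\Lambda)}\,\qvar^{\,L_0}\prod_{i=1}^{\dimaff-1}\qvar^{-\frac{i(\dimaff-i)}{2\dimaff}H_i}.
\end{equation*}
The first point is that this particular weighting of the $H_i$ turns the homogeneously graded trace \eqref{Eq:WZWCharDef} into the principal specialisation of the character of $L(\Lambda)$: since $\sum_{i=1}^{\dimaff-1}\tfrac{i(\dimaff-i)}{2\dimaff}H_i=\tfrac1\dimaff\bar{\rho}^{\vee}$ as an operator and $\inner{\Lambda}{\bar{\rho}^{\vee}}=\dimaff\,w_\Lambda$ by \eqref{Eq:wdef}, a weight vector of weight $\Lambda-\sum_{j=0}^{\dimaff-1}d_j\alpha_j$ contributes $\qvar^{\,h_\Lambda-w_\Lambda+\frac1\dimaff\sum_j d_j}$, and $\sum_j d_j$ is exactly the principal degree. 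Hence the trace equals $\qvar^{\,h_\Lambda-w_\Lambda}$ times the principally specialised character (in the variable $\qvar^{1/\dimaff}$, normalised to have constant term $1$), the factor $\qvar^{\,w_\Lambda-h_\Lambda}$ in \eqref{Eq:Chern=Char} then cancels, and one is reduced to showing that $\frac{1}{\pochinf{\qvar}{\qvar}}$ times the principal specialisation of $\mathrm{ch}\,L(\Lambda)$, read off from \eqref{Eq:Principal}, equals the right side of \eqref{Eq:ChernProd}.

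The product in \eqref{Eq:Principal} is most usefully written as a product over the positive roots of $\slchap{\dimaff}$ (type $\widehat{A}_{\dimaff-1}$): the imaginary roots $k\delta$ ($k\ge1$), each of multiplicity $\dimaff-1$, and the real roots $\pm(\be_a-\be_b)+k\delta$ for $1\le a<b\le\dimaff$ (with $k\ge0$ for the sign $+$ and $k\ge1$ for the sign $-$), the factor attached to a root $\alpha$ being a ratio whose numerator carries $\inner{\Lambda+\rho}{\alpha}$ and whose denominator carries $\inner{\rho}{\alpha}$. The combinatorial heart of the argument is the evaluation
\begin{equation*}
\inner{\Lambda+\rho}{\be_a-\be_b}=\bigl(\levelaff+b-\sigma^{T}_{b}\bigr)-\bigl(\levelaff+a-\sigma^{T}_{a}\bigr),\qquad
\inner{\Lambda+\rho}{\delta}=\levelaff+\dimaff,
\end{equation*}
the first equality holding because $\Lambda=\sum_{i=1}^{\levelaff}\fwt_{\sigma_i}$ gives $\inner{\Lambda}{\be_a-\be_b}=\sigma^{T}_{a}-\sigma^{T}_{b}$, combined with $\inner{\rho}{\be_a-\be_b}=b-a$. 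This identifies the elements of $\Omega=\{\levelaff+j-\sigma^{T}_{j}\mid j=1,\dots,\dimaff\}$ as the shifted ``beta-numbers'' of $\Lambda$, so that the differences $\inner{\Lambda+\rho}{\be_a-\be_b}$ are exactly the differences of elements of $\Omega$, while for $\Lambda=0$ the analogous set is $\{\levelaff+1,\dots,\levelaff+\dimaff\}$. Summing each geometric-type subproduct over $k$ (with step $\levelaff+\dimaff$ in the numerator and $\dimaff$ in the denominator, both rescaled by $\tfrac1\dimaff$ because of \eqref{Eq:t2hat}) then expresses the principal specialisation, divided by $\pochinf{\qvar}{\qvar}$, as a ratio of $q$-Pochhammer symbols whose numerator is indexed by pairs within $\Omega$ and whose denominator is indexed by pairs within $\{\levelaff+1,\dots,\levelaff+\dimaff\}$, with the imaginary-root and Heisenberg factors giving a prefactor $\vpochinf{\qvar^{1+\levelaff/\dimaff}}{\qvar^{1+\levelaff/\dimaff}}^{\dimaff-1}/\pochinf{\qvar}{\qvar}^{\dimaff}$.

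The main obstacle is the final reorganisation of this $q$-series into the form \eqref{Eq:ChernProd}: one must show that the $\binom{\dimaff}{2}$ numerator factors (pairs within $\Omega$) and the $\binom{\dimaff}{2}$ denominator factors (pairs within $\{\levelaff+1,\dots,\levelaff+\dimaff\}$), together with the above prefactor and their mismatched moduli $\dimaff$ versus $\levelaff+\dimaff$, collapse into the $\dimaff\levelaff$ ``mixed'' factors indexed by $i\notin\Omega,\,j\in\Omega$ and $i\in\Omega,\,j\notin\Omega$ together with the single prefactor $1/\vpochinf{\qvar^{1+\levelaff/\dimaff}}{\qvar^{1+\levelaff/\dimaff}}$, and that no net power of $\qvar$ survives. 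This step amounts to a complementation identity for $\Omega$ inside $\{1,\dots,\levelaff+\dimaff\}$ (equivalently, to the Maya-diagram bookkeeping for $\Lambda$), and is where the bulk of the work lies; if \eqref{Eq:Principal} is already stated in product-over-roots form then nothing else is needed. A convenient consistency check is $\levelaff=1$, where $\Omega=\{1,\dots,\dimaff+1\}\setminus\{\sigma_1+1\}$: there the two mixed products together with the prefactor telescope to $1/\vpochinf{\qvar^{1/\dimaff}}{\qvar^{1/\dimaff}}$, which is visibly the generating function $\sum_{\bY}\qvar^{|\bY|/\dimaff}$ of all single coloured Young diagrams, as it must be since no Burge constraint is present when $\levelaff=1$.
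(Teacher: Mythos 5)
Your proposal is correct in substance and its first half is essentially the paper's argument in different clothes. The paper gets to the pivotal identity
\begin{equation*}
\burgecolfun{\bs}{\bsig}(\qvar)=\frac{1}{\pochinf{\qvar}{\qvar}}\,
\mathrm{Pr}\,\chi^{\slchap{\dimaff}}_\Lambda(\qvar)
\end{equation*}
by summing Corollary \ref{Cor:Chern=Char1} over $\bell$ and recognising $\sum_{\bell}\qvar^{|\bell|/\dimaff}\xa^\Lambda_{\bell}(\qvar)$ as the specialisation $x_i\to\qvar^{-i/\dimaff}$ of \eqref{Eq:CrystalChar2}, i.e.\ as \eqref{Eq:PrincipalDef}; you instead set $\ft_i=1$ in Proposition \ref{Prop:Chern=Char} and verify weight-by-weight that the induced specialisation $\hat{\ft}_i=\qvar^{-i(\dimaff-i)/(2\dimaff)}$ is the principal one and that the prefactor $\qvar^{\,w_\Lambda-h_\Lambda}$ cancels. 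Both computations are the same change of variables (cf.\ Appendix \ref{Sec:Convert}: $\ft_i=1$ in \eqref{Eq:t2hat} forces $x_i/x_{i+1}\to\qvar^{1/\dimaff}$), and your bookkeeping of $\beta(D)$ and $\beta(H_i)$ checks out. Where you diverge is the second half: the paper simply quotes \eqref{Eq:Principal}, which is \emph{already stated} in the $\Omega(\Lambda)$-indexed product form that is the right side of \eqref{Eq:ChernProd}, so the only remaining step is the identification $\Omega(\Lambda)=\Omega$, i.e.\ $\partit(\Lambda)=\bsig^T$, which is Lemma \ref{Lem:Par2Wt} (and which your inner-product computation $\inner{\Lambda}{\be_a-\be_b}=\sigma^T_a-\sigma^T_b$ reproves). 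The ``main obstacle'' you describe --- reorganising a product over positive roots, with its imaginary-root prefactor and the complementation identity for $\Omega$ in $\{1,\ldots,\dimaff+\levelaff\}$ --- is therefore a re-derivation of \eqref{Eq:Principal} itself, which the paper takes as known (Kac, and \cite{Foda:2015bsa}); it is not needed for this corollary, and leaving it incomplete does not leave a gap in the proof once \eqref{Eq:Principal} is cited as stated.
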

\begin{proof}
From Appendix \ref{Sec:Principal},
we have $\Omega_{\Lambda}=\{\levelaff+j-\lambda_j\,|\,j=1,\ldots,\dimaff\}$,
where $\lambda=\partit(\Lambda)$ for
$\Lambda=\sum_{i=1}^{\levelaff}\fwt_{\sigma_i}$.
Lemma \ref{Lem:Par2Wt} shows that $\lambda=\sigma^T$
and thus $\Omega_{\Lambda}=\Omega$.


From \eqref{princ_gen_ch}, by \eqref{Eq:Chern=Char1}, 
\eqref{Eq:CrystalChar2}, and \eqref{Eq:PrincipalDef}, we obtain
\begin{equation*}
\begin{split}
\burgecolfun{\bs}{\bsig}(\qvar)
&=
\frac{1}{\pochinf{\qvar}{\qvar}}
\sum_{\bell\in\IZ^{\dimaff-1}}
\qvar^{\, \frac1{\dimaff}|\bell|}\,
\xa^\Lambda_{\bell}(\qvar)
=
\frac{e^{-\Lambda}}{\pochinf{\qvar}{\qvar}}
\left.
\xchi^{\, \slchap{\dimaff}}_\Lambda(\qvar,\bx)
\right|_{\{x_i\to \qvar^{-i/\dimaff},1\le i\le\dimaff\}}\\
&=
\frac{1}{\pochinf{\qvar}{\qvar}} \,
\mathrm{Pr}\,{\chi}^{\slchap{\dimaff}}_\Lambda(\qvar).
\end{split}
\end{equation*}
Use of \eqref{Eq:Principal} then gives \eqref{Eq:ChernProd}.
\end{proof}

In the cases in which $\levelaff=1$, the Burge conditions \eqref{Eq:Burge}
are vacuous (assuming that $r_1$ and $s_1$ are both positive).
Then $\rburgecolfun{\bs}{\bsig}{\bell}(\qvar)$ coincides with
$\rcolfun{\bsig}{\bell}(\qvar)$ defined by \eqref{inst_ch_cpt}.
In this $\levelaff=1$ case, we also set
$\colfun{\bsig}(\qvar,\bft)=\burgecolfun{\bs}{\bsig}(\qvar,\bft)$
and
$\colfun{\bsig}(\qvar)=\burgecolfun{\bs}{\bsig}(\qvar)$.
Because Proposition \ref{Prop:Chern=Char} and
Corollaries \ref{Cor:Chern=Char1} and \ref{Cor:Chern=Char2}
remain valid in this case,
when combined with expressions from Appendix \ref{Sec:WZWChars},
they lead to the following:

\begin{cor}\label{Cor:Chern=Char3}
Let $0\le k<\dimaff$.
Then
\begin{align}
\label{Eq:Chern=Char3a}
\colfun{(k)}(\qvar,\bft)
&=
\frac{1}{\pochinf{\qvar}{\qvar}^{\dimaff}}
\sum_{\bell\in\IZ^{\dimaff-1}}
\qvar^{-\ell_k+\sum_{i=1}^{\dimaff-1}(\ell_i^2-\ell_i\ell_{i-1}
         +\frac{1}{\dimaff}\ell_i)}
\prod_{i=1}^{\dimaff-1}
\ft_i^{\, \delta_{ik}+\ell_{i-1}-2\ell_i+\ell_{i+1}}\,,
\\
\rcolfun{(k)}{\bell}(\qvar)
\label{Eq:Chern=Char3b}
&=
\frac{1}{\pochinf{\qvar}{\qvar}^{\dimaff}}\,
\qvar^{-\ell_k+\sum_{i=1}^{\dimaff-1}(\ell_i^2-\ell_i\ell_{i-1}
         +\frac{1}{\dimaff}\ell_i)}\,,
\\
\colfun{(k)}(\qvar)
\label{Eq:Chern=Char3c}
&=
\frac{1}{\vpochinf{\qvar^{\frac1{\dimaff}}}{\qvar^{\frac1{\dimaff}}}},
\end{align}
where $\bell=(\ell_1,\ell_2,\ldots,\ell_{\dimaff-1})$
and $\ell_0=\ell_{\dimaff}=0$.
\end{cor}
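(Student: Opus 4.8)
The plan is to specialise the general machinery of Proposition \ref{Prop:Chern=Char} and Corollaries \ref{Cor:Chern=Char1}, \ref{Cor:Chern=Char2} to the level-one situation $\levelaff=1$, where everything becomes explicit because the only fundamental weights available are $\fwt_0,\fwt_1,\ldots,\fwt_{\dimaff-1}$ and a single one of them is selected: taking $\bsig=(k)$ (a partition with one part equal to $k$, so $\sigma_1=k$ and $\sigma_2=0$) gives $\Lambda=\fwt_k$, i.e. $\bN=[N_0,\ldots,N_{\dimaff-1}]$ with $N_i=\delta_{ik}$. First I would record that, since $\levelaff=1$ forces $r_0=1$ and the Burge conditions \eqref{sp_burge_c} then read $Y_{1,i}^{k}\ge Y_{1,i}^{k}-s_0+1$ (with $Y_0=Y_1$), which is vacuous as $s_0\ge1$; hence $\burgecolset{\bs}{(k)}$ is simply the set of all singleton coloured Young diagrams of charge $k$, matching the definition of $\rcolfun{(k)}{\bell}(\qvar)$ from \eqref{inst_ch_cpt}. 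This justifies the notational identifications $\colfun{(k)}(\qvar,\bft)=\burgecolfun{\bs}{(k)}(\qvar,\bft)$ and $\colfun{(k)}(\qvar)=\burgecolfun{\bs}{(k)}(\qvar)$ stated just before the corollary.

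Next I would substitute $\levelaff=1$ and $\bN=(\delta_{ik})_i$ into the three identities already proved. For \eqref{Eq:Chern=Char3c}: plugging into \eqref{Eq:ChernProd}, the level-one character is well known to be simply $1/\vpochinf{\qvar^{1/\dimaff}}{\qvar^{1/\dimaff}}$ up to the normalisation prefactor — equivalently, one invokes the standard fact (available from Appendix \ref{Sec:Principal}/\ref{Sec:WZWChars}) that the principally specialised level-one $\slchap{\dimaff}$ character equals $\pochinf{\qvar}{\qvar}/\vpochinf{\qvar^{1/\dimaff}}{\qvar^{1/\dimaff}}$, which when divided by $\pochinf{\qvar}{\qvar}$ gives the stated product. (The set $\Omega$ in Corollary \ref{Cor:Chern=Char2} becomes $\Omega=\{1+j-\sigma^T_j\mid j=1,\ldots,\dimaff\}$ with $\sigma^T$ the single column of height $k$; tracking through \eqref{Eq:ChernProd} reproduces the same answer, which is a useful consistency check.)

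For \eqref{Eq:Chern=Char3b}, I would start from \eqref{Eq:Chern=Char1}, so that $\rcolfun{(k)}{\bell}(\qvar)=\qvar^{|\bell|/\dimaff}/\pochinf{\qvar}{\qvar}\cdot\xa^{\fwt_k}_{\bell}(\qvar)$, and then insert the explicit level-one string function. The key input here is that for level $1$ the string functions of $\slchap{\dimaff}$ are simply inverse Dedekind-$\eta$-type series: concretely $\xa^{\fwt_k}_{\bell}(\qvar)=\qvar^{\,Q(\bell)}/\pochinf{\qvar}{\qvar}^{\dimaff-1}$ for an explicit quadratic form $Q$ in the $\ell_i$ coming from the inverse Cartan matrix $\oA^{-1}$ (whose entries $\min\{\sigma,i\}-\sigma i/\dimaff$ appear in \eqref{delk_inv_cartan}). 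Combining $\qvar^{|\bell|/\dimaff}\cdot\qvar^{Q(\bell)}$ and simplifying the exponent against the $-\ell_k$ term (which records the dependence on $\Lambda=\fwt_k$) should yield exactly $-\ell_k+\sum_{i=1}^{\dimaff-1}(\ell_i^2-\ell_i\ell_{i-1}+\tfrac1{\dimaff}\ell_i)$, with the $1/\pochinf{\qvar}{\qvar}^{\dimaff}$ prefactor assembled from the $\dimaff-1$ copies in the string function times the one extra copy in \eqref{Eq:Chern=Char1}. Then \eqref{Eq:Chern=Char3a} follows immediately by summing \eqref{Eq:Chern=Char3b} over $\bell$ against $\prod_i\ft_i^{\,\ffc_i(\bY)}$ via \eqref{Eq:Chern2String}, using $N_i=\delta_{ik}$ so that the exponent of $\ft_i$ is $\delta_{ik}+\ell_{i-1}-2\ell_i+\ell_{i+1}$. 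The main obstacle — though a mild one — is bookkeeping the exponent: one must verify that the quadratic form produced by the level-one string function (through $\oA^{-1}$) plus the linear shift $|\bell|/\dimaff$ really collapses to the compact expression $\sum_i(\ell_i^2-\ell_i\ell_{i-1}+\tfrac1\dimaff\ell_i)$, and that the $\Lambda$-dependent correction is precisely the single term $-\ell_k$; this is a finite computation with the Cartan data but deserves care with signs and the cyclic boundary conventions $\ell_0=\ell_\dimaff=0$.
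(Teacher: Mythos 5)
Your proposal is correct and follows essentially the same route as the paper: (b) is obtained by combining \eqref{Eq:Chern=Char1} with the explicit level-one string function \eqref{Eq:CharAffLevel1xa} (your ``quadratic form from $\oA^{-1}$'' is exactly that formula, and your exponent bookkeeping $\qvar^{|\bell|/\dimaff}\cdot\qvar^{-\ell_k+\sum_i(\ell_i^2-\ell_i\ell_{i-1})}$ checks out), (a) follows by substituting (b) into \eqref{Eq:Chern2String}, and (c) follows from Corollary \ref{Cor:Chern=Char2} with $\bsig=(k)$ after computing $\Omega$. The preliminary observation that the Burge conditions are vacuous at level one is also made in the paper, in the paragraph immediately preceding the corollary.
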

\begin{proof}
Set $\Lambda=\fwt_k$.
Firstly, combining \eqref{Eq:Chern=Char1} with \eqref{Eq:CharAffLevel1xa}
gives \eqref{Eq:Chern=Char3b}.
Substituting \eqref{Eq:Chern=Char3b} into \eqref{Eq:Chern2String}
then gives \eqref{Eq:Chern=Char3a}.
For \eqref{Eq:Chern=Char3c} we use Corollary \ref{Cor:Chern=Char2}
with $\bsig=(k)$.
Then $\Omega=\{1,2,\ldots,\dimaff\}\backslash\{k+1\}$,
whereupon \eqref{Eq:ChernProd} yields \eqref{Eq:Chern=Char3c}.
\end{proof}
Note that \eqref{Eq:Chern=Char3b} accords with \eqref{ex_ch_N1n_g}.

Let $\bN=[N_0,N_1,\ldots,N_{\dimaff-1}]$ be such that
each $N_i\ge0$ with $\sum_{i=0}^{\dimaff-1}N_{i} = \levelaff$.
If we regard $\bN$ as an $\slchap{\dimaff}$ weight,
then $\bN=\sum_{i=0}^{n-1}N_{i}\fwt_i\in P^{+}_{\dimaff,\levelaff}$.
Let $\bsig=(\sigma_1,\sigma_2,\ldots)$ be the partition with
$\sigma_{\levelaff+1}=0$ such that
$\bN=\sum_{j=1}^\levelaff\fwt_{\sigma_j}$.
Then $\bsig=\lambda^T$, the partition
conjugate to $\lambda=\partit(\bN)$ defined by \eqref{Eq:Wt2Par}.
Now define $\bs=[s_0,s_1,\ldots,s_{\levelaff-1}]$ by
\eqref{Eq:sigma2s} and \eqref{Eq:sigma2s0},
and define the $SU(N)$ $\ft$-refined Burge-reduced generating function 
of coloured Young diagrams, by subtracting 
the Heisenberg factor $\mathcal{H}$ whose character is    
$\chi_{\mathcal{H}}(\fq)=\left(\fq;\fq\right)_{\infty}^{-1}$ in 
\eqref{ex_ch_n_1},
\begin{equation}
\widehat{X}^{\mathrm{red}}_{\bN}(\qvar,\bft)
=\pochinf{\qvar}{\qvar}
\times
\burgecolfun{\bs}{\bsig}(\qvar,\ft).
\label{inst_red_ch}
\end{equation}
Proposition \ref{Prop:Chern=Char} immediately shows that:

\begin{cor}\label{prop:Ngch}
If $\bN\in P^{+}_{\dimaff,\levelaff}$, then
\begin{equation}
\widehat{X}^{\mathrm{red}}_{\bN}(\qvar,\bft)
=
\qvar^{\, w_{\bN}-h_{\bN}}\,
\qchi^{\slchap{\dimaff}_{\levelaff}}_{\bN}(\qvar,\hat{\bft}),
\label{eqn:Ngch}
\end{equation}
where $\hat{\bft}$ is related to $\bft$ by \eqref{Eq:t2hat},
and $h_{\bN}$ and $w_{\bN}$ are given by \eqref{Eq:CC-CD} and \eqref{Eq:wdef}.
\end{cor}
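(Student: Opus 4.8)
The plan is to read this off from Proposition \ref{Prop:Chern=Char} combined with the definition \eqref{inst_red_ch}; the only thing to verify is that the data $\bN$ corresponds to a partition $\bsig$ of the type required in that proposition, with the associated weight equal to $\bN$.

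First I would note that, given $\bN=[N_0,N_1,\ldots,N_{\dimaff-1}]$ with each $N_i\ge0$ and $\sum_{i=0}^{\dimaff-1}N_i=\levelaff$, the partition $\bsig=(\sigma_1,\sigma_2,\ldots)$ formed by listing each index $i\in\{0,1,\ldots,\dimaff-1\}$ with multiplicity $N_i$ in weakly decreasing order is the unique partition with $\sigma_{\levelaff+1}=0$ for which $\bN=\sum_{j=1}^{\levelaff}\fwt_{\sigma_j}$; since every part of $\bsig$ lies in $\{0,1,\ldots,\dimaff-1\}$ we also have $\sigma_1<\dimaff$. Thus $\bsig$ meets exactly the hypotheses of Proposition \ref{Prop:Chern=Char}, and by Lemma \ref{Lem:Par2Wt} it coincides with $\lambda^T$ for $\lambda=\partit(\bN)$. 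The sequence $\bs=[s_0,\ldots,s_{\levelaff-1}]$ appearing in \eqref{inst_red_ch} is precisely the one produced from this $\bsig$ via \eqref{Eq:sigma2s} and \eqref{Eq:sigma2s0}, so the generating function $\burgecolfun{\bs}{\bsig}(\qvar,\bft)$ in \eqref{inst_red_ch} is the one to which Proposition \ref{Prop:Chern=Char} applies, and the weight it attaches, $\Lambda=\sum_{i=1}^{\levelaff}\fwt_{\sigma_i}$, is nothing but $\bN$.

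Next I would apply Proposition \ref{Prop:Chern=Char}. Because the quantities $w_\Lambda$ and $h_\Lambda$ occurring there, given by \eqref{Eq:wdef} and the second formula in \eqref{Eq:CC-CD}, depend only on the Dynkin labels of $\Lambda$, we have $w_\Lambda=w_{\bN}$ and $h_\Lambda=h_{\bN}$, so \eqref{Eq:Chern=Char} becomes
\[
\burgecolfun{\bs}{\bsig}(\qvar,\bft)
=\frac{\qvar^{\,w_{\bN}-h_{\bN}}}{\pochinf{\qvar}{\qvar}}\,
\qchi^{\slchap{\dimaff}_{\levelaff}}_{\bN}(\qvar,\hat{\bft}),
\]
with $\hat{\bft}$ related to $\bft$ by \eqref{Eq:t2hat}. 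Substituting this into \eqref{inst_red_ch} and cancelling the factor $\pochinf{\qvar}{\qvar}$ yields \eqref{eqn:Ngch}.

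I do not expect any real obstacle: the entire content is Proposition \ref{Prop:Chern=Char}, and what remains is only the routine identification of a level-$\levelaff$ dominant integral weight of $\slchap{\dimaff}$ with a partition having at most $\levelaff$ parts, each less than $\dimaff$. The single point needing a word of justification is that the two labellings of the underlying WZW module --- by the weight $\bN$ and by the partition $\bsig$ with $\bsig=\partit(\bN)^T$ --- describe the same module $L(\Lambda)$, which is exactly Lemma \ref{Lem:Par2Wt}.
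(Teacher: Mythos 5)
Your proposal is correct and follows exactly the route the paper intends: the paper states that Corollary \ref{prop:Ngch} follows ``immediately'' from Proposition \ref{Prop:Chern=Char}, and your argument simply makes explicit the identification $\Lambda=\bN$ via the partition $\bsig$ with $\bN=\sum_{j=1}^{\levelaff}\fwt_{\sigma_j}$, followed by substitution into \eqref{inst_red_ch} and cancellation of the factor $\pochinf{\qvar}{\qvar}$. Nothing is missing.
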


This corollary implies that the Chern classes \eqref{fc_label} 
on the gauge side are identified with 
the eigenvalues of Cartan elements $H_i$ of $\slchap{\dimaff}$ on the CFT side. 

\begin{exam}
In the case of $N=1$, \eqref{eqn:Ngch} is particularly simple,
because then $h_{\bN}=w_{\bN}$.
For instance, for $(N,n)=(1,2)$,
\begin{align}
\begin{split}
\widehat{X}_{[1,0]}(\fq, \ft) &=\left( \fq;\fq \right)_{\infty}\,
\sum_{\ell \in {\IZ}} {X}_{(0);(-\ell)}(\fq)\, \ft^{2\ell}
=
\frac{1}{\left( \fq;\fq \right)_{\infty}}\, 
\sum_{j \in {\IZ}} \fq^{j^2}\, \hat{\ft}^{\, 2j}
=
\qchi^{\slchap{\dimaff}_{\levelaff}}_{[1,0]}(\qvar,\hat{\ft}),
\\
\widehat{X}_{[0,1]}(\fq, \ft) &=\left( \fq;\fq \right)_{\infty}\,
\sum_{\ell \in {\IZ}} X_{(1);(-\ell)}(\fq)\, \ft^{2\ell+1}
=
\frac{1}{\left( \fq;\fq \right)_{\infty}}\, 
\sum_{j \in {\IZ}+\frac12} \fq^{j^2}\, \hat{\ft}^{\, 2j}
=
\qchi^{\slchap{\dimaff}_{\levelaff}}_{[0,1]}(\qvar,\hat{\ft}),
\end{split}
\end{align}
where $\hat{\ft} =\fq^{-\frac14}\, \ft$.
\end{exam}

In Section \ref{sec:examples}, we will give explicit examples of  
Corollary \ref{prop:Ngch} for $(N,n)=(2,2), (2,3)$ and $(3,2)$ 
by comparing with the $\widehat{\mathfrak{sl}}(n)_N$ WZW characters 
computed using the Weyl-Kac character formula \eqref{wzw_ch_formula}.

Note that in the principally specialised case $\bft=(1,\ldots,1)$,
\begin{equation}
\widehat{X}^{\mathrm{red}}_{\bN}(\qvar,(1,\ldots,1))
=\pochinf{\qvar}{\qvar} \times 
\burgecolfun{\bs}{\bsig}(\qvar)
\label{inst_pr_red_ch}
\end{equation}
is immediately evaluated using the right side of \eqref{Eq:ChernProd}
with $\Omega=\{j+\sum_{i=0}^{j-1}N_i\,|\,j=1,\ldots,\dimaff\}$, and 
gives the $\widehat{\mathfrak{sl}}(n)$ principally specialised
character $\mathrm{Pr}\, \chi_{\bN}^{\widehat{\mathfrak{sl}}(n)}(\fq)$
in Appendix \ref{Sec:Principal}.


\section{Burge-reduced instanton partition functions 
and $\widehat{\mathfrak{sl}}(n)_N$ WZW conformal blocks}
\label{subsec_red_inst_pf}

\textit{\noindent We discuss how the integrable
$\widehat{\mathfrak{sl}}(n)_N$ WZW conformal 
blocks are extracted from the $SU(N)$ instanton 
partition functions on ${\IC}^2/{\IZ}_n$ with 
$\sum_{I=1}^N a_I=0$. 
}

\subsection{$U(1)$ instanton partition function}

In the $U(1)$ case, as was mentioned in Section \ref{subsec:t_ref_red_ch}, 
for generic $p$ (generic $\Omega$-background) one obtains the algebra 
$\mathcal{A}(1,n;p)=\mathcal{H} \oplus 
\widehat{\mathfrak{sl}}(n)_1$ acting on the equivariant cohomology of 
the moduli space of $U(1)$ instantons on ${\IC}^2/{\IZ}_n$. 
Let us consider the instanton partition function \eqref{inst_pf} for $N=1$ 
with vanishing Coulomb parameter $a=0$ and labelled by 
$\bN_{\sigma}=[N_0,\ldots,N_{n-1}]$, $N_i=\delta_{i \sigma}$, and $\delta k_i=0$. 
Following Corollary \ref{prop:Ngch}, the corresponding module in 
$\widehat{\mathfrak{sl}}(n)_1$ is the highest-weight module with 
$\Lambda=\Lambda_{\sigma}$. 
We define
\begin{align}
Z_{\bN_{\sigma}}^{b,b^{\prime}}(m, m^{\prime}; \fq)=
Z_{(\sigma);\boldsymbol{0}}^{b,b^{\prime}}(0, m, m^{\prime}; \fq),
\label{u1_triv_pf}
\end{align}
and make the following conjecture.

\begin{conj}
The $U(1)$ instanton partition function \eqref{u1_triv_pf} on ${\IC}^2/{\IZ}_n$ 
with $b^{\prime} = b$ and $\bN_{0}=[1,0,\ldots,0]$ is   
\begin{align}
Z_{\bN_{0}}^{b,b}(m, m^{\prime}; \fq)=
\left( 1-\fq \right) ^{\frac{m
\ll \epsilon_1+\epsilon_2-m^{\prime}\rr }{n\, \epsilon_1\, \epsilon_2}}
\left( 1-\fq \right) ^{-2\, h_{b}},
\label{u1_conj_inst}
\end{align}
where $h_{b}=h_{\bN_b}=\frac{b \left(n-b\right)}{2n}$ is the conformal 
dimension of the highest-weight state 
$\left|N_b\right>$ in the $\widehat{\mathfrak{sl}}(n)_1$ WZW model. 
The first factor is the $U(1)$ factor 
$Z_{\mathcal{H}}\left( m, m^{\prime}; \fq \right)$ in \eqref{u1_factor} 
for $N=1$, 
and the second factor is the 2-point function of 
$\widehat{\mathfrak{sl}}(n)_1$ WZW primary fields with 
highest-weights $\Lambda_{b}$ and $\Lambda_{n-b}$
\end{conj}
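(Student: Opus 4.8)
The plan is to reduce the conjecture, via the $n$-quotient (abacus) bijection, to the classical $U(1)$ instanton sum on $\IC^{2}$ with two flavours, and to track a single exponent.

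\emph{Step 1: Factor out the Heisenberg contribution.} For $N=1$ the traceless condition is automatic, so the $U(1)$ factor \eqref{u1_factor} reads $Z_{\mathcal{H}}(m,m';\fq)=(1-\fq)^{m(\epsilon_1+\epsilon_2-m')/(n\epsilon_1\epsilon_2)}$, and (as confirmed in Appendix \ref{app:agt_check}) it is an overall factor of \eqref{inst_pf}. Hence the conjecture is equivalent to showing that the reduced sum
\[
\sum_{Y^{0}}\frac{Z_{\mathrm{bif}}(m,\boldsymbol{\emptyset}^{b};0,Y^{0})\,Z_{\mathrm{bif}}(0,Y^{0};-m',\boldsymbol{\emptyset}^{b})}{Z_{\mathrm{vec}}(0,Y^{0})}\,\fq^{\,|Y^{0}|/n}=Z_{\mathcal{H}}(m,m';\fq)\,(1-\fq)^{-2h_{b}},
\]
the sum running over Young diagrams $Y$ of charge $0$ with $\delta k_{i}(Y)=0$ for all $i$, equivalently (Section \ref{sec:characterisation}) over $Y$ with empty $n$-core.

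\emph{Step 2: Decouple along the $n$-quotient.} Since $Y$ has empty $n$-core, the $n$-quotient bijection $Y\leftrightarrow(\lambda^{(0)},\dots,\lambda^{(n-1)})$ applies, with $|Y|=n\sum_{k}|\lambda^{(k)}|$, so that $\fq^{\,|Y^{0}|/n}=\prod_{k}\fq^{\,|\lambda^{(k)}|}$. A box of $Y$ whose hook length is divisible by $n$ corresponds to a box of a unique $\lambda^{(k)}$ with arm, leg and hook scaled down by $n$ (up to the standard residue shifts in arm and leg), and the $\IZ_{n}$-invariant products $\prod^{*}$ in \eqref{z_bif} retain precisely the factors meeting the residue constraints \eqref{zn_condition}. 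I therefore expect the surviving factors of $Z_{\mathrm{bif}}(m,\boldsymbol{\emptyset}^{b};0,Y^{0})$, $Z_{\mathrm{bif}}(0,Y^{0};-m',\boldsymbol{\emptyset}^{b})$ and $Z_{\mathrm{vec}}(0,Y^{0})$ to regroup into a product over $k$ of the corresponding $\IC^{2}$ $U(1)$ building blocks on $\lambda^{(k)}$, with $\Omega$-parameters $(\epsilon_{1,k},\epsilon_{2,k})$ — the tangent weights at the $k$-th torus fixed point of the $A_{n-1}$ ALE space, satisfying $\epsilon_{1,k}+\epsilon_{2,k}=\epsilon_1+\epsilon_2$ by hyperkählerity — and effective masses $\mu_{k},\mu'_{k}$ obtained from $m,m'$ by integer $\epsilon_1$-, $\epsilon_2$-shifts fixed by $k$ and the boundary charge $b$. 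Because $\delta k_{i}(Y)=0$ places no correlation among the $\lambda^{(k)}$, the reduced sum then factorises, and applying to each factor the classical $U(1)$-with-two-flavours identity (the $N=n=1$ instance of \eqref{u1_factor}),
\[
\sum_{\lambda}\frac{Z_{\mathrm{bif}}(\mu,\boldsymbol{\emptyset};0,\lambda)\,Z_{\mathrm{bif}}(0,\lambda;-\mu',\boldsymbol{\emptyset})}{Z_{\mathrm{vec}}(0,\lambda)}\,\fq^{\,|\lambda|}=(1-\fq)^{\mu(\epsilon_1+\epsilon_2-\mu')/(\epsilon_1\epsilon_2)},
\]
yields the reduced sum as $(1-\fq)^{E}$ with $E=\sum_{k=0}^{n-1}\mu_{k}(\epsilon_{1,k}+\epsilon_{2,k}-\mu'_{k})/(\epsilon_{1,k}\epsilon_{2,k})$.

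\emph{Step 3: Identify the exponent and the WZW block.} The part of $E$ independent of $b$ must reconstruct $m(\epsilon_1+\epsilon_2-m')/(n\epsilon_1\epsilon_2)$; this is the $b=0$ case, equivalently the statement implicit in the construction of the $U(1)$ factor (Appendix \ref{app:agt_check}) that the charge-$0$, boundary-charge-$0$, $\delta k=\boldsymbol{0}$ sector of the $U(1)$ theory on $\IC^{2}/\IZ_{n}$ equals $Z_{\mathcal{H}}$. The remaining, $b$-dependent part of $E$ should collapse, once the shifts are inserted and the sum telescoped, to $-b(n-b)/n=-2h_{b}$, where $h_{b}=b(n-b)/(2n)$ is the conformal dimension of the level-$1$ $\slchap{n}$ primary $\Lambda_{b}$ (cf.\ \eqref{Eq:CC-CD}). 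This gives $Z_{(0);\boldsymbol{0}}^{b,b}=Z_{\mathcal{H}}\cdot(1-\fq)^{-2h_{b}}$, and $(1-\fq)^{-2h_{b}}$ is recognised as the $\slchap{n}_{1}$ WZW two-point block of $\psi_{\Lambda_{b}}$ and $\psi_{\Lambda_{n-b}}$: the weights are conjugate so the block is non-zero, and a chiral two-point correlator of primaries of equal dimension $h_{b}$ is fixed up to normalisation by the conformal Ward identity to be $(1-\fq)^{-2h_{b}}$ in the AGT frame (cf.\ \eqref{para_WN_block} with one insertion at $\fq$).

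The main obstacle is Step 2: controlling exactly how the $\IZ_{n}$-invariant projection $\prod^{*}$ interacts with the $n$-quotient — in particular determining on which runner the boundary charge $b$ sits, hence the precise $\epsilon_1$-, $\epsilon_2$-shifts $\mu_{k},\mu'_{k}$ and the tangent weights $\epsilon_{i,k}$ — together with the telescoping in Step 3 that produces $-b(n-b)/n$. Getting the residue arithmetic and the various off-by-one shifts of arm and leg lengths under the quotient exactly right is where the real work lies; failing a clean closed form one can at least verify both sides order by order in $\fq$, as is done for the examples of Section \ref{sec:examples}.
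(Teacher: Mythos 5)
You should first be aware that the paper does not prove this statement: it is stated explicitly as a conjecture, with no argument beyond the surrounding AGT framework and the order-by-order checks of Section \ref{sec:examples}. So there is no proof of the authors' to measure yours against; the question is only whether your proposal closes the gap on its own. It does not. Steps 2 and 3, which you yourself flag as "where the real work lies", are precisely the content of the conjecture, and they are asserted as expectations rather than established.

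Concretely, the factorisation claimed in Step 2 is not automatic and, as stated, is doubtful. Under the $n$-quotient a box $\wsq$ of $Y$ with hook length divisible by $n$ maps to a box of some $\lambda^{(k)}$ with $A_Y(\wsq)=nA'+j$ and $L_Y(\wsq)=nL'+(n-1-j)$, where the residue $j$ varies from box to box; hence the projected vector-multiplet factor $-\epsilon_1 L_Y(\wsq)+\epsilon_2(A_Y(\wsq)+1)$ does not take the form $-\epsilon_{1,k}L'+\epsilon_{2,k}(A'+1)$ for parameters $(\epsilon_{1,k},\epsilon_{2,k})$ depending only on $k$, and the regrouping into $n$ standard $\IC^2$ building blocks needs a genuine combinatorial argument (the clean statement of this type is usually made for localisation on the \emph{resolved} ALE space, whose relation to the orbifold sum \eqref{inst_pf} used here is itself a nontrivial identity). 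The matter factors are worse: by \eqref{zn_condition}, the surviving boxes of $Z_{\mathrm{bif}}(m,\boldsymbol{\emptyset}^{b};0,Y^{0})$ are those $\wsq=(i,j)\in Y^0$ with $b+i-A_{Y}(\wsq)-1\equiv 0\pmod n$, a congruence coupling the arm length to the row index, which does not respect the quotient decomposition in any evident way; your "effective masses $\mu_k,\mu_k'$ obtained by integer shifts fixed by $k$ and $b$" are therefore not defined, let alone shown to exist. Finally, Step 3's telescoping of the exponent to $-b(n-b)/n=-2h_b$ is not performed, and the identification of the $b$-independent part with the $U(1)$ factor \eqref{u1_factor} invokes a statement (that $Z_{\mathcal H}$ is an exact overall factor) which is itself only checked to low order in Appendix \ref{app:agt_check}. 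What you have is a reasonable programme for proving the conjecture, consistent with the $\delta k=\boldsymbol{0}$ counting in \eqref{ex_ch_N1n_g}, but each of its pivotal assertions remains to be proved; as it stands it carries no more evidential weight than the term-by-term verification the authors already rely on.
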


\subsection{$SU(N)$ Burge-reduced instanton partition functions}

For $N\ge 2$, in the same way that we defined the Burge-reduced
generating function \eqref{red_ch_N} of coloured Young diagrams,
we now introduce a reduced version of the instanton partition
function \eqref{inst_pf} by imposing the
specialized ones \eqref{sp_burge_c} for 
the Burge conditions \eqref{Eq:Burge} with 
$\br=\bfone \in P^{++}_{N,N}$ 
and $\bs \in P^{++}_{N,N+n}$,
\begin{align}
\mathcal{Z}_{\bsig;\bell}^{\bs; \bb,\bb^{\prime}}
\left(  
\ba, \bm, \bm^{\prime}; \fq 
\right) 
=
\sum_{\bY^{\bsig}\in \mathcal{C}^{\bs}_{\bsig;\bell}}
\frac{Z_{\mathrm{bif}}\left( \bm, {\boldsymbol \emptyset}^{\bb}; \ba, \bY^{\bsig} \right) 
Z_{\mathrm{bif}}\left( \ba, \bY^{\bsig}; -\bm^{\prime}, 
{\boldsymbol \emptyset}^{\bb^{\prime}} \right) }
{Z_{\mathrm{vec}}\left( \ba, \bY^{\bsig}\right) }\,
\fq^{\,\frac{1}{n} \left|\bY^{\bsig}\right|},
\label{inst_red_pf}
\end{align}
where $\sum_{I=1}^N a_I=0$ is imposed.
The Coulomb parameters $\ba=(a_1,\ldots,a_N)$, 
and the mass parameters $\bm=(m_1,\ldots, m_N)$, 
$\bm^{\prime}=(m_1^{\prime},\ldots, m_N^{\prime})$,  
are related to the internal momenta $\bmu^{v}$, and 
the external momenta $\bmu_{r=1,2,3,4}$, of a 4-point conformal block 
in a $\mathcal{W}^{\, para}_{N, n}$ CFT, 
by the relations \eqref{coulomb_rel} and \eqref{mass_rel_base}, respectively. 
The gauge theory in the rational $\Omega$-background \eqref{omega_pp} for $p=N$, 
\begin{align}
\frac{\epsilon_1}{\epsilon_2}=-1-\frac{n}{N},
\label{omega_trivial}
\end{align}
is expected to describe a minimal model CFT whose 
momenta take values in the degenerate momenta \eqref{deg_momenta} for $r_I=1$, 
\begin{align}
\begin{split}
2\, \bmu^{v} &= - \sum_{I=1}^{N-1} 
\ll s_{I}-1\rr  \epsilon_2\, \overline{\Lambda}_I,
\\
\mathop{\Longrightarrow}\limits^{(\ref{coulomb_rel})} \quad
a_I&=a_I^{\bs}:= -\sum_{J=1}^{N-1} \inner{\overline{\Lambda}_J}{\be_I} 
\ll s_J-1-\frac{n}{N}\rr \epsilon_2
\\
&=
-\sum_{J=I}^{N-1} \ll s_J -1-\frac{n}{N}\rr \epsilon_2
+\frac{1}{N}\sum_{J=1}^{N-1} J \ll s_J-1-\frac{n}{N}\rr \epsilon_2,
\label{deg_momenta_v_triv}
\end{split}
\end{align}
parametrized by 
$\bs=[s_0,s_1,\ldots,s_{N-1}] \in P^{++}_{N,N+n}$, and
\begin{align}
\begin{split}
2\, \bmu_1 &= - \sum_{I=1}^{N-1} 
\ll s_{1,I}-1\rr  \epsilon_2\, \overline{\Lambda}_I,\quad
2\, \bmu_2 = 
- \ll s_{2,N-1}-1\rr  \epsilon_2\, \overline{\Lambda}_{N-1},
\\
2\, \bmu_4 &= - \sum_{I=1}^{N-1} 
\ll s_{4,I}-1\rr  \epsilon_2\, \overline{\Lambda}_I,\quad
2\, \bmu_3 = 
- \ll s_{3,1}-1\rr  \epsilon_2\, \overline{\Lambda}_{1},
\\
\quad \mathop{\Longrightarrow}\limits^{(\ref{mass_rel_base})} \quad
m_I&=m_I^{\bs_1,\bs_2}
:=
-\ll I-\frac{N+1}{2}\rr \frac{n}{N}\, \epsilon_2
\\
&\hspace{2.5em}+
\frac{1}{N}\ll \sum_{J=1}^{I-1}J\, \ll s_{1,J}-1\rr 
-
\sum_{J=I}^{N-1}\ll N-J\rr \ll s_{1,J}-1\rr  - \ll s_{2,N-1}-1\rr \rr \epsilon_2,
\\
m_I^{\prime}&=m_I^{\prime\, \bs_3, \bs_4}
:=
\ll I-\frac{N+1}{2}\rr \frac{n}{N}\, \epsilon_2
\\
&\hspace{2.5em}+
\frac{1}{N}\ll -\sum_{J=1}^{I-1}J\, \ll s_{4,J}-1\rr +
\sum_{J=I}^{N-1}\ll N-J\rr \ll s_{4,J}-1\rr  - 
\ll s_{3,1}-1\rr \rr \epsilon_2,
\label{deg_momenta_triv}
\end{split}
\end{align}
parametrized by 
$\bs_1=[s_{1,0},s_{1,1},\ldots,s_{1,N-1}] \in P^{++}_{N,N+n}$,
$\bs_4=[s_{4,0},s_{4,1},\ldots,s_{4,N-1}] \in P^{++}_{N,N+n}$, and
\begin{align}
\begin{split}
\bs_2&=[s_{2,0},s_{2,1},\ldots,s_{2,N-1}]=[s_{2,0},1,\ldots,1,s_{2,N-1}]
\in P^{++}_{N,N+n},
\\
\bs_3&=[s_{3,0},s_{3,1},\ldots,s_{3,N-1}]=[s_{3,0},s_{3,1},1,\ldots,1]
\in P^{++}_{N,N+n},
\label{deg_mom_c}
\end{split}
\end{align}
following $\bmu_2 \propto \overline{\Lambda}_{N-1}$ and 
$\bmu_3 \propto \overline{\Lambda}_{1}$.

\begin{remark}[Fixing $\bs_1, \bs_4$]\label{rem:fix_s}
By \eqref{Eq:sigma2s}, the (Coulomb) parameters in $\bs$ are 
determined as
$s_I=\sigma_I-\sigma_{I+1}+1$, from the ordered charges 
$\sigma_{1}\ge \ldots \ge\sigma_{N}$. 
Similarly, we fix the (mass) parameters in $\bs_1$ and $\bs_4$. 
Taking a shift by the central $U(1)$ factor in the $U(N)$ flavor symmetry,
from \eqref{mass_rel_base}, into account, 
one obtains the ${\IZ}_n$ boundary charge conditions
\begin{align}
s_{1,I}-1 \equiv b_I - b_{I+1} \quad (\mathrm{mod}\ n),
\quad 
s_{4,I}-1 \equiv b_I^{\prime} - b_{I+1}^{\prime} \quad (\mathrm{mod}\ n),
\quad I=1,\ldots,N-1.
\label{boundary_charge_condition}
\end{align}
We can then determine the independent parameters in $\bs_1$ and $\bs_{4}$ as
\begin{align}
s_{1,I} =b_I - b_{I+1} + 1, \quad
s_{4,I} =b_I^{\prime} - b_{I+1}^{\prime} + 1, 
\quad I=1,\ldots,N-1.
\label{boundary_fix}
\end{align}
The remaining independent parameters $s_{2,N-1}$ and $s_{3,1}$ in \eqref{deg_mom_c} are determined 
in Remark \ref{rem:fix_s_c}.
\end{remark}

By subtracting the $U(1)$ factor \eqref{u1_factor},
as in the case of the $\ft$-refined Burge-reduced generating function 
\eqref{inst_red_ch}, we define a Burge-reduced 
instanton partition function labelled by 
$\bN=[N_0,\ldots,N_{n-1}] \in P^{+}_{n,N}$, 
$\bell=(\ell_1,\ldots,\ell_{n-1}) \in {\IZ}^{n-1}$, and 
${\IZ}_n$ boundary charges $\bb=(b_1,\ldots,b_N)$ and 
$\bb^{\prime}=(b_1^{\prime},\ldots,b_N^{\prime})$.

\begin{defi}\label{def:norm_red_inst}
The $SU(N)$ Burge-reduced instanton partition function is defined by
\begin{align}
\widehat{\mathcal{Z}}_{\bN;\bell}^{\bb,\bb^{\prime}}( \fq ) =
Z_{\mathcal{H}}\left( \bm^{\bs_1,\bs_2}, \bm^{\prime\, \bs_3,\bs_4}; \fq \right)^{-1}\times
\mathcal{Z}_{\bsig;\bell}^{\bs; \bb,\bb^{\prime}}
\left(  \ba^{\bs}, \bm^{\bs_1,\bs_2}, \bm^{\prime\, \bs_3,\bs_4}; \fq \right).
\label{norm_red_inst_pf}
\end{align}
Here
the Coulomb parameters $\ba^{\bs}=(a_1^{\bs},\ldots,a_N^{\bs})$ 
are given by \eqref{deg_momenta_v_triv} with 
$s_I=\sigma_I-\sigma_{I+1}+1$ in \eqref{Eq:sigma2s}, and 
the mass parameters 
$\bm^{\bs_1,\bs_2}=(m_1^{\bs_1,\bs_2},\ldots, m_N^{\bs_1,\bs_2})$ and 
$\bm^{\prime\, \bs_3,\bs_4}=(m_1^{\prime\, \bs_3,\bs_4},\ldots, m_N^{\prime\, \bs_3,\bs_4})$ are given by \eqref{deg_momenta_triv} with 
$s_{1,I}$, $s_{4,I}$ in \eqref{boundary_fix} 
and $s_{2,N-1}$, $s_{3,1}$ determined in Remark \ref{rem:fix_s_c}. 
%
\end{defi}

By Corollary \ref{prop:Ngch}, the set $\bN$, determined from 
the ${\IZ}_n$ charges $\bsig$, indicates level-$N$ dominant 
integral highest-weight in $\widehat{\mathfrak{sl}}(n)_N$ WZW model. 
We propose that, in the $\widehat{\mathfrak{sl}}(n)_N$ WZW 
4-point conformal blocks, the integrable representations 
of two of the four external primary fields are also 
determined from the ${\IZ}_n$ boundary charges 
$\bb=(b_1,\ldots,b_N)$ and $\bb^{\prime}=(b_1^{\prime},\ldots,b_N^{\prime})$ by
\begin{align}
\bB=\sum_{I=1}^{N} \Lambda_{b_I}
=[B_0,B_1,\ldots,B_{n-1}],
\qquad
\bB^{\prime}=\sum_{I=1}^{N} \Lambda_{b_I^{\prime}}
=[B_0^{\prime},B_1^{\prime},\ldots,B_{n-1}^{\prime}].
\end{align}
We now represent the Burge-reduced instanton partition function 
\eqref{norm_red_inst_pf}, graphically, as
\begin{align}
\widehat{\mathcal{Z}}_{\bN;\bell}^{\bb,\bb^{\prime}}( \fq ) 
\quad 
= \quad \inc{red_inst_pf}
\label{norm_red_inst_pf_fig}
\end{align}
We also represent \eqref{norm_red_inst_pf_fig} schematically by 
$\bB - \bB_c - (\bN) - \bB_c^{\prime} - \bB^{\prime}$. 
The representations $\bB_c$ and $\bB_c^{\prime}$ of 
the remaining two of the four external primary fields need to be taken 
so that they respect the fusion rules, 
which apply from right to left in \eqref{norm_red_inst_pf_fig}, 
of the $\widehat{\mathfrak{sl}}(n)_N$ WZW model when 
$\bN$, $\bB$ and $\bB^{\prime}$ are fixed 
(see \textit{e.g.} Chapter 16 of \cite{DiFrancesco:1997nk}). 
Then, the choice of the integers $\bell$ on the left hand side of 
\eqref{norm_red_inst_pf_fig}, which indicate 
the states of internal channel following Corollary \ref{prop:Ngch}, 
is also restricted by the fusion rules of 
$\bB^{\prime}$ and $\bB_c^{\prime}$.

\begin{remark}[Fixing the remaining parameters $s_{2,N-1}, s_{3,1}$]
\label{rem:fix_s_c}
In Remark \ref{rem:fix_s}, the parameters in 
$\bs_{1}$ and $\bs_{4}$ were fixed using the ${\IZ}_n$ boundary 
charge conditions. We now fix the remaining parameters 
$s_{2,N-1}, s_{3,1}$ in \eqref{deg_mom_c} using the fusion rules. 
Let $\bb_c=(b_{c,1},\ldots,b_{c,N})$ and 
$\bb_c^{\prime}=(b_{c,1}^{\prime},\ldots,b_{c,N}^{\prime})$ be 
boundary charges associated with $\bB_c$ and $\bB_c^{\prime}$, 
respectively.%
\footnote{\,
We will not assume the ordering of the boundary charges $\bb_c$ and $\bb_c^{\prime}$.}
We propose that they satisfy the same type of 
boundary charge conditions with \eqref{boundary_charge_condition} as 
$s_{2,I}-1 \equiv b_{c,I+1} - b_{c,I}$ ($\mathrm{mod}$ $n$) and 
$s_{3,I}-1 \equiv b_{c,I}^{\prime} - b_{c,I+1}^{\prime}$ ($\mathrm{mod}$ $n$) 
for the parameters in \eqref{deg_mom_c}. 
As a result, these boundary charges are   
\begin{align}
\begin{split}
\bb_c &\equiv (b_c,b_c,\ldots,b_c,b_c+s_{2,N-1}-1)
\quad (\mathrm{mod}\ n),
\\
\bb_c^{\prime} &\equiv (b_c^{\prime}+s_{3,1}-1,b_c^{\prime},b_c^{\prime},\ldots,b_c^{\prime})
\quad (\mathrm{mod}\ n),
\label{boundary_fix_c}
\end{split}
\end{align}
where $b_c, b_c^{\prime} \in \{0,1,\ldots,n-1\}$, 
and $s_{2,N-1}$, $s_{3,1}$ should 
be determined by the fusion rules. 
For definiteness, we restrict $s_{2,N-1}, s_{3,1} \in \{1,\ldots,n\}$, and 
if $N=2$ we take $b_c+s_{2,1} \le n$, $b_c^{\prime}+s_{3,1} \le n$ so that 
the boundary charges are $\bb_c = (b_c,b_c+s_{2,1}-1)$ and 
$\bb_c^{\prime} = (b_c^{\prime}+s_{3,1}-1,b_c^{\prime})$. 
\end{remark}

\subsection{Conjectures}

We propose the following conjectures on the relation between 
the $SU(N)$ Burge-reduced instanton 
partition functions \eqref{norm_red_inst_pf} on ${\IC}^2/{\IZ}_n$ 
and the $\widehat{\mathfrak{sl}}(n)_N$ WZW conformal blocks.
To describe our conjectures, we represent 
$\Lambda=[d_0,d_1,\ldots,d_{n-1}]\in P^{+}_{n,N}$ 
as a Young diagram 
by a partition $\lambda=\partit(\Lambda)$ using \eqref{Eq:Wt2Par}.

\begin{conj}[$\emptyset - \emptyset - (\emptyset) - \emptyset - \emptyset$]
\label{conj:ex1}
The $\widehat{\mathfrak{sl}}(n)_N$ WZW 2-point conformal block of the type 
$$
\left<\emptyset(1)\,  \emptyset(\fq) \right>_{\IP^1}^{\widehat{\mathfrak{sl}}(n)_N}
$$
agrees with the following Burge-reduced instanton partition function
\begin{align}
\label{Ex_1_fig}
\widehat{\mathcal{Z}}_{[N,0,\ldots,0];\mathbf{0}}^{\mathbf{0},\mathbf{0}}( \fq ) 
\quad = \quad 
\inc{Ex_1}
\quad = \quad 
\left(  1-\fq \right) ^{-2h_{\emptyset}}=1.
\end{align}
Here $\bs=\bs_1=\bs_2=\bs_3=\bs_4=[n+1,1,\ldots,1]$ 
are fixed by \eqref{Eq:sigma2s}, \eqref{boundary_fix} and \eqref{boundary_fix_c}, 
and $h_{\emptyset}=0$ is the conformal dimension for the representation $\emptyset=[N,0,\ldots,0]$.
\end{conj}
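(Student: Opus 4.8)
The plan is to show directly that the sum \eqref{inst_red_pf} defining
$\mathcal{Z}_{\bsig;\mathbf{0}}^{\bs;\mathbf{0},\mathbf{0}}$ collapses to its single
$\bY=(\emptyset,\ldots,\emptyset)$ term. First I would fix all the parameters:
since $\bN=[N,0,\ldots,0]$ forces $\bsig=(0,\ldots,0)$, equations \eqref{Eq:sigma2s},
\eqref{Eq:sigma2s0}, \eqref{boundary_fix} and \eqref{boundary_fix_c} give
$\bs=\bs_1=\bs_2=\bs_3=\bs_4=[n+1,1,\ldots,1]$, and substituting these into
\eqref{deg_momenta_v_triv} and \eqref{deg_momenta_triv} yields the uniform values
\begin{equation*}
a_I^{\bs}=m_I^{\bs_1,\bs_2}=-m_I^{\prime\,\bs_3,\bs_4}
=-\Bigl(I-\tfrac{N+1}{2}\Bigr)\tfrac{n}{N}\,\epsilon_2,
\qquad I=1,\ldots,N.
\end{equation*}
In particular $\sum_{I=1}^{N}m_I^{\bs_1,\bs_2}=0$, so the $U(1)$ factor \eqref{u1_factor}
equals $(1-\fq)^{0}=1$; by \eqref{norm_red_inst_pf} it therefore suffices to prove
$\mathcal{Z}_{\bsig;\mathbf{0}}^{\bs;\mathbf{0},\mathbf{0}}\bigl(\ba^{\bs},\bm^{\bs_1,\bs_2},
\bm^{\prime\,\bs_3,\bs_4};\fq\bigr)=1$, which since $h_{\emptyset}=0$ is exactly the claimed
$(1-\fq)^{-2h_{\emptyset}}$.

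The tuple $\bY=(\emptyset,\ldots,\emptyset)$ lies in $\mathcal{C}^{\bs}_{\bsig;\mathbf{0}}$ and
all three blocks in its summand are empty products, so it contributes $1$; I must therefore show
that every \emph{nonempty} $\bY\in\mathcal{C}^{\bs}_{\bsig;\mathbf{0}}$ contributes $0$.
By Proposition \ref{prop:Burge} the denominator $Z_{\mathrm{vec}}(\ba^{\bs},\bY)$ is nonzero
(and polynomial, hence finite), so I would exhibit a vanishing factor in the first numerator block
$Z_{\mathrm{bif}}(\bm^{\bs_1,\bs_2},{\boldsymbol \emptyset}^{\mathbf{0}};\ba^{\bs},\bY)$ of
\eqref{inst_red_pf}, which by \eqref{z_bif} equals
$\prod_{I,J=1}^{N}\prod_{\wsq\in Y_J}^{*}\bigl(\epsilon_1+\epsilon_2-E(m^{\bs_1,\bs_2}_I-a^{\bs}_J,Y_J(\wsq),\emptyset)\bigr)$.
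Because $s_1=\cdots=s_{N-1}=1$, the Burge conditions \eqref{sp_burge_c} force
$Y_1\supseteq Y_2\supseteq\cdots\supseteq Y_N$, so $Y_1\neq\emptyset$ and its first row ends in a box
$\wsq_0=(1,Y_{1,1})$. For $I=J=1$ one has $m^{\bs_1,\bs_2}_1-a^{\bs}_1=0$, $A_{Y_1}(\wsq_0)=0$ and
$L_{\emptyset}(\wsq_0)=-1$, so the corresponding factor is
$\epsilon_1\bigl(1+L_{\emptyset}(\wsq_0)\bigr)-\epsilon_2 A_{Y_1}(\wsq_0)=0$.

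The only point that needs care is whether this factor survives the $\mathbb{Z}_n$-projection
$\prod^{*}$, i.e.\ obeys the second congruence of \eqref{zn_condition}: for the pair
$({\boldsymbol \emptyset}^{\mathbf{0}},\bY^{\bsig})$ that congruence reads
$b_1-\sigma_1-L_{\emptyset}(\wsq_0)-A_{Y_1}(\wsq_0)-1\equiv0\ (\mathrm{mod}\ n)$, which with
$L_{\emptyset}(\wsq_0)=-1$ and $A_{Y_1}(\wsq_0)=0$ collapses to $b_1\equiv\sigma_1\ (\mathrm{mod}\ n)$,
true here because $\bb=\mathbf{0}$ and $\bsig=\mathbf{0}$. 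Hence
$Z_{\mathrm{bif}}(\bm^{\bs_1,\bs_2},{\boldsymbol \emptyset}^{\mathbf{0}};\ba^{\bs},\bY)=0$ for every
nonempty $\bY$, the sum collapses to its $\bY=\emptyset$ term, and the claim follows.
I expect the genuine obstacle to lie not in this vacuum case but in the companion Conjectures
\ref{conj:ex2} and \ref{conj:ex3}, where the boundary charges and external weights are nontrivial,
the box $\wsq_0$ no longer sits on a projection-compatible colour, and one must instead scan all
triples $(I,J,\wsq)$ for a zero of one of the two numerator blocks that simultaneously solves the
congruences imposed by $\prod^{*}$ — a colour-matching problem controlled by the Burge data and,
through the bijection in the proof of Proposition \ref{Prop:Chern=Char}, by the constraint
$\bell=\mathbf{0}$.
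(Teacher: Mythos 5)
Your argument is correct, but it is worth being explicit that it does something the paper never does: the paper states this as a conjecture and only \emph{verifies} it numerically, for $(N,n)=(2,2),(2,3),(3,2)$ up to $O(\fq^{6})$, $O(\fq^{5})$, $O(\fq^{11/2})$ respectively (and $(2,4)$ up to $O(\fq^{5})$), in Examples \ref{ex1:inst}, \ref{ex4:inst} and \ref{ex7:inst}. Your term-by-term vanishing argument is a genuine proof for all $(N,n)$. The parameter specialisation is right: with $\bsig=\bb=\bb'=\mathbf{0}$ one gets $s_J=s_{1,J}=s_{4,J}=1$ and $s_{2,N-1}=s_{3,1}=1$, so \eqref{deg_momenta_v_triv} and \eqref{deg_momenta_triv} indeed give $m_I^{\bs_1,\bs_2}=a_I^{\bs}=-m_I^{\prime\,\bs_3,\bs_4}$, $\sum_I m_I=0$, and a trivial $U(1)$ factor. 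The two delicate points are exactly the ones you isolate and both check out: (i) the factor attached to $(I,J)=(1,1)$ and $\wsq_0=(1,Y_{1,1})$ in the second product of \eqref{z_bif} is $-(m_1-a_1)+\epsilon_1\bigl(1+L_{\emptyset}(\wsq_0)\bigr)-\epsilon_2 A_{Y_1}(\wsq_0)=0$, and it survives the $\IZ_n$ projection because the relevant congruence in \eqref{zn_condition} reduces to $b_1\equiv\sigma_1\pmod n$, which holds for $\bb=\bsig=\mathbf{0}$; (ii) this zero is not compensated, since the numerator blocks are polynomials and Proposition \ref{prop:Burge} (applicable because $a_I^{\bs}=a_I^{\bfone,\bs}$ and $\burgecolset{\bs}{\bsig}=\burgecolset{\bfone,\bs}{\bsig}$) guarantees $Z_{\mathrm{vec}}(\ba^{\bs},\bY)\neq0$ on the Burge set. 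The use of $s_1=\cdots=s_{N-1}=1$ to force $Y_1\supseteq\cdots\supseteq Y_N$, hence $Y_1\neq\emptyset$ for any nonempty tuple, closes the argument. In short, what the paper buys with Mathematica in low rank, your argument buys once and for all — and, as you note, the same strategy does not transfer verbatim to Conjectures \ref{conj:ex2} and \ref{conj:ex3}, where the mass/Coulomb alignment $m_I=a_I$ fails and one instead expects a nontrivial hypergeometric series rather than wholesale vanishing.
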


\begin{conj}[$\emptyset - {[}N-1,0,\ldots,0,1{]} - (\tableau{1}) - \tableau{1} - \emptyset$]
\label{conj:ex2}
The $\widehat{\mathfrak{sl}}(n)_N$ WZW 2-point conformal block of the type 
$$
\left<\overline{\tableau{1}}(1)\,  \tableau{1}(\fq) \right>_{\IP^1}^{\widehat{\mathfrak{sl}}(n)_N}
$$
agrees with the following Burge-reduced instanton partition function
\begin{align}
\label{Ex_2_fig}
\widehat{\mathcal{Z}}_{[N-1,1,0\ldots,0];\mathbf{0}}^{\mathbf{0},\mathbf{0}}(\fq) 
\quad = \quad 
\inc{Ex_2}
\quad = \quad 
\left(  1-\fq \right) ^{-2h_{\tableau{1}}}.
\end{align}
Here $\bs=\bs_3=[n,2,1,\ldots,1]$,
$\bs_1=\bs_4=[n+1,1,\ldots,1]$ and $s_{2,N-1}=n$ are fixed by 
\eqref{Eq:sigma2s}, \eqref{boundary_fix} and \eqref{boundary_fix_c}, 
and $h_{\tableau{1}}=\frac{n^2-1}{2n(n+N)}$ 
is the conformal dimension for the representation 
$\tableau{1}=[N-1,1,0\ldots,0]$.
\end{conj}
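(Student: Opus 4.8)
The plan is to reduce the claimed identity to a purely representation-theoretic statement and then match it against the left-hand side by upgrading the bijection already established in Proposition~\ref{Prop:Chern=Char}. The right-hand side requires no work: $\langle\overline{\tableau{1}}(1)\,\tableau{1}(\fq)\rangle$ is a two-point function of a primary with its conjugate, so global conformal invariance fixes it completely to $(1-\fq)^{-2h_{\tableau{1}}}$ (all affine descendant contributions included), with $h_{\tableau{1}}=\frac{n^2-1}{2n(n+N)}$ the conformal dimension recorded in~\eqref{Eq:CC-CD}; inserting the identities at $0$ and $\infty$ changes nothing and the $\fq^0$ coefficient fixes the normalisation to $1$. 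Hence the entire content of the conjecture is that the Burge-reduced instanton sum~\eqref{norm_red_inst_pf}, after dividing out $Z_{\mathcal H}$ and specialising $\ba,\bm,\bm'$ to the stated degenerate values, equals the elementary series $\sum_{k\ge0}\binom{2h_{\tableau{1}}+k-1}{k}\,\fq^{k}$.

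For the left-hand side, the starting point is that at $p=N$ the coset (parafermion) central charge in~\eqref{central_charge} vanishes, so one expects the instanton partition function to factor into a $U(1)$ block, a trivial coset block, and an $\widehat{\mathfrak{sl}}(n)_N$ block; the substance of the conjecture is that this factorisation really occurs at the level of the sum over coloured Young diagrams. The Burge data here --- $\br=\bfone$, $\bs=\bs_3=[n,2,1,\dots,1]$, $\bs_1=\bs_4=[n+1,1,\dots,1]$, $s_{2,N-1}=n$ --- is exactly what removes the null states of the trivial $\mathcal{W}^{\,para}_{N,n}$ $(N,N+n)$-minimal model and what selects the internal weight $\bN=[N-1,1,0,\dots,0]$ through~\eqref{Eq:sigma2s}, together with the external weights $\bB=\bB'=\emptyset$ and $\bB_c=\overline{\tableau{1}}$, $\bB_c'=\tableau{1}$. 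Proposition~\ref{Prop:Chern=Char} already proves the desired factorisation for the \emph{unweighted} generating function $\burgecolfun{\bs}{\bsig}(\qvar,\bft)$ via the crystal-graph bijection $\multicol{\bsig}\to\burgeset{\bfone,\bs}$; what must be upgraded is the tracking of the full $\Omega$-deformed weights $Z_{\mathrm{bif}}/Z_{\mathrm{vec}}$ of~\eqref{inst_red_pf} through that bijection. I would attempt this by induction on $|\bY|$: Proposition~\ref{prop:Burge} already guarantees that the specialised summands carry no poles on $\mathcal C^{\bs}_{\bsig;\mathbf 0}$, and one would then show that at these parameter values all fractional powers $\fq^{j/n}$ with $n\nmid j$ cancel and that the surviving ratios combine, in $\fq$-degree $k$, to $\binom{2h_{\tableau{1}}+k-1}{k}$.

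A complementary route, worth developing in parallel, exploits that the internal channel is fixed to $\bell=\mathbf 0$: then $\mathcal C^{\bs}_{\bsig;\mathbf 0}$ is governed by the single $\widehat{\mathfrak{sl}}(n)$ string function $\xa^{\Lambda}_{\mathbf 0}(\qvar)$ with $\Lambda=(N-1)\fwt_0+\fwt_1$ (Corollary~\ref{Cor:Chern=Char1}), and the $N$-tuple sum in~\eqref{inst_red_pf} can be reorganised into coupled single-diagram sums, for which the relevant building blocks are completely explicit (Corollary~\ref{Cor:Chern=Char3}). One would then try to resum the $\Omega$-deformed sum in closed form; this is precisely the mechanism behind the checks in Section~\ref{sec:examples} for $(N,n)=(2,2),(2,3),(3,2)$, and the remaining task would be to make it work uniformly in $N$ and $n$.

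I expect the main obstacle to be exactly this upgrade. The weights $Z_{\mathrm{bif}}$ and $Z_{\mathrm{vec}}$ are not manifestly compatible with the crystal/cylindric-partition recursion underlying Proposition~\ref{Prop:Chern=Char} --- the Kashiwara operators act on box colours, not on the arm/leg data that enter the $E$-functions --- so the induction on box number (or, alternatively, a Zamolodchikov-type recursion in $\fq$ matched against the known residue structure of WZW two-point blocks) will need genuinely new combinatorial input beyond what is presently in the paper. This is presumably why the statement is left as a conjecture and only tested in Section~\ref{sec:examples}.
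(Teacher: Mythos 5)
You should first be clear that the paper does not prove this statement: it is stated as Conjecture \ref{conj:ex2} and supported only by finite-order checks in Section \ref{sec:examples} (Examples \ref{ex2:inst}, \ref{ex5:inst} and \ref{ex8:inst}, for $(N,n)=(2,2),(2,3),(3,2)$, each verified to low order in $\fq$). Your reduction of the right-hand side is correct --- a two-point block of conjugate primaries inserted at $1$ and $\fq$, with identities at $0$ and $\infty$, is forced by global conformal invariance to equal $(1-\fq)^{-2h_{\tableau{1}}}$ with $h_{\tableau{1}}=\frac{n^2-1}{2n(n+N)}$ --- and your identification of the parameter data ($\br=\bfone$, $\bs=\bs_3=[n,2,1,\ldots,1]$, $\bs_1=\bs_4=[n+1,1,\ldots,1]$, $s_{2,N-1}=n$, internal weight $[N-1,1,0,\ldots,0]$ via \eqref{Eq:sigma2s}) is consistent with Remarks \ref{rem:fix_s} and \ref{rem:fix_s_c}. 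So the framing is right: the entire content of the statement is that the Burge-reduced, $Z_{\mathcal{H}}$-normalised sum \eqref{norm_red_inst_pf} collapses to that binomial series.

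That collapse is exactly where your proposal stops, and it is a genuine gap rather than a routine verification. Proposition \ref{Prop:Chern=Char} controls only the unweighted generating function $\qvar^{|\bY|/n}$ over $\burgecolset{\bs}{\bsig}$; the crystal/cylindric bijection underlying it says nothing about the rational weights $Z_{\mathrm{bif}}/Z_{\mathrm{vec}}$, which depend on arm/leg data that the bijection does not preserve. Likewise Proposition \ref{prop:Burge} guarantees only that the individual summands are pole-free at the degenerate values, not that they resum in closed form. Neither of your two proposed routes (induction on $|\bY|$ through the bijection, or channel-by-channel resummation via the string functions) is actually carried out, and your final paragraph concedes that new combinatorial input would be required. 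The conclusion is that your proposal reproduces the paper's state of knowledge --- a well-posed conjecture with a plausible line of attack and low-rank evidence --- but it is not a proof and should not be presented as one. If you want to go beyond the paper, the most concrete target is the Zamolodchikov-type recursion you mention, matched against the residue structure at the degenerate Coulomb values \eqref{deg_momenta_v_triv}, since that is the only place where the special choice $p=N$ enters the analytic structure of the summands.
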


\begin{conj}[$\tableau{1} - \tableau{1} - 
(\emptyset\ \textrm{or}\ {[}N-2,1,0,\ldots,0,1{]}) - \tableau{1} - {[}N-1,0,\ldots,0,1{]}$]
\label{conj:ex3}
The $\widehat{\mathfrak{sl}}(n)_N$ WZW 4-point conformal blocks of the type 
$$
\left<\overline{\tableau{1}}(\infty)\,  \tableau{1}(1)\,  \tableau{1}(\fq)\, 
\overline{\tableau{1}}(0) \right>_{\IP^1}^{\widehat{\mathfrak{sl}}(n)_N},
$$
which are    \eqref{wzw_bc_kz_trans} in Appendix \ref{app:wzw_4pt}, 
agree with, up to certain overall factors, 
the following Burge-reduced instanton partition functions,%
\footnote{\,
(\ref{Ex_3_fig}) and (\ref{Ex_4_fig}) correspond to, respectively, 
the 4-point WZW conformal blocks 
$\widehat{\mathcal{F}}_{i=1,2}^{(0)}(\fq)$ and 
$\widehat{\mathcal{F}}_{i=2,1}^{(1)}(\fq)$ 
in (\ref{wzw_bc_kz_trans}).}
\begin{align}
\label{Ex_3_fig}
&
\widehat{\mathcal{Z}}_{[N,0,\ldots,0];
\bell}^{(1,0,\ldots,0),(n-1,0,\ldots,0)}(\fq) 
\quad = \quad 
\inc{Ex_3}
\\
&=
\begin{cases}
\left( 1-\fq \right) ^{2h_{\tableau{1}}-\frac{n+1}{n+N}}
{_{2}F_{1}}\left( -\frac{1}{n+N},\frac{N-1}{n+N};\frac{N}{n+N};\fq\right) ,
\quad& \textrm{for}\ \  \bell=\mathbf{0},
\\
\frac{1}{N}\, \fq^{\frac{1}{n}}
\left( 1-\fq \right) ^{2h_{\tableau{1}}-\frac{n+1}{n+N}}
{_{2}F_{1}}\left( \frac{N-1}{n+N},1-\frac{1}{n+N};1+\frac{N}{n+N};\fq\right) ,
\quad& \textrm{for}\ \  \bell=(-1,\dots,-1),
\end{cases}
\nonumber
\end{align}
and
\begin{align}
\label{Ex_4_fig}
&
\widehat{\mathcal{Z}}_{[N-2,1,0,\ldots,0,1];
\bell}^{(1,0,\ldots,0),(n-1,0,\ldots,0)}(\fq) 
\quad = \quad 
\inc{Ex_4}
\\
&=
\begin{cases}
\left( 1-\fq \right) ^{2h_{\tableau{1}}-\frac{n+1}{n+N}}
{_{2}F_{1}}\left( -\frac{1}{n+N},\frac{n-1}{n+N};\frac{n}{n+N};\fq\right) ,
\quad& \textrm{for}\ \  \bell=\mathbf{0},
\\
\frac{1}{n}\, \fq^{1-\frac{1}{n}} 
\left( 1-\fq \right) ^{2h_{\tableau{1}}-\frac{n+1}{n+N}}
{_{2}F_{1}}\left( \frac{n-1}{n+N},1-\frac{1}{n+N};1+\frac{n}{n+N};\fq\right) ,
\quad& \textrm{for}\ \  \bell=(1,\dots,1).
\end{cases}
\nonumber
\end{align}
Here, by \eqref{Eq:sigma2s}, \eqref{boundary_fix} and \eqref{boundary_fix_c}, 
for \eqref{Ex_3_fig}  
$\bs=[n+1,1,\ldots,1]$, $\bs_1=\bs_3=[n,2,1,\ldots,1]$,
$\bs_4=[2,n,1,\ldots,1]$ and $s_{2,N-1}=2$ are fixed, 
and for \eqref{Ex_4_fig} $\bs=[2,n-1,2,1,\ldots,1]$, $\bs_1=\bs_3=[n,2,1,\ldots,1]$, $\bs_4=[2,n,1,\ldots,1]$ 
and $s_{2,N-1}=2$ are fixed, 
where when $N=2$, $[2,n-1,2,1,\ldots,1]$ means $[3,n-1]$. 
The integers $\bell=\bdel \bk$ are taken 
so that the corresponding modules on the CFT side, 
following Corollary \ref{prop:Ngch}, are in the fundamental chamber under 
the action of affine Weyl group of $\widehat{\mathfrak{sl}}(n)$, and 
the second ones in \eqref{Ex_3_fig} and \eqref{Ex_4_fig} respect 
the fusion rules by 
\begin{align}
\begin{split}
\bN&=
\left[N, 0,\ldots,0\right]=\emptyset
\quad\mathop{\longrightarrow}\limits^{\bdel \bk=(-1,\dots,-1)}\quad
\bfc=\left[N-2,1,0,\ldots,0,1\right],
\\
\bN&=
\left[N-2,1,0,\ldots,0,1\right]
\quad\mathop{\longrightarrow}\limits^{\bdel \bk=(1,\dots,1)}\quad
\bfc=\left[N, 0,\ldots,0\right]=\emptyset,
\label{fusion_check}
\end{split}
\end{align}
where
$\bfc=[\ffc_0,\ffc_1,\ldots,\ffc_{n-1}]$ are defined by 
the Chern classes \eqref{fc_label}. 
When $n=2$, $\left[N-2,1,0,\ldots,0,1\right]$ means 
$\left[N-2, 2\right]=\tableau{2}$ and then $\bsig=(1,1,0,\ldots,0)$.
\end{conj}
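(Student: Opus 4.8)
The plan is to derive the claim from the AGT-type correspondence for $SU(N)$ gauge theory on $\IC^2/\IZ_n$ proposed in Section \ref{sec:agt_rel}, specialised so that the coset factor trivialises, and then to pin down the explicit hypergeometric form using the Knizhnik--Zamolodchikov equation.

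First I would feed the degenerate parameters \eqref{deg_momenta_v_triv}--\eqref{deg_momenta_triv} into the instanton partition function \eqref{inst_pf}. Under the dictionary \eqref{coulomb_rel}, \eqref{mass_rel_base}, and after dividing out the $U(1)$ factor $Z_{\mathcal{H}}$ of \eqref{u1_factor} as in \eqref{norm_red_inst_pf}, the $SU(N)$ instanton partition function on $\IC^2/\IZ_n$ should equal a $4$-point conformal block of the combined $\mathcal{H}\oplus\widehat{\mathfrak{sl}}(n)_N\oplus\mathcal{W}^{\,para}_{N,n}$ theory. At $p=N$ one has $c(\mathcal{W}^{\,para}_{N,n})=0$, and for the degenerate momenta \eqref{deg_momenta} with $r_I=1$ the $\mathcal{W}^{\,para}$ factor contributes trivially (its block reduces to $1$), so only the $\widehat{\mathfrak{sl}}(n)_N$ WZW block survives. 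The Burge conditions \eqref{sp_burge_c} are precisely those that remove the non-physical poles (Proposition \ref{prop:Burge}), so restricting the instanton sum to $\mathcal{C}^{\bs}_{\bsig;\bell}$ replaces the reducible Verma-type contribution by the irreducible integrable module; that this restriction is combinatorially correct at the level of characters is Proposition \ref{Prop:Chern=Char} and Corollary \ref{prop:Ngch}. In particular $\bN$ (equivalently $\bsig$) labels the internal integrable weight, the $\IZ_n$ boundary charges $\bb,\bb^{\prime}$ give the external weights $\bB=\sum_I\Lambda_{b_I}$, $\bB^{\prime}=\sum_I\Lambda_{b_I^{\prime}}$, and $\bB_c,\bB_c^{\prime}$ together with $\bell$ are forced by the $\widehat{\mathfrak{sl}}(n)_N$ fusion rules -- this yields the identification of which block is computed, the fusion constraints \eqref{fusion_check}, and the values of $s_{2,N-1},s_{3,1}$ of Remark \ref{rem:fix_s_c}.

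Next, for the explicit evaluation I would use that all four external fields are (anti)fundamental, so the relevant $\widehat{\mathfrak{sl}}(n)_N$ $4$-point blocks are exactly the blocks \eqref{wzw_bc_kz_trans} of Appendix \ref{app:wzw_4pt}, computed in \cite{Knizhnik:1984nr}: they solve the KZ equation, which for this configuration reduces to a hypergeometric ODE in $\fq$ with parameters built from $k+h^{\vee}=N+n$, whose two solutions normalised at $\fq=0$ are the two branches of \eqref{Ex_3_fig} and \eqref{Ex_4_fig} (the $\bell=\mathbf{0}$ branch being the internal channel $\emptyset$, the shifted $\bell$ branch the channel $[N-2,1,0,\ldots,0,1]$). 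It then remains to show that $\widehat{\mathcal{Z}}_{\bN;\bell}^{\bb,\bb^{\prime}}(\fq)$ of \eqref{norm_red_inst_pf} equals the matching KZ solution: I would extract its leading power of $\fq$ (checking it reproduces the prefactor $(1-\fq)^{2h_{\tableau{1}}-\frac{n+1}{n+N}}$ together with $\fq^{0}$, $\fq^{1/n}$ or $\fq^{1-1/n}$ as appropriate), and then either (a) prove directly that this sum satisfies the same second-order hypergeometric ODE, or (b) check equality of enough Taylor coefficients, expanding the Burge-constrained sum box by box using the explicit $Z_{\mathrm{bif}}$, $Z_{\mathrm{vec}}$ of \eqref{z_bif}--\eqref{inst_vec_pf} at the degenerate parameters.

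The hard part will be making step two rigorous. The orbifold AGT correspondence with $\mathcal{W}^{\,para}_{N,n}$ symmetry is itself only conjectural, and even granting it one must control the $p\to N$ limit so that the $\mathcal{W}^{\,para}$ block genuinely trivialises rather than producing a $0\cdot\infty$ ambiguity, and must show that the Burge-restricted instanton sum -- not merely its principal or character specialisation -- reproduces the WZW block built on irreducible modules. Absent such a theorem, a self-contained proof would have to follow route (a): establish directly that $\widehat{\mathcal{Z}}_{\bN;\bell}^{\bb,\bb^{\prime}}(\fq)$ satisfies the hypergeometric ODE, for instance by a creative-telescoping (Zeilberger-type) argument applied to the Burge-constrained instanton sum. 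This is the genuinely difficult and, within the scope of the present paper, unresolved step; it is why the statement is recorded as a conjecture and verified directly for $(N,n)=(2,2),(2,3),(3,2)$ in Section \ref{sec:examples}.
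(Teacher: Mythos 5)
Your proposal accurately mirrors the paper's own treatment: the statement is recorded as a conjecture precisely because the chain (orbifold AGT $\to$ coset trivialization at $p=N$ $\to$ Burge reduction $\to$ KZ hypergeometric blocks of Appendix \ref{app:wzw_4pt}) is heuristic at exactly the step you flag, and the paper's only support is the finite-order verification for $(N,n)=(2,2),(2,3),(3,2)$ in Section \ref{sec:examples} (Examples \ref{ex3:inst}, \ref{ex6:inst}, \ref{ex9:inst}, plus the $(2,4)$ check noted in a footnote). Your identification of the genuinely open step --- proving directly that the Burge-restricted instanton sum satisfies the relevant hypergeometric ODE --- coincides with what the paper leaves undone, so there is nothing to correct.
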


\section{Examples of $SU(N)$ Burge-reduced instanton counting on 
${\IC}^2/{\IZ}_n$}\label{sec:examples}

\textit{\noindent
We illustrate the statement of Corollary \ref{prop:Ngch} 
and check Conjectures \ref{conj:ex1}, \ref{conj:ex2} and \ref{conj:ex3} for 
$(N,n)=(2,2), (2,3)$ and $(3,2)$. In particular we demonstrate how 
one can extract their 
$\widehat{\mathfrak{sl}}(n)_N$ WZW conformal blocks from 
the Burge-reduced instanton partition functions.}%
\footnote{\,
The computations in this section heavily rely on Mathematica. 
We have also checked Conjectures \ref{conj:ex1}, \ref{conj:ex2} 
and \ref{conj:ex3} for $(N,n)=(2,4)$ up to $O(\fq^{5})$.
}

\subsection{$(N,n)=(2,2)$ and $\widehat{\mathfrak{sl}}(2)_2$ WZW model}

For  $(N,n)=(2,2)$, there are three highest-weight representations
\begin{align}
\emptyset=[2,0],\quad
\tableau{1}=[1,1],\quad
\tableau{2}=[0,2],
\end{align}
with conformal dimensions 
\begin{align}
h_{[k_0,k_1]}=\frac{k_1 \, (k_1+2)}{16}:\quad
h_{\emptyset}=0,\quad
h_{\tableau{1}}=\frac{3}{16},\quad
h_{\tableau{2}}=\frac{1}{2}.
\end{align}

\subsubsection{Burge-reduced generating functions of coloured Young diagrams}

The $\ft$-refined Burge-reduced generating functions \eqref{inst_red_ch} for $(N,n)=(2,2)$ are obtained as
\begin{align}
\begin{split}
\widehat{X}_{[2,0]}^{\mathrm{red}}(\fq, \ft)&=
\left( \fq;\fq \right)_{\infty}\, 
\sum_{\ell \in {\IZ}}
{X}_{(0,0);(-\ell)}^{[3,1]}(\fq)\, \ft^{2\ell} 
=
X^{[2,0]}_{[2,0]}(\fq)\, f_{0}(\fq, \hat{\ft} )
+
X^{[2,0]}_{[0,2]}(\fq)\, f_{1}(\fq, \hat{\ft} ),
\\
\widehat{X}_{[0,2]}^{\mathrm{red}}(\fq, \ft)&=
\left( \fq;\fq \right)_{\infty}\, 
\sum_{\ell \in {\IZ}}
X_{(1,1);(-\ell)}^{[3,1]}(\fq)\, \ft^{2\ell+2} 
=
X^{[0,2]}_{[0,2]}(\fq)\, f_{1}(\fq, \hat{\ft} )
+
X^{[0,2]}_{[2,0]}(\fq)\, f_{0}(\fq, \hat{\ft} ),
\\
\widehat{X}_{[1,1]}^{\mathrm{red}}(\fq, \ft)&=
\left( \fq;\fq \right)_{\infty}\, 
\sum_{\ell \in {\IZ}}
{X}_{(1,0);(-\ell)}^{[2,2]}(\fq)\, \ft^{2\ell+1}=
X^{[1,1]}_{[1,1]}(\fq)\,
g(\fq, \hat{\ft} ),
\label{N2n2_character}
\end{split}
\end{align}
where $\hat{\ft} =\fq^{-\frac14}\, \ft$, 
\begin{align}
\begin{split}
X^{[2,0]}_{[2,0]}(\fq)&=
1 + \fq + 3 \fq^2 + 5 \fq^3 + 10 \fq^4 + 16 \fq^5 + 28 \fq^6 + 43 \fq^7
+ 70 \fq^8 + 105 \fq^9 + 161 \fq^{10} + \cdots \,  \, ,
\\
X^{[2,0]}_{[0,2]}(\fq)&=
\fq^{\frac12} + 
2 \fq^{\frac32} + 4 \fq^{\frac52}+7 \fq^{\frac72}+13 \fq^{\frac92}+21 
\fq^{\frac{11}{2}}+35 \fq^{\frac{13}{2}}+55 \fq^{\frac{15}{2}}+ 86 \fq^{\frac{17}{2}}+130 
\fq^{\frac{19}{2}}+\cdots \,  \, ,
\\
X^{[1,1]}_{[1,1]}(\fq)&=
1+2 \fq+4 \fq^2+8 \fq^3+14 \fq^4+24 \fq^5+40 \fq^6+64 \fq^7+100 \fq^8 +\cdots\,  \, ,
\\
X^{[0,2]}_{[0,2]}(\fq)&=X^{[2,0]}_{[2,0]}(\fq), \quad
X^{[0,2]}_{[2,0]}(\fq)=X^{[2,0]}_{[0,2]}(\fq),
\label{string_inst_22}
\end{split}
\end{align}
and
\begin{align}
f_{\sigma}(\fq, \hat{\ft} )=
\sum_{j \in 4\, {\IZ}+2\, \sigma}
\fq^{\frac18\, j^2}\, \hat{\ft}^{\, j}, \quad \sigma=0,1,
\quad
g(\fq, \hat{\ft} )=
\sum_{j \in 2\, {\IZ}+1}
\fq^{\frac18\, j^2 + \frac18}\, \hat{\ft}^{\, j}.
\end{align}
The Burge-reduced generating functions \eqref{N2n2_character} agree with 
the $\widehat{\mathfrak{sl}}(2)_2$ WZW characters computed by \eqref{wzw_ch_formula},
\begin{align}
\widehat{X}_{[2,0]}^{\mathrm{red}}(\fq, \ft)=
\chi_{[2,0]}^{\widehat{\mathfrak{sl}}(2)_2}( \fq, \hat{\ft} ) ,
\quad
\widehat{X}_{[0,2]}^{\mathrm{red}}(\fq, \ft)=
\chi_{[0,2]}^{\widehat{\mathfrak{sl}}(2)_2}( \fq, \hat{\ft} ) ,
\quad
\widehat{X}_{[1,1]}^{\mathrm{red}}(\fq, \ft)=
\fq^{\frac{1}{16}}\,
\chi_{[1,1]}^{\widehat{\mathfrak{sl}}(2)_2}( \fq, \hat{\ft} ) ,
\end{align}
and Corollary \ref{prop:Ngch} is confirmed. Up to an overall factor, 
the functions \eqref{string_inst_22} are 
the $\widehat{\mathfrak{sl}}(2)$ string functions of level-2 
in \cite{Kac:1984mq} and given by
(\textit{cf.} Corollary \ref{Cor:Chern=Char1}),
\begin{align}
X^{[2,0]}_{[2,0]}(\fq) + X^{[2,0]}_{[0,2]}(\fq)=
\frac{\left( -\fq^{\frac12};\fq \right)_{\infty}}
{\left( \fq;\fq \right)_{\infty}},
\quad
X^{[2,0]}_{[2,0]}(\fq) - X^{[2,0]}_{[0,2]}(\fq)=
\frac{\left( \fq^{\frac12};\fq^{\frac12} \right)_{\infty}}
{\left( \fq;\fq \right)_{\infty}^2},
\quad
X^{[1,1]}_{[1,1]}(\fq)=
\frac{\left( \fq^2;\fq^2 \right)_{\infty}}{\left( \fq;\fq \right)_{\infty}^2}.
\end{align}
Note that they are related to the NS sector and Ramond sector characters 
in \eqref{ns_r_ch} by 
$X^{[2,0]}_{[2,0]}(\fq) + X^{[2,0]}_{[0,2]}(\fq)=\chi_{\textrm{NS}}(\fq)$ and 
$X^{[1,1]}_{[1,1]}(\fq)=\chi_{\textrm{R}}(\fq)$.
Using the Jacobi triple product identity
\begin{align}
\sum_{\ell \in {\IZ}} x^{\ell}\, y^{\frac12 \ell(\ell-1)} = 
\left( -x ; y \right)_{\infty} \left( -\frac{y}{x} ; y \right)_{\infty}
\left( y ; y \right)_{\infty},
\label{JTPI}
\end{align}
one can easily obtain \eqref{inst_pr_red_ch} for the principal
characters of $\widehat{\mathfrak{sl}}(2)$,
\begin{align}
\begin{split}
&
\widehat{X}_{[2,0]}^{\mathrm{red}}(\fq, 1)=
\widehat{X}_{[0,2]}^{\mathrm{red}}(\fq, 1)=
\mathrm{Pr}\, \chi_{[2,0]}^{\widehat{\mathfrak{sl}}(2)}(\fq)=
\left( -\fq^{\frac12};\fq^{\frac12} \right)_{\infty}
\left( -\fq;\fq \right)_{\infty},
\\
& 
\widehat{X}_{[1,1]}^{\mathrm{red}}(\fq, 1)= 
\mathrm{Pr}\, \chi_{[1,1]}^{\widehat{\mathfrak{sl}}(2)}(\fq)=
\left( -\fq^{\frac12};\fq^{\frac12} \right)_{\infty}
\left( -\fq^{\frac12};\fq \right)_{\infty}.
\end{split}
\end{align}


\subsubsection{Burge-reduced instanton partition functions}

For $N=2$ with general $n$, the Burge-reduced instanton partition functions \eqref{norm_red_inst_pf} 
are determined 
by the parameters in $\bs=[s_0,s_1]\in P^{++}_{2,n+2}$ and 
$\bs_r=[s_{r,0},s_{r,1}]\in P^{++}_{2,n+2}$, $r=1,2,3,4$, 
fixed by the relations \eqref{Eq:sigma2s}, \eqref{boundary_fix}:
\begin{align}
s_1=\sigma_1-\sigma_2+1,\quad
s_{1,1} = b_1 - b_2 + 1,\quad 
s_{4,1} = b_1^{\prime} - b_2^{\prime} + 1,
\label{bc_N2_1}
\end{align}
and \eqref{boundary_fix_c} 
from the ordered charges $\sigma_{1}\ge \sigma_{2}$, 
$b_1 \ge b_2$ and $b_1^{\prime} \ge b_2^{\prime}$. 
The Coulomb parameters are then determined from the parameter $s:=s_1$ 
by \eqref{deg_momenta_v_triv}:
\begin{align}
a_1=-\frac{1}{2}\ll s -1-\frac{n}{2}\rr \epsilon_2,
\quad
a_2=\frac{1}{2}\ll s -1-\frac{n}{2}\rr \epsilon_2,
\end{align}
and the mass parameters $\bm=(m_1, m_2)$ and $\bm^{\prime}=(m_1^{\prime}, m_2^{\prime})$ 
are determined from the parameters in 
$\bs_1, \bs_2$ and $\bs_3, \bs_4$, respectively, 
by \eqref{deg_momenta_triv}.

Let us consider the case of $(N,n)=(2,2)$ with  
the rational $\Omega$-background $\epsilon_1/\epsilon_2=-2$ in 
\eqref{omega_trivial}.%
\footnote{\,
Examples \ref{ex1:inst}, \ref{ex2:inst} and \ref{ex3:inst} are 
confirmed up to $O(\fq^6)$.
}

\begin{exam}[$\emptyset - \emptyset - (\emptyset) - \emptyset - \emptyset$]
\label{ex1:inst}
Consider the Burge-reduced instanton partition function 
$\widehat{\mathcal{Z}}_{[2,0];(\ell)}^{(0,0),(0,0)}(\fq)$ and 
take $\ell=0$ in the fundamental chamber, 
which respects the fusion rules, as in Conjecture \ref{conj:ex1}. 
Here $\bs=\bs_1=\bs_2=\bs_3=\bs_4=[3,1]$ are fixed. 
Then, the Burge-reduced instanton partition function is obtained as
\begin{align}
\widehat{\mathcal{Z}}_{[2,0];(0)}^{(0,0),(0,0)}(\fq) =
\left(  1-\fq \right) ^{-2h_{\emptyset}}=1,
\qquad
h_{\emptyset}=0,
\label{2_2_NNNN}
\end{align}
and Conjecture \ref{conj:ex1} is confirmed.
\end{exam}

\begin{exam}[$\emptyset - \tableau{1} - (\tableau{1}) - \tableau{1} - \emptyset$]
\label{ex2:inst}
Consider the Burge-reduced instanton partition function 
$\widehat{\mathcal{Z}}_{[1,1];(\ell)}^{(0,0),(0,0)}(\fq)$ and 
take $\ell=0$ in the fundamental chamber as in Conjecture \ref{conj:ex2}. 
Here $\bs=\bs_2=\bs_3=[2,2]$ and $\bs_1=\bs_4=[3,1]$ are fixed. 
Then we see that the Burge-reduced instanton partition function is   
\begin{align}
\widehat{\mathcal{Z}}_{[1,1];(0)}^{(0,0),(0,0)}(\fq) =
\left(  1-\fq \right) ^{-2h_{\tableau{1}}}=
1+\frac{3 \fq}{8}+\frac{33 \fq^2}{128}+\frac{209 \fq^3}{1024}+
\frac{5643 \fq^4}{32768}+\frac{39501 \fq^{5}}{262144}+\cdots \,  \, ,
\label{2_2_NRRN}
\end{align}
where $h_{\tableau{1}}=3/16$, and Conjecture \ref{conj:ex2} is confirmed.
\end{exam}

\begin{exam}[$\tableau{1} - \tableau{1} - (\emptyset) - \tableau{1} - \tableau{1}$ and $\tableau{1} - \tableau{1} - (\tableau{2}) - \tableau{1} - \tableau{1}$]
\label{ex3:inst}
For Conjecture \ref{conj:ex3}, consider, first, 
the Burge-reduced instanton partition function $\widehat{\mathcal{Z}}_{[2,0];(\ell)}^{(1,0),(1,0)}(\fq)$,  
where $\bs=[3,1]$ and $\bs_1=\bs_2=\bs_3=\bs_4=[2,2]$ are fixed. 
Then we find that the Burge-reduced instanton partition functions for 
$\ell=0, -1$ in the fundamental chamber are
\begin{align}
\begin{split}
\widehat{\mathcal{Z}}_{[2,0];(0)}^{(1,0),(1,0)}(\fq) &=
\left(  1-\fq \right) ^{2h_{\tableau{1}}-\frac34} 
{_{2}F_{1}}\left( -\frac{1}{4},\frac{1}{4};\frac{1}{2};\fq\right) 
\\
&=
1+\frac{\fq}{4}+\frac{11 \fq^2}{64}+\frac{35 \fq^3}{256}+\frac{949 \fq^4}{8192}+\frac{3333 \fq^5}{32768}+\frac{47909 \fq^6}{524288}+\cdots \,  \, ,
\\
\widehat{\mathcal{Z}}_{[2,0];(-1)}^{(1,0),(1,0)}(\fq) &=
\frac{\fq^{\frac12}}{2} \left(  1-\fq \right) ^{2h_{\tableau{1}}-\frac34} 
{_{2}F_{1}}\left( \frac{1}{4},\frac{3}{4};\frac{3}{2};\fq\right) 
\\
&=
\frac{\fq^{\frac12}}{2}+\frac{\fq^{\frac32}}{4}+
\frac{23 \fq^{\frac52}}{128}+\frac{37 \fq^{\frac72}}{256}+
\frac{2013 \fq^{\frac92}}{16384}+\frac{3537 \fq^{\frac{11}{2}}}{32768}+\cdots \,  \, ,
\label{2_2_RRRR1}
\end{split}
\end{align}
where $h_{\tableau{1}}=3/16$, and the second one respects the fusion rules by 
\eqref{fusion_check}. Consider, next, the Burge-reduced instanton partition function 
$\widehat{\mathcal{Z}}_{[0,2];(\ell)}^{(1,0),(1,0)}(\fq)$, 
where $\bs=[3,1]$ and $\bs_1=\bs_2=\bs_3=\bs_4=[2,2]$ are fixed. 
Then we obtain the Burge-reduced instanton partition functions for $\ell=0, 1$ 
in the fundamental chamber as
\begin{align}
\begin{split}
\widehat{\mathcal{Z}}_{[0,2];(0)}^{(1,0),(1,0)}(\fq) &=
\left(  1-\fq \right) ^{2h_{\tableau{1}}-\frac34} 
{_{2}F_{1}}\left( -\frac{1}{4},\frac{1}{4};\frac{1}{2};\fq\right) 
=\widehat{\mathcal{Z}}_{[2,0];0}^{(1,0),(1,0)}(\fq) ,
\\
\widehat{\mathcal{Z}}_{[0,2];(1)}^{(1,0),(1,0)}(\fq) &=
\frac{\fq^{\frac12}}{2} \left(  1-\fq \right) ^{2h_{\tableau{1}}-\frac34} 
{_{2}F_{1}}\left( \frac{1}{4},\frac{3}{4};\frac{3}{2};\fq\right) 
=\widehat{\mathcal{Z}}_{[2,0];-1}^{(1,0),(1,0)}(\fq) ,
\label{2_2_RRRR2}
\end{split}
\end{align}
where the second one respects the fusion rules by \eqref{fusion_check}.
The above results \eqref{2_2_RRRR1} and \eqref{2_2_RRRR2}  
support Conjecture \ref{conj:ex3}. By
\begin{align}
{_{2}F_{1}}\left( -\frac{1}{4},\frac{1}{4};\frac{1}{2};\fq\right) =
\ll \frac{1+\sqrt{1-\fq}}{2}\rr ^{\frac12},
\quad
\frac{\fq^{\frac12}}{2}\, {_{2}F_{1}}\left( \frac{1}{4},\frac{3}{4};\frac{3}{2};\fq\right) =
\ll \frac{1-\sqrt{1-\fq}}{2}\rr ^{\frac12},
\end{align}
they are also consistent with the results in \cite{Belavin:2012aa} by 
Belavin and Mukhametzhanov.%
\footnote{\,
More precisely, in \cite{Belavin:2012aa}, the generic $\Omega$-background, without 
the Burge conditions, was discussed. 
Then the first one of (\ref{2_2_RRRR1}) and the second one of (\ref{2_2_RRRR2}), with 
$\mathfrak{c}=0$, were obtained as prefactors combined with the $\mathcal{N}=1$ 
super-Virasoro Ramond conformal blocks $H_{\pm}(\fq)$, $F_{\pm}(\fq)$, 
$\widetilde{H}_{\pm}(\fq)$ and $\widetilde{F}_{\pm}(\fq)$. 
What we found is that, when we impose the specific Burge conditions, 
the conformal blocks are trivialized as 
$H_{\pm}(\fq), F_{\pm}(\fq) \to 1$ and $\widetilde{H}_{\pm}(\fq), \widetilde{F}_{\pm}(\fq) \to 0$, 
and only the prefactors are obtained.\label{footnote:BM}}
\end{exam}

\subsection{$(N,n)=(2,3)$ and $\widehat{\mathfrak{sl}}(3)_2$ WZW model}

For $(N,n)=(2,3)$, there are six highest-weight representations
\begin{align}
\emptyset=[2,0,0],\quad
\tableau{1}=[1,1,0],\quad
\tableau{2}=[0,2,0],\quad
\tableau{1 1}=[1,0,1],\quad
\tableau{2 1}=[0,1,1],\quad
\tableau{2 2}=[0,0,2],
\end{align}
with conformal dimensions 
\begin{align}
\begin{split}
h_{[k_0,k_1,k_2]}&=\frac{k_1^2+k_2^2+k_1 k_2+3k_1+3k_2}{15}:
\\
h_{\emptyset}&=0,\quad
h_{\tableau{1}}=h_{\tableau{1 1}}=\frac{4}{15},\quad
h_{\tableau{2}}=h_{\tableau{2 2}}=\frac{2}{3},\quad
h_{\tableau{2 1}}=\frac{3}{5}.
\end{split}
\end{align}

\subsubsection{Burge-reduced generating functions of coloured Young diagrams}

The $\ft$-refined Burge-reduced generating functions \eqref{inst_red_ch} for $(N,n)=(2,3)$ are obtained as
\begin{align}
\begin{split}
\widehat{X}_{[2,0,0]}^{\mathrm{red}}(\fq, (\ft_1,\ft_2))&=
\left( \fq;\fq \right)_{\infty}\, 
\sum_{(\ell_1,\ell_2) \in {\IZ}^2}
{X}_{(0,0);(-\ell_1,-\ell_2)}^{[4,1]}(\fq)\, 
\ft_1^{ 2\ell_1-\ell_2 }\, 
\ft_2^{ -\ell_1+2\ell_2 }
\\
&=
X^{[2,0,0]}_{[2,0,0]}(\fq)\, f_{00}(\fq, \hat{\ft}_1 , \hat{\ft}_2) + 
X^{[2,0,0]}_{[0,1,1]}\, g_{00}(\fq, \hat{\ft}_1 , \hat{\ft}_2),
\\
\widehat{X}_{[0,2,0]}^{\mathrm{red}}(\fq, (\ft_1,\ft_2))&=
\left( \fq;\fq \right)_{\infty}\, 
\sum_{(\ell_1,\ell_2) \in {\IZ}^2}
{X}_{(1,1);(-\ell_1,-\ell_2)}^{[4,1]}(\fq)\, 
\ft_1^{ 2+2\ell_1-\ell_2 }\, 
\ft_2^{ -\ell_1+2\ell_2 }
\\
&=
X^{[0,2,0]}_{[0,2,0]}(\fq)\, f_{10}(\fq, \hat{\ft}_1 , \hat{\ft}_2) + 
X^{[0,2,0]}_{[1,0,1]}(\fq)\, g_{10}(\fq, \hat{\ft}_1 , \hat{\ft}_2),
\\
\widehat{X}_{[0,0,2]}^{\mathrm{red}}(\fq, (\ft_1,\ft_2))&=
\left( \fq;\fq \right)_{\infty}\, 
\sum_{(\ell_1,\ell_2) \in {\IZ}^2}
{X}_{(2,2);(-\ell_1,-\ell_2)}^{[4,1]}(\fq)\, 
\ft_1^{ 2\ell_1-\ell_2 }\, 
\ft_2^{ 2-\ell_1+2\ell_2 }
\\
&=
X^{[0,0,2]}_{[0,0,2]}(\fq)\, f_{01}(\fq, \hat{\ft}_1 , \hat{\ft}_2) + 
X^{[0,0,2]}_{[1,1,0]}(\fq)\, g_{01}(\fq, \hat{\ft}_1 , \hat{\ft}_2),
\\
\widehat{X}_{[1,1,0]}^{\mathrm{red}}(\fq, (\ft_1,\ft_2))&=
\left( \fq;\fq \right)_{\infty}\, 
\sum_{(\ell_1,\ell_2) \in {\IZ}^2}
{X}_{(1,0);(-\ell_1,-\ell_2)}^{[3,2]}(\fq)\, 
\ft_1^{ 1+2\ell_1-\ell_2 }\, 
\ft_2^{ -\ell_1+2\ell_2 }
\\
&=
X^{[1,1,0]}_{[1,1,0]}\, g_{01}(\fq, \hat{\ft}_1 , \hat{\ft}_2) + 
X^{[1,1,0]}_{[0,0,2]}(\fq)\, f_{01}(\fq, \hat{\ft}_1 , \hat{\ft}_2),
\\
\widehat{X}_{[0,1,1]}^{\mathrm{red}}(\fq, (\ft_1,\ft_2))&=
\left( \fq;\fq \right)_{\infty}\, 
\sum_{(\ell_1,\ell_2) \in {\IZ}^2}
{X}_{(2,1);(-\ell_1,-\ell_2)}^{[3,2]}(\fq)\, 
\ft_1^{ 1+2\ell_1-\ell_2 }\, 
\ft_2^{ 1-\ell_1+2\ell_2 }
\\
&=
X^{[0,1,1]}_{[0,1,1]}(\fq)\, g_{00}(\fq, \hat{\ft}_1 , \hat{\ft}_2) + 
X^{[0,1,1]}_{[2,0,0]}(\fq)\, f_{00}(\fq, \hat{\ft}_1 , \hat{\ft}_2),
\\
\widehat{X}_{[1,0,1]}^{\mathrm{red}}(\fq, (\ft_1,\ft_2))&=
\left( \fq;\fq \right)_{\infty}\, 
\sum_{(\ell_1,\ell_2) \in {\IZ}^2}
{X}_{(2,0);(-\ell_1,-\ell_2)}^{[2,3]}(\fq)\, 
\ft_1^{ 2\ell_1-\ell_2 }\, 
\ft_2^{ 1-\ell_1+2\ell_2 }
\\
&=
X^{[1,0,1]}_{[1,0,1]}(\fq)\, g_{10}(\fq, \hat{\ft}_1 , \hat{\ft}_2) + 
X^{[1,0,1]}_{[0,2,0]}(\fq)\, f_{10}(\fq, \hat{\ft}_1 , \hat{\ft}_2),
\label{N2n3_character}
\end{split}
\end{align}
where $\hat{\ft}_1 =\fq^{-\frac13}\, \ft_1$, $\hat{\ft}_2=\fq^{-\frac13}\, \ft_2$,
\begin{align}
\begin{split}
X^{[2,0,0]}_{[2,0,0]}(\fq)&=
1 + 2 \fq + 8 \fq^2 + 20 \fq^3 + 52 \fq^4 + 116 \fq^5 + 256 \fq^6 + 522 \fq^7 + \cdots \,  \, ,
\\
X^{[2,0,0]}_{[0,1,1]}(\fq)&=
\fq^{\frac13} + 4 \fq^{\frac43} + 12 \fq^{\frac73} + 32 \fq^{\frac{10}{3}} 
+ 77 \fq^{\frac{13}{3}} + 172 \fq^{\frac{16}{3}} + 365 \fq^{\frac{19}{3}} 
+ 740 \fq^{\frac{22}{3}} + \cdots \,  \, ,
\\
X^{[0,1,1]}_{[0,1,1]}(\fq)&=
1 + 4 \fq + 13 \fq^2 + 36 \fq^3 + 89 \fq^4 + 204 \fq^5 + 441 \fq^6 + 908 \fq^7 + \cdots \,  \, ,
\\
X^{[0,1,1]}_{[2,0,0]}(\fq) 
& =
2 \fq^{\frac23} + 7 \fq^{\frac53} + 22 \fq^{\frac83} + 56 \fq^{\frac{11}{3}} 
+ 136 \fq^{\frac{14}{3}} + 300 \fq^{\frac{17}{3}} + 636 \fq^{\frac{20}{3}} 
+ 1280 \fq^{\frac{23}{3}} + \cdots \,  \, ,
\\
X^{[0,2,0]}_{[0,2,0]}(\fq)&=X^{[0,0,2]}_{[0,0,2]}(\fq)=X^{[2,0,0]}_{[2,0,0]}(\fq),
\quad
X^{[0,2,0]}_{[1,0,1]}(\fq)=X^{[0,0,2]}_{[1,1,0]}(\fq)=X^{[2,0,0]}_{[0,1,1]}(\fq),
\\
X^{[1,1,0]}_{[1,1,0]}(\fq)&=X^{[1,0,1]}_{[1,0,1]}(\fq)=X^{[0,1,1]}_{[0,1,1]}(\fq),
\quad
X^{[1,1,0]}_{[0,0,2]}(\fq)=X^{[1,0,1]}_{[0,2,0]}(\fq)=X^{[0,1,1]}_{[2,0,0]}(\fq),
\label{string_fn_N2n3}
\end{split}
\end{align}
and $f_{00}$, $f_{10}$, $f_{01}$, $g_{00}$, $g_{10}$, $g_{01}$ are
\begin{align}
\begin{split}
f_{\sigma_1\sigma_2}(\fq, \hat{\ft}_1 , \hat{\ft}_2)&=
\mathop{\sum_{(j_1, j_2) \in \left\{2\, {\IZ}\right\}^2}}
\limits_{j_1-j_2 \in 6\, {\IZ}+2 \left(\sigma_1-\sigma_2\right)}
\fq^{\frac16 \ll j_1^2 + j_2^2 +j_1\, j_2\rr }\, 
\hat{\ft}_1^{\, j_1}\, \hat{\ft}_2^{\, j_2},
\\
g_{\sigma_1\sigma_2}(\fq, \hat{\ft}_1 , \hat{\ft}_2)&=
\mathop{\sum_{(j_1, j_2) \in \left\{2\, {\IZ}+1 \right\}^2}}
\limits_{j_1-j_2 \in 6\, {\IZ}+2 \left(\sigma_1-\sigma_2\right)}
\fq^{\frac16 \ll j_1^2 + j_2^2 +j_1\, j_2\rr  + \frac16}\, 
\hat{\ft}_1^{\, j_1}\, \hat{\ft}_2^{\, j_2}
\\
& +
\mathop{\sum_{(j_1, j_2) \in \left\{2\, {\IZ}\right\} \times 
\left\{2\, {\IZ}+1 \right\}}}
\limits_{j_1-j_2 \in 6\, {\IZ}+1+2 \left(\sigma_1-\sigma_2\right)}
\fq^{\frac16 \ll j_1^2 + j_2^2 +j_1\, j_2\rr  + \frac16} 
\ll \hat{\ft}_1^{\, j_1}\, \hat{\ft}_2^{\, j_2} + 
\hat{\ft}_1^{\, j_2}\, \hat{\ft}_2^{\, j_1}\rr.
\end{split}
\end{align}
The Burge-reduced generating functions \eqref{N2n3_character} agree with 
the $\widehat{\mathfrak{sl}}(3)_2$ WZW characters computed by \eqref{wzw_ch_formula},
\begin{align}
\begin{split}
&
\widehat{X}_{[2,0,0]}^{\mathrm{red}}(\fq, (\ft_1, \ft_2))=
\chi_{[2,0,0]}^{\widehat{\mathfrak{sl}}(3)_2}( \fq, (\hat{\ft}_1 , \hat{\ft}_2)) ,
\quad
\widehat{X}_{[0,2,0]}^{\mathrm{red}}(\fq, (\ft_1, \ft_2))=
\chi_{[0,2,0]}^{\widehat{\mathfrak{sl}}(3)_2}( \fq, (\hat{\ft}_1 , \hat{\ft}_2)) ,
\\
&
\widehat{X}_{[0,0,2]}^{\mathrm{red}}(\fq, (\ft_1, \ft_2))=
\chi_{[0,0,2]}^{\widehat{\mathfrak{sl}}(3)_2}( \fq, (\hat{\ft}_1 , \hat{\ft}_2)) ,
\quad
\widehat{X}_{[1,1,0]}^{\mathrm{red}}(\fq, (\ft_1, \ft_2))=
\fq^{\frac{1}{15}}\,
\chi_{[1,1,0]}^{\widehat{\mathfrak{sl}}(3)_2}( \fq, (\hat{\ft}_1 , \hat{\ft}_2)) ,
\\
&
\widehat{X}_{[0,1,1]}^{\mathrm{red}}(\fq, (\ft_1, \ft_2))=
\fq^{\frac{1}{15}}\,
\chi_{[0,1,1]}^{\widehat{\mathfrak{sl}}(3)_2}( \fq, (\hat{\ft}_1 , \hat{\ft}_2)) ,
\quad
\widehat{X}_{[1,0,1]}^{\mathrm{red}}(\fq, (\ft_1, \ft_2))=
\fq^{\frac{1}{15}}\,
\chi_{[1,0,1]}^{\widehat{\mathfrak{sl}}(3)_2}( \fq, (\hat{\ft}_1 , \hat{\ft}_2)) ,
\end{split}
\end{align}
and Corollary \ref{prop:Ngch} is confirmed. 
Up to an overall factor, the functions \eqref{string_fn_N2n3} are 
the $\widehat{\mathfrak{sl}}(3)$ string functions of level-2 
in \cite{Kac:1984mq} and given by 
(\textit{cf.} Corollary \ref{Cor:Chern=Char1}),
\begin{align}
\begin{split}
&
X^{[2,0,0]}_{[2,0,0]}(\fq) - \fq^{\frac16}\, X^{[2,0,0]}_{[0,1,1]}(\fq)=
\frac{\left( \fq^{\frac12};\fq^{\frac12} \right)_{\infty}
\left( \fq;\fq^{\frac52} \right)_{\infty}
\left( \fq^{\frac32};\fq^{\frac52} \right)_{\infty}
\left( \fq^{\frac52};\fq^{\frac52} \right)_{\infty}}
{\left( \fq;\fq \right)_{\infty}^4},
\\
&
X^{[2,0,0]}_{[0,1,1]}(\fq)= \fq^{\frac13} \, 
\frac{\left( \fq^{2};\fq^{2} \right)_{\infty}
\left( \fq^2;\fq^{10} \right)_{\infty}
\left( \fq^{8};\fq^{10} \right)_{\infty}
\left( \fq^{10};\fq^{10} \right)_{\infty}}
{\left( \fq;\fq \right)_{\infty}^4},
\\
&
X^{[0,1,1]}_{[0,1,1]}(\fq)=
\frac{\left( \fq^{2};\fq^{2} \right)_{\infty}
\left( \fq^4;\fq^{10} \right)_{\infty}
\left( \fq^{6};\fq^{10} \right)_{\infty}
\left( \fq^{10};\fq^{10} \right)_{\infty}}
{\left( \fq;\fq \right)_{\infty}^4},
\\
&
\fq^{\frac16} \, X^{[0,1,1]}_{[0,1,1]}(\fq) - X^{[0,1,1]}_{[2,0,0]}(\fq)= \fq^{\frac16} \, 
\frac{\left( \fq^{\frac12};\fq^{\frac12} \right)_{\infty}
\left( \fq^{\frac12};\fq^{\frac52} \right)_{\infty}
\left( \fq^{2};\fq^{\frac52} \right)_{\infty}
\left( \fq^{\frac52};\fq^{\frac52} \right)_{\infty}}
{\left( \fq;\fq \right)_{\infty}^4}.
\end{split}
\end{align}
By taking $\ft_1=\ft_2=1$, the principal characters of $\widehat{\mathfrak{sl}}(3)$ 
are obtained as in \eqref{inst_pr_red_ch}:
\begin{align}
\begin{split}
\widehat{X}_{[2,0,0]}^{\mathrm{red}}(\fq, (1, 1))&=
\widehat{X}_{[0,2,0]}^{\mathrm{red}}(\fq, (1, 1))=
\widehat{X}_{[0,0,2]}^{\mathrm{red}}(\fq, (1, 1))=
\mathrm{Pr}\, \chi_{[2,0,0]}^{\widehat{\mathfrak{sl}}(3)}(\fq)
\\
&=
\frac{\left( \fq;\fq \right)_{\infty}}
{\left( \fq^{\frac13} ;\fq^{\frac13} \right)_{\infty}
\left( \fq^{\frac23} ;\fq^{\frac53} \right)_{\infty}
\left( \fq ;\fq^{\frac53} \right)_{\infty}},
\\
\widehat{X}_{[1,1,0]}^{\mathrm{red}}(\fq, (1, 1))&=
\widehat{X}_{[0,1,1]}^{\mathrm{red}}(\fq, (1, 1))=
\widehat{X}_{[1,0,1]}^{\mathrm{red}}(\fq, (1, 1))=
\mathrm{Pr}\, \chi_{[1,1,0]}^{\widehat{\mathfrak{sl}}(3)}(\fq)
\\
&=
\frac{\left( \fq;\fq \right)_{\infty}}
{\left( \fq^{\frac13} ;\fq^{\frac13} \right)_{\infty}
\left( \fq^{\frac13} ;\fq^{\frac53} \right)_{\infty}
\left( \fq^{\frac43} ;\fq^{\frac53} \right)_{\infty}}.
\label{pr_ch_sl32}
\end{split}
\end{align}

\subsubsection{Burge-reduced instanton partition functions}

For $(N,n)=(2,3)$, the rational $\Omega$-background \eqref{omega_trivial} 
yields $\epsilon_1/\epsilon_2=-5/2$. The parameters 
in $\bs=[s_0,s_1]\in P^{++}_{2,5}$ and 
$\bs_r=[s_{r,0},s_{r,1}]\in P^{++}_{2,5}$, $r=1,2,3,4$, which determine 
the Burge-reduced instanton partition functions, are fixed as in \eqref{bc_N2_1}.%
\footnote{\,
Examples \ref{ex4:inst}, \ref{ex5:inst} and \ref{ex6:inst} are 
confirmed up to $O(\fq^5)$.
}
 
\begin{exam}[$\emptyset - \emptyset - (\emptyset) - \emptyset - \emptyset$]
\label{ex4:inst}
Consider the Burge-reduced instanton partition function
$\widehat{\mathcal{Z}}_{[2,0,0];(\ell_1,\ell_2)}^{(0,0),(0,0)}( \fq )$ and 
take $(\ell_1,\ell_2)=(0,0)$ in the fundamental chamber as in Conjecture \ref{conj:ex1}. 
Here $\bs=\bs_1=\bs_2=\bs_3=\bs_4=[4,1]$ are fixed.
Then we see that 
the Burge-reduced instanton partition function is   
\begin{align}
\widehat{\mathcal{Z}}_{[2,0,0];(0,0)}^{(0,0),(0,0)}( \fq ) =
\left(  1-\fq \right) ^{-2h_{\emptyset}}=1,
\qquad
h_{\emptyset}=0,
\label{rZ1_2_3}
\end{align}
and Conjecture \ref{conj:ex1} is confirmed.
\end{exam}

\begin{exam}[$\emptyset - \tableau{1 1} - (\tableau{1}) - \tableau{1} - \emptyset$]
\label{ex5:inst}
Consider the Burge-reduced instanton partition function
$\widehat{\mathcal{Z}}_{[1,1,0];(\ell_1,\ell_2)}^{(0,0),(0,0)}( \fq )$
and take $(\ell_1,\ell_2)=(0,0)$ in the fundamental chamber as in 
Conjecture \ref{conj:ex2}. 
Here $\bs=\bs_3=[3,2]$, $\bs_1=\bs_4=[4,1]$ and $\bs_2=[2,3]$ are fixed. 
Then the Burge-reduced instanton partition function is   
\begin{align}
\widehat{\mathcal{Z}}_{[1,1,0];(0,0)}^{(0,0),(0,0)}( \fq ) =
\left(  1-\fq \right) ^{-2h_{\tableau{1}}}=
1+\frac{8 \fq}{15}+\frac{92 \fq^2}{225}+\frac{3496 \fq^3}{10125}+
\frac{46322 \fq^{4}}{151875}+
\frac{3149896 \fq^{5}}{11390625}+\cdots \,  \, ,
\label{rZ2_2_3}
\end{align}
where $h_{\tableau{1}}=4/15$, and Conjecture \ref{conj:ex2} is confirmed.
\end{exam}

\begin{exam}[$\tableau{1} - \tableau{1} - (\emptyset) - \tableau{1} - \tableau{1 1}$ 
and $\tableau{1} - \tableau{1} - (\tableau{2 1}) - \tableau{1} - \tableau{1 1}$]
\label{ex6:inst}
For Conjecture \ref{conj:ex3}, consider, first, 
the Burge-reduced instanton partition function 
$\widehat{\mathcal{Z}}_{[2,0,0];(\ell_1,\ell_2)}^{(1,0),(2,0)}( \fq )$, 
where $\bs=[4,1]$, $\bs_1=\bs_2=\bs_3=[3,2]$ and $\bs_4=[2,3]$ are fixed. 
Then we find that 
the Burge-reduced instanton partition functions for 
$(\ell_1,\ell_2)=(0,0)$ and $(-1,-1)$ in the fundamental chamber are
\begin{align}
\begin{split}
\widehat{\mathcal{Z}}_{[2,0,0];(0,0)}^{(1,0),(2,0)}( \fq ) &=
\left(  1-\fq \right) ^{2h_{\tableau{1}}-\frac45} 
{_{2}F_{1}}\left( -\frac{1}{5},\frac{1}{5};\frac{2}{5};\fq\right) 
\\
&=
1+\frac{\fq}{6}+\frac{34 \fq^2}{315}+\frac{67 \fq^3}{810}+
\frac{49309 \fq^{4}}{722925}
+\frac{254267 \fq^{5}}{4337550}+\cdots \,  \, ,
\\
\widehat{\mathcal{Z}}_{[2,0,0];(-1,-1)}^{(1,0),(2,0)}( \fq ) &=
\frac{\fq^{\frac13}}{2}
\left(  1-\fq \right) ^{2h_{\tableau{1}}-\frac45} 
{_{2}F_{1}}\left( \frac{1}{5},\frac{4}{5};\frac{7}{5};\fq\right) 
\\
&=
\frac{\fq^{\frac13}}{2}+\frac{4 \fq^{\frac43}}{21}+\frac{79 \fq^{\frac73}}{630}
+\frac{4619 \fq^{\frac{10}{3}}}{48195}
+\frac{16237 \fq^{\frac{13}{3}}}{206550}+\cdots \,  \, ,
\label{rZ3_2_3}
\end{split}
\end{align}
where $h_{\tableau{1}}=4/15$, and the second one respects the fusion rules by 
\eqref{fusion_check}. Consider, next, 
the Burge-reduced instanton partition function 
$\widehat{\mathcal{Z}}_{[0,1,1];(\ell_1,\ell_2)}^{(1,0),(2,0)}( \fq )$, where 
$\bs=\bs_1=\bs_2=\bs_3=[3,2]$ and $\bs_4=[2,3]$ are fixed. 
Then we see that the Burge-reduced instanton partition functions 
for $(\ell_1,\ell_2)=(0,0)$ and $(1,1)$ in the fundamental chamber are
\begin{align}
\begin{split}
\widehat{\mathcal{Z}}_{[0,1,1];(0,0)}^{(1,0),(2,0)}( \fq ) &=
\left(  1-\fq \right) ^{2h_{\tableau{1}}-\frac45} 
{_{2}F_{1}}\left( -\frac{1}{5},\frac{2}{5};\frac{3}{5};\fq\right) 
\\
&=
1+\frac{2 \fq}{15}+\frac{13 \fq^2}{150}+\frac{8792 \fq^3}{131625}+
\frac{218507 \fq^{4}}{3948750}
+\frac{54190157 \fq^{5}}{1135265625}+\cdots \,  \, ,
\\
\widehat{\mathcal{Z}}_{[0,1,1];(1,1)}^{(1,0),(2,0)}( \fq ) &=
\frac{\fq^{\frac23}}{3}
\left(  1-\fq \right) ^{2h_{\tableau{1}}-\frac45} 
{_{2}F_{1}}\left( \frac{2}{5},\frac{4}{5};\frac{8}{5};\fq\right) 
\\
&=
\frac{\fq^{\frac23}}{3}+\frac{7 \fq^{\frac53}}{45}+
\frac{1867 \fq^{\frac83}}{17550}+\frac{32582 \fq^{\frac{11}{3}}}{394875}
+\frac{18575621 \fq^{\frac{14}{3}}}{272463750}+\cdots \,  \, ,
\label{rZ4_2_3}
\end{split}
\end{align}
where the second one respects the fusion rules by \eqref{fusion_check}.
The above results \eqref{rZ3_2_3} and \eqref{rZ4_2_3} support Conjecture \ref{conj:ex3}.
\end{exam}

\subsection{$(N,n)=(3,2)$ and $\widehat{\mathfrak{sl}}(2)_3$ WZW model}

  For  $(N,n)=(3,2)$, there are four highest-weight representations
\begin{align}
\emptyset=[3,0],\quad
\tableau{1}=[2,1],\quad
\tableau{2}=[1,2],\quad
\tableau{3}=[0,3],
\end{align}
with conformal dimensions 
\begin{align}
h_{[k_0,k_1]}=\frac{k_1 \, (k_1+2)}{20}:\quad
h_{\emptyset}=0,\quad
h_{\tableau{1}}=\frac{3}{20},\quad
h_{\tableau{2}}=\frac{2}{5},\quad
h_{\tableau{3}}=\frac{3}{4}.
\end{align}

\subsubsection{Burge-reduced generating functions of coloured Young diagrams}

The $\ft$-refined Burge-reduced generating functions \eqref{inst_red_ch} for $(N,n)=(3,2)$ are obtained as
\begin{align}
\begin{split}
\widehat{X}_{[3,0]}^{\mathrm{red}}(\fq, \ft)&=
\left( \fq;\fq \right)_{\infty}\, 
\sum_{\ell \in {\IZ}}
{X}_{(0,0,0);(-\ell)}^{[3,1,1]}(\fq)\, \ft^{2\ell} 
=
X^{[3,0]}_{[3,0]}(\fq)\, f_{0}(\fq, \hat{\ft} ) + 
X^{[3,0]}_{[1,2]}(\fq)\, g_{0}(\fq, \hat{\ft} ),
\\
\widehat{X}_{[0,3]}^{\mathrm{red}}(\fq, \ft)&=
\left( \fq;\fq \right)_{\infty}\, 
\sum_{\ell \in {\IZ}}
{X}_{(1,1,1);(-\ell)}^{[3,1,1]}(\fq)\, \ft^{2\ell+3} 
=
X^{[0,3]}_{[0,3]}(\fq)\, f_{1}(\fq, \hat{\ft} ) + 
X^{[0,3]}_{[2,1]}(\fq)\, g_{1}(\fq, \hat{\ft} ),
\\
\widehat{X}_{[2,1]}^{\mathrm{red}}(\fq, \ft)&=
\left( \fq;\fq \right)_{\infty}\, 
\sum_{\ell \in {\IZ}}
{X}_{(1,0,0);(-\ell)}^{[2,2,1]}(\fq)\, \ft^{2\ell+1} 
=
X^{[2,1]}_{[2,1]}(\fq)\, g_{1}(\fq, \hat{\ft} ) +  
X^{[2,1]}_{[0,3]}(\fq)\, f_{1}(\fq, \hat{\ft} ),
\\
\widehat{X}_{[1,2]}^{\mathrm{red}}(\fq, \ft)&=
\left( \fq;\fq \right)_{\infty}\, 
\sum_{\ell \in {\IZ}}
{X}_{(1,1,0);(-\ell)}^{[2,1,2]}(\fq)\, \ft^{2\ell+2} 
=
X^{[1,2]}_{[1,2]}(\fq)\, g_{0}(\fq, \hat{\ft} ) +  
X^{[1,2]}_{[3,0]}(\fq)\, f_{0}(\fq, \hat{\ft} ),
\label{N3n2_character}
\end{split}
\end{align}
where $\hat{\ft} =\fq^{-\frac14}\, \ft$,
\begin{align}
\begin{split}
X^{[3,0]}_{[3,0]}(\fq)&=
1 + \fq + 3 \fq^2 + 6 \fq^3 + 12 \fq^4 + 21 \fq^5 + 39 \fq^6 + 64 \fq^7 + 108 \fq^8 + \cdots \,  \, ,
\\
X^{[3,0]}_{[1,2]}(\fq)&=
\fq^{\frac12} + 2 \fq^{\frac32} + 5 \fq^{\frac52} + 9 \fq^{\frac72} + 18 \fq^{\frac92} + 31 \fq^{\frac{11}{2}} 
+ 55 \fq^{\frac{13}{2}} + 90 \fq^{\frac{15}{2}} + 149 \fq^{\frac{17}{2}} + \cdots \,  \, ,
\\
X^{[1,2]}_{[1,2]}(\fq)&=
1 + 2 \fq + 5 \fq^2 + 10 \fq^3 + 20 \fq^4 + 36 \fq^5 + 64 \fq^6 + 108 \fq^7 + 180 \fq^8 + \cdots \,  \, ,
\\
X^{[1,2]}_{[3,0]}(\fq)&=
\fq^{\frac12} + 3 \fq^{\frac32} + 6 \fq^{\frac52} + 13 \fq^{\frac72} + 24 \fq^{\frac92} + 
44 \fq^{\frac{11}{2}} 
+ 76 \fq^{\frac{13}{2}} + 129 \fq^{\frac{15}{2}} + 210 \fq^{\frac{17}{2}} + \cdots \,  \, ,
\\
X^{[0,3]}_{[0,3]}(\fq)&=X^{[3,0]}_{[3,0]}(\fq),\quad
X^{[0,3]}_{[2,1]}(\fq)=X^{[3,0]}_{[1,2]}(\fq),\quad
X^{[2,1]}_{[2,1]}(\fq)=X^{[1,2]}_{[1,2]}(\fq),\quad
X^{[2,1]}_{[0,3]}(\fq)=X^{[1,2]}_{[3,0]}(\fq),
\label{string_fn_N3n2}
\end{split}
\end{align}
and
\begin{align}
\begin{split}
f_{\sigma}(\fq, \hat{\ft} )=
\sum_{j \in 6\, {\IZ}+3\, \sigma}
\fq^{\frac{1}{12}\, j^2}\, \hat{\ft}^{\, j},
\quad
g_{\sigma}(\fq, \hat{\ft} )=
\sum_{j \in 6\, {\IZ}\pm \, (2 - \sigma)} 
\fq^{\frac{1}{12}\, j^2 + \frac16}\, \hat{\ft}^{\, j},
\quad \sigma=0,1.
\end{split}
\end{align}
The Burge-reduced generating functions \eqref{N3n2_character} agree with 
the $\widehat{\mathfrak{sl}}(2)_3$ WZW characters computed by \eqref{wzw_ch_formula},
\begin{align}
\begin{split}
&
\widehat{X}_{[3,0]}^{\mathrm{red}}(\fq, \ft)=
\chi_{[3,0]}^{\widehat{\mathfrak{sl}}(2)_3}( \fq, \hat{\ft} ) ,
\quad
\widehat{X}_{[0,3]}^{\mathrm{red}}(\fq, \ft)=
\chi_{[0,3]}^{\widehat{\mathfrak{sl}}(2)_3}( \fq, \hat{\ft} ) ,
\\
&
\widehat{X}_{[2,1]}^{\mathrm{red}}(\fq, \ft)=
\fq^{\frac{1}{10}}\,
\chi_{[2,1]}^{\widehat{\mathfrak{sl}}(2)_3}( \fq, \hat{\ft} ) ,
\quad
\widehat{X}_{[1,2]}^{\mathrm{red}}(\fq, \ft)=
\fq^{\frac{1}{10}}\,
\chi_{[1,2]}^{\widehat{\mathfrak{sl}}(2)_3}( \fq, \hat{\ft} ) ,
\end{split}
\end{align}
and Corollary \ref{prop:Ngch} is confirmed. 
Up to an overall factor, 
the functions \eqref{string_fn_N3n2} are 
the $\widehat{\mathfrak{sl}}(2)$ string functions of level-3 
in \cite{Kac:1984mq} and given by
(\textit{cf.} Corollary \ref{Cor:Chern=Char1}),
\begin{align}
\begin{split}
&
X^{[3,0]}_{[3,0]}(\fq) - \fq^{\frac16}\, X^{[3,0]}_{[1,2]}(\fq)=
\frac{\left( \fq^{\frac23};\fq^{\frac53} \right)_{\infty}
\left( \fq;\fq^{\frac53} \right)_{\infty}
\left( \fq^{\frac53};\fq^{\frac53} \right)_{\infty}}
{\left( \fq;\fq \right)_{\infty}^2},
\\
&
X^{[3,0]}_{[1,2]}(\fq)= \fq^{\frac12} \, 
\frac{\left( \fq^{3};\fq^{15} \right)_{\infty}
\left( \fq^{12};\fq^{15} \right)_{\infty}
\left( \fq^{15};\fq^{15} \right)_{\infty}}
{\left( \fq;\fq \right)_{\infty}^2},
\\
&
X^{[1,2]}_{[1,2]}(\fq)=  
\frac{\left( \fq^{6};\fq^{15} \right)_{\infty}
\left( \fq^{9};\fq^{15} \right)_{\infty}
\left( \fq^{15};\fq^{15} \right)_{\infty}}
{\left( \fq;\fq \right)_{\infty}^2},
\\
&
\fq^{\frac16} \, X^{[1,2]}_{[1,2]}(\fq) - X^{[1,2]}_{[3,0]}(\fq)= 
\fq^{\frac16}\,
\frac{\left( \fq^{\frac13};\fq^{\frac53} \right)_{\infty}
\left( \fq^{\frac43};\fq^{\frac53} \right)_{\infty}
\left( \fq^{\frac53};\fq^{\frac53} \right)_{\infty}}
{\left( \fq;\fq \right)_{\infty}^2}.
\end{split}
\end{align}
By taking $\ft=1$, 
the principal characters of $\widehat{\mathfrak{sl}}(2)$ are obtained as 
in \eqref{inst_pr_red_ch}:
\begin{align}
\begin{split}
&
\widehat{X}_{[3,0]}^{\mathrm{red}}(\fq, 1)=
\widehat{X}_{[0,3]}^{\mathrm{red}}(\fq, 1)=
\mathrm{Pr}\, \chi_{[3,0]}^{\widehat{\mathfrak{sl}}(2)}(\fq)=
\frac{\left( -\fq^{\frac12};\fq^{\frac12} \right)_{\infty}}
{\left( \fq ;\fq^{\frac52} \right)_{\infty}
\left( \fq^{\frac32};\fq^{\frac52} \right)_{\infty}},
\\
&
\widehat{X}_{[2,1]}^{\mathrm{red}}(\fq, 1)=
\widehat{X}_{[1,2]}^{\mathrm{red}}(\fq, 1)=
\mathrm{Pr}\, \chi_{[2,1]}^{\widehat{\mathfrak{sl}}(2)}(\fq)=
\frac{\left( -\fq^{\frac12};\fq^{\frac12} \right)_{\infty}}
{\left( \fq^{\frac12};\fq^{\frac52} \right)_{\infty}
\left( \fq^{2};\fq^{\frac52} \right)_{\infty}}.
\end{split}
\end{align}
Note that, these principal characters are related to the principal characters of 
$\widehat{\mathfrak{sl}}(3)$ in \eqref{pr_ch_sl32} by
\begin{align}
\frac{\mathrm{Pr}\, \chi_{[3,0]}^{\widehat{\mathfrak{sl}}(2)}(\fq^2)}
{\left( \fq^{2};\fq^{2} \right)_{\infty}}
=
\frac{\mathrm{Pr}\, \chi_{[2,0,0]}^{\widehat{\mathfrak{sl}}(3)}(\fq^3)}
{\left( \fq^{3};\fq^{3} \right)_{\infty}},
\qquad
\frac{\mathrm{Pr}\, \chi_{[2,1]}^{\widehat{\mathfrak{sl}}(2)}(\fq^2)}
{\left( \fq^{2};\fq^{2} \right)_{\infty}}
=
\frac{\mathrm{Pr}\, \chi_{[1,1,0]}^{\widehat{\mathfrak{sl}}(3)}(\fq^3)}
{\left( \fq^{3};\fq^{3} \right)_{\infty}}.
\end{align}

\subsubsection{Burge-reduced instanton partition functions}

For $N=3$ with general $n$, 
the Burge-reduced instanton partition functions \eqref{norm_red_inst_pf} are 
determined from the parameters in 
$\bs=[s_0, s_1, s_2]$, 
$\bs_1=[s_{1,0}, s_{1,1}, s_{1,2}]$, $\bs_2=[s_{2,0}, 1, s_{2,2}]$, 
$\bs_3=[s_{3,0}, s_{3,1}, 1]$ and $\bs_4=[s_{4,0}, s_{4,1}, s_{4,2}]$ 
in $P^{++}_{3,n+3}$ that are fixed by the relations 
\eqref{Eq:sigma2s}, \eqref{boundary_fix}: 
\begin{align}
s_I = \sigma_I-\sigma_{I+1}+1, \quad 
s_{1,I} = b_I - b_{I+1} + 1, \quad 
s_{4,I} = b_I^{\prime} - b_{I+1}^{\prime} + 1, \quad I=1,2,
\label{bc_N3_1}
\end{align}
and \eqref{boundary_fix_c} 
from the ordered charges $\sigma_{1}\ge \sigma_{2} \ge\sigma_{3}$, 
$b_1 \ge b_2 \ge b_3$, $b_1^{\prime} \ge b_2^{\prime} \ge b_3^{\prime}$. 
The Coulomb parameters are then determined from 
$\bs$ by \eqref{deg_momenta_v_triv}:
\begin{align}
\begin{split}
a_1&=\frac{1}{3}\sum_{I=1,2}\ll I-3\rr 
\ll s_I - 1 - \frac{n}{3}\rr \epsilon_2,
\\
a_2&=\frac{1}{3}\sum_{I=1,2}\ll 3-2I\rr 
\ll s_I - 1 - \frac{n}{3}\rr \epsilon_2,
\\
a_3&=\frac{1}{3}\sum_{I=1,2} I
\ll s_I - 1 - \frac{n}{3}\rr \epsilon_2,
\end{split}
\end{align}
and the mass parameters $\bm=(m_1,\ldots,m_N)$ and $\bm^{\prime}=(m_1^{\prime},\ldots,m_N^{\prime})$ 
are determined from the parameters in $\bs_1, \bs_2$ and 
$\bs_3, \bs_4$, respectively, by \eqref{deg_momenta_triv}.

We now consider the case of $(N,n)=(3,2)$ with 
the rational $\Omega$-background $\epsilon_1/\epsilon_2=-5/3$ 
in \eqref{omega_trivial}.%
\footnote{\,
Examples \ref{ex7:inst}, \ref{ex8:inst} and \ref{ex9:inst} are 
confirmed up to $O(\fq^{\frac{11}{2}})$.
}

\begin{exam}[$\emptyset - \emptyset - (\emptyset) - \emptyset - \emptyset$]
\label{ex7:inst}
Consider the Burge-reduced instanton partition function
$\widehat{\mathcal{Z}}_{[3,0];(\ell)}^{(0,0,0),(0,0,0)}( \fq )$ and 
take $\ell=0$ in the fundamental chamber, which respects the fusion rules, 
as in Conjecture \ref{conj:ex1}. 
Here $\bs=\bs_1=\bs_2=\bs_3=\bs_4=[3,1,1]$ are fixed. 
Then we see that the Burge-reduced instanton partition function is   
\begin{align}
\widehat{\mathcal{Z}}_{[3,0];(0)}^{(0,0,0),(0,0,0)}( \fq ) =
\left(  1-\fq \right) ^{-2h_{\emptyset}}=1,
\qquad
h_{\emptyset}=0,
\label{rZ1_3_2}
\end{align}
and Conjecture \ref{conj:ex1} is confirmed.
\end{exam}

\begin{exam}[$\emptyset - \tableau{1} - (\tableau{1}) - \tableau{1} - \emptyset$]
\label{ex8:inst}
Consider the Burge-reduced instanton partition function
$\widehat{\mathcal{Z}}_{[2,1];(\ell)}^{(0,0,0),(0,0,0)}( \fq )$ and take 
$\ell=0$ in the fundamental chamber as in Conjecture \ref{conj:ex2}, where 
$\bs=\bs_3=[2,2,1]$, $\bs_1=\bs_4=[3,1,1]$ and $\bs_2=[2,1,2]$ are fixed. 
Then the Burge-reduced instanton partition function is obtained as
\begin{align}
\widehat{\mathcal{Z}}_{[2,1];(0)}^{(0,0,0),(0,0,0)}( \fq ) =
\left(  1-\fq \right) ^{-2h_{\tableau{1}}}=
1+\frac{3 \fq}{10}+\frac{39 \fq^2}{200}+\frac{299 \fq^3}{2000}+
\frac{9867 \fq^4}{80000}+\frac{424281 \fq^{5}}{4000000}+\cdots \,  \, ,
\label{rZ2_3_2}
\end{align}
where $h_{\tableau{1}}=3/20$, and Conjecture \ref{conj:ex2} is confirmed.
\end{exam}

\begin{exam}[$\tableau{1} - \tableau{1} - (\emptyset) - \tableau{1} - \tableau{1}$ and $\tableau{1} - \tableau{1} - (\tableau{2}) - \tableau{1} - \tableau{1}$]
\label{ex9:inst}
For Conjecture \ref{conj:ex3}, consider, first, the Burge-reduced instanton partition function 
$\widehat{\mathcal{Z}}_{[3,0];(\ell)}^{(1,0,0),(1,0,0)}( \fq )$, 
where $\bs=[3,1,1]$, $\bs_1=\bs_3=\bs_4=[2,2,1]$ and $\bs_2=[2,1,2]$ are fixed. 
Then, we find that the Burge-reduced instanton partition functions for $\ell=0,-1$ 
in the fundamental chamber are
\begin{align}
\begin{split}
\widehat{\mathcal{Z}}_{[3,0];(0)}^{(1,0,0),(1,0,0)}( \fq ) &=
\left(  1-\fq \right) ^{2h_{\tableau{1}}-\frac35} 
{_{2}F_{1}}\left( -\frac{1}{5},\frac{2}{5};\frac{3}{5};\fq\right) 
\\
&=
1+\frac{\fq}{6}+\frac{13 \fq^2}{120}+\frac{87 \fq^3}{1040}+
\frac{8669 \fq^4}{124800}+\frac{344797 \fq^{5}}{5740800}+\cdots \,  \, ,
\\
\widehat{\mathcal{Z}}_{[3,0];(-1)}^{(1,0,0),(1,0,0)}( \fq ) &=
\frac{\fq^{\frac12}}{3}
\left(  1-\fq \right) ^{2h_{\tableau{1}}-\frac35} 
{_{2}F_{1}}\left( \frac{2}{5},\frac{4}{5};\frac{8}{5};\fq\right) 
\\
&=
\frac{\fq^{\frac12}}{3}+\frac{\fq^{\frac32}}{6}+\frac{61 \fq^{\frac52}}{520}
+\frac{289 \fq^{\frac72}}{3120}+\frac{222529 \fq^{\frac92}}{2870400}+
\frac{25723 \fq^{\frac{11}{2}}}{382720}+\cdots \,  \, ,
\label{rZ3_3_2}
\end{split}
\end{align}
where $h_{\tableau{1}}=3/20$, and the second one respects the fusion rules by 
\eqref{fusion_check}. Consider, next, the Burge-reduced instanton partition function 
$\widehat{\mathcal{Z}}_{[1,2];(\ell)}^{(1,0,0),(1,0,0)}( \fq )$, where 
$\bs=\bs_2=[2,1,2]$ and $\bs_1=\bs_3=\bs_4=[2,2,1]$ are fixed. 
Then we find that the Burge-reduced instanton partition functions for 
$\ell=0,1$ in the fundamental chamber are
\begin{align}
\begin{split}
\widehat{\mathcal{Z}}_{[1,2];(0)}^{(1,0,0),(1,0,0)}( \fq ) &=
\left(  1-\fq \right) ^{2h_{\tableau{1}}-\frac35} 
{_{2}F_{1}}\left( -\frac{1}{5},\frac{1}{5};\frac{2}{5};\fq\right) 
\\
&=
1+\frac{\fq}{5}+\frac{183 \fq^2}{1400}+\frac{353 \fq^3}{3500}+
\frac{796073 \fq^4}{9520000}+\frac{17182143 \fq^{5}}{238000000}+\cdots \,  \, ,
\\
\widehat{\mathcal{Z}}_{[1,2];(1)}^{(1,0,0),(1,0,0)}( \fq ) &=
\frac{\fq^{\frac12}}{2}
\left(  1-\fq \right) ^{2h_{\tableau{1}}-\frac35} 
{_{2}F_{1}}\left( \frac{1}{5},\frac{4}{5};\frac{7}{5};\fq\right) 
\\
&=
\frac{\fq^{\frac12}}{2}+\frac{29 \fq^{\frac32}}{140}+
\frac{393 \fq^{\frac52}}{2800}+\frac{51949 \fq^{\frac72}}{476000}+
\frac{1725293 \fq^{\frac92}}{19040000}+
\frac{74432711 \fq^{\frac{11}{2}}}{952000000}+\cdots \,  \, ,
\label{rZ4_3_2}
\end{split}
\end{align}
where the second one respects the fusion rules by \eqref{fusion_check}.
The above results \eqref{rZ3_3_2} and \eqref{rZ4_3_2} support Conjecture \ref{conj:ex3}.
\end{exam}

\section{Summary of results and remarks}
\label{sec:remarks}

\subsection{Summary of results}
The point of this paper is to compute conformal blocks in 
integral-level WZW models. 
Starting from the $SU(N)$ instanton partition functions 
on ${\IC}^2/{\IZ}_n$, with rational $\Omega$-deformation, 
based on the algebra $\mathcal{A}(N,n;p)$ in \eqref{alg_A_intro}, 
we proposed (in Conjectures 
\ref{conj:ex1}, \ref{conj:ex2} and \ref{conj:ex3}) a way 
to compute integral-level, integrable $\widehat{\mathfrak{sl}}(n)_N$ 
WZW conformal blocks, with rational central charges, where one
has to deal with the issue of null states.
By considering a rational $\Omega$-background 
$\frac{\epsilon_1}{\epsilon_2}=-1-\frac{n}{N}$ in \eqref{omega_trivial} 
and imposing appropriate Burge conditions in \eqref{sp_burge_c} to 
eliminate the null states, we trivialized the coset factor in 
the algebra $\mathcal{A}(N,n;N)$ as in \eqref{alg_A_triv_intro}, 
and were left with an integral-level WZW model. 
Further, we showed, in Corollary \ref{prop:Ngch}, that 
the Chern classes \eqref{fc_label} of the gauge bundle, which labels the 
instanton partition functions on the gauge side, can be interpreted as 
the eigenvalues of Chevalley elements in the Cartan subalgebra of 
$\slchap{\dimaff}$ on the CFT side. 

\subsection{The work of Alday and Tachikawa}
In \cite{Alday:2010vg}, Alday and Tachikawa, using results from 
\cite{Braverman:2004vv, Braverman:2004cr, FFNR:08, Negut:08}, as well as
AGT, found 
that $SU(2)$ instanton partition functions on $(z_1, z_2) \in {\IC}^2$ 
with generic $\Omega$-deformation, and in the presence of a \textit{full} surface 
operator at $z_2=0$, agree with 
$\widehat{\mathfrak{sl}}(2)$ conformal blocks that are modified by 
a $\mathcal{K}$-operator insertion, 
at generic-level $k=-2-\frac{\epsilon_2}{\epsilon_1}$. 
A generalization to the relation between 
$SU(N)$ instanton partition functions in the presence of a full surface 
operator and modified $\widehat{\mathfrak{sl}}(N)$ conformal blocks 
at generic-level
$$
k=-N-\frac{\epsilon_2}{\epsilon_1},
$$
was proposed in \cite{Kozcaz:2010yp}. 

In analogy with the moduli space of $U(N)$ instantons on ${\IC}^2/{\IZ}_n$ 
without surface operators described in Section \ref{sec:instantons}, 
to describe the moduli space of $U(N)$ instantons on ${\IC}^2$ in 
the presence of a full surface operator, 
one can use the moduli space of $U(N)$ instantons on 
${\IC} \times ({\IC}/{\IZ}_N)$ \cite{FFNR:08, FR:10, Kanno:2011fw}. 
Unlike the $\widehat{\mathfrak{sl}}(n)_N$ conformal blocks discussed 
in our work, these conformal blocks are at generic-level, and modified 
by the $\mathcal{K}$-operator insertion. 

\subsection{The work of Belavin and Mukhametzhanov}
In \cite{Belavin:2012aa}, Belavin and Mukhametzhanov obtained integrable 
WZW conformal blocks for $(N,n)=(2,2)$, (see footnote \ref{footnote:BM}). 
They found that, starting from the $SU(2)$ instanton partition functions 
on ${\IC}^2/{\IZ}_2$ 
with generic $\Omega$-deformation, the $\widehat{\mathfrak{sl}}(2)_2$ 
WZW conformal blocks in Examples \ref{ex1:inst}, \ref{ex2:inst} and 
\ref{ex3:inst} are obtained as prefactors of $\mathcal{N}=1$ 
super-Virasoro conformal blocks with generic central charge. 
In our work, with suitable rational choices of the parameters and by 
imposing Burge conditions, we trivialized the super-Virasoro conformal 
blocks (and their higher $(N, n)$ analogues), and computed conformal 
blocks for rational central charges, for more values of $(N, n)$. We 
conjecture that our approach works, for rational central charges, 
for all $(N, n)$, $N, n \in \IZ_{\geq 1}$.

\section*{Acknowledgements}
We would like to thank Piotr Su{\l}kowski for useful comments on 
the manuscript. OF wishes to thank Vladimir Belavin, Jean-Emile 
Bourgine and Raoul Santachiara for discussions on the subject of 
this work and related topics. We thank the Australian Research 
Council for support of this work.

\newpage

\appendix


\renewcommand{\dimfin}{{M}}    
\renewcommand{\dimaff}{{M}}    
\renewcommand{\levelaff}{{m}}    

\section{Lie algebras, affine Lie algebras and notation}
\label{app:cartan}

\textit{\noindent
Here we describe the notation we use from the structure and
representation theories of finite dimensional and
affine Lie algebras, as it pertains to
$\slfin{\dimfin}$ and $\slchap{\dimaff}$.
For a more comprehensive treatment, see \cite{kac.book.1990}.}

\subsection{The finite dimensional Lie algebra $\slfin{\dimfin}$}
\label{Sec:Notationfin}

Define the index set $\indfin=\{1,2,\ldots,\dimfin-1\}$.
The Lie algebra $\slfin{\dimfin}$ has Chevalley generators
$\{H_i,E_i,F_i\,|\,i\in\indfin\}$ with
$\{H_i\,|\,i\in\indfin\}$ a basis for its Cartan subalgebra $\ocartan$.
The Cartan matrix $\oA$ of $\slfin{\dimfin}$ is the
$(\dimfin-1)\times(\dimfin-1)$ matrix having entries
$\oA_{ij}=2\delta_{ij}-\delta_{i,j+1}-\delta_{i,j-1}$
for $i,j\in\indfin$.
The dual $\ocartan^\ast$ of $\ocartan$ has basis
$\{\alpha_j\,|\,j\in\indfin\}$
where the simple root $\alpha_j$ is defined by
$\alpha_j(H_i)=\oA_{ij}$ for $i,j\in\indfin$.
For $i\in\indfin$, the fundamental weight $\ofwt_i\in\ocartan^\ast$
is uniquely defined by $\ofwt_i(H_j)=\delta_{ij}$ for
$j\in\indfin$.
It follows that $\ofwt_i=\sum_{j\in\indfin} (\oA^{-1})_{ij}\,\alpha_j$,
where $\oA^{-1}$,
the inverse of $\oA$, has entries
\begin{equation}\label{Eq:InvCartan}
\Bigl(\oA^{-1}\Bigr)_{ij}
=\min\{i,j\}-\frac{ij}{\dimfin}
\end{equation}
for $i,j\in\indfin$.
Note that $\{\ofwt_1,\ofwt_2,\ldots,\ofwt_{\dimfin-1}\}$
is also a basis for $\ocartan^\ast$.
The Weyl vector $\orho$ is defined by $\orho=\sum_{i\in\indfin}\ofwt_i$.


It is convenient to embed $\ocartan^\ast$ in $\IC^{\dimfin}$,
by choosing an orthonormal basis $\{\be_1,\be_2,\ldots,\be_\dimfin\}$
for $\IC^{\dimfin}$,
and setting $\alpha_i=\be_i-\be_{i+1}$ for $i\in\indfin$.
The standard inner product $\inner{\cdot}{\cdot}$ on $\IC^{\dimfin}$
then leads to
\begin{equation}\label{Eq:FinInners}
\inner{\alpha_i}{\alpha_j}=\oA_{ij},\qquad
\inner{\alpha_i}{\ofwt_j}=\delta_{ij},\qquad
\inner{\ofwt_i}{\ofwt_j}=\Bigl(\oA^{-1}\Bigr)_{ij}
\end{equation}
for $i,j\in\indfin$.
Note that $\ocartan^\ast$ is the $(\dimfin-1)$-dimensional subspace of
$\IC^\dimfin$ that is perpendicular to $\be_1+\be_2+\cdots+\be_{\dimfin}$.
For convenience, we set
$\be_0=\frac{1}{\dimfin}(\be_1+\be_2+\cdots+\be_{\dimfin})$.
It is then easily confirmed that
\begin{equation}\label{Eq:FinFwts}
\ofwt_i=\sum_{k=1}^i\be_k-i \, \be_0
\qquad\text{and}\qquad
\orho=\frac12\sum_{k=1}^\dimfin(\dimfin-2k+1)\,\be_k\,.
\end{equation}

\subsection{The affine Lie algebra $\slchap{\dimaff}$}
\label{Sec:Notationaff}

Define the index set $\indaff=\{0,1,\ldots,\dimaff-1\}$.
The affine Lie algebra $\slchap{\dimaff}$ has Chevalley generators
$\{D,H_i,E_i,F_i\,|\,i\in\indaff\}$ with
$\{D,H_i\,|\,i\in\indaff\}$ a basis for its Cartan subalgebra $\cartan$
(which is $(\dimaff+1)$-dimensional).
The element $C=\sum_{i\in\indaff}H_i$ is central in $\slchap{\dimaff}$.
The Cartan matrix $A$ of $\slchap{\dimaff}$ is the $\dimaff\times\dimaff$
matrix having entries
$A_{ij}=2\delta^{(\dimaff)}_{ij}-\delta^{(\dimaff)}_{i,j+1}
                                -\delta^{(\dimaff)}_{i,j-1}$
for $i,j\in\indaff$,
where $\delta^{(\dimaff)}_{ij}=1$ if $i\equiv j\pmod\dimaff$ and
$\delta^{(\dimaff)}_{ij}=0$ otherwise.
The dual $\cartan^\ast$ of $\cartan$ has basis
$\{\fwt_0,\alpha_j\,|\,j\in\indaff\}$,
where the simple root $\alpha_j$ is defined by
$\alpha_j(H_i)=A_{ij}$ for $i,j\in\indaff$ and $\alpha_j(D)=\delta_{j0}$,
and the fundamental weight $\fwt_0$ is defined by
$\fwt_0(H_i)=\delta_{i0}$ for $i\in\indaff$ and $\fwt_0(D)=0$.
For $j\in\indaffm$ we also define $\fwt_j\in\cartan^\ast$ by
setting $\fwt_j(H_i)=\delta_{ij}$ for $i\in\indaff$,
and $\fwt_j(D)=0$.
The Weyl vector $\rho$ is defined by $\rho=\sum_{i\in\indaff}\fwt_i$.
The null root $\delta$ is defined by $\delta=\sum_{i\in\indaff}\alpha_i$,
and is such that $\delta(H_i)=0$ for $i\in\indaff$ and $\delta(D)=1$.
Note that $\{\fwt_0,\fwt_1,\ldots,\fwt_{\dimaff-1},\delta\}$ is also
a basis for $\cartan^\ast$.

Because $\slfin{\dimaff}$ appears canonically as a subalgebra
of $\slchap{\dimaff}$,
we may identify the $\alpha_i$ in the two cases for $i\in\indaffm$.
In addition, $\fwt_i=\ofwt_i+\fwt_0$ for $i\in\indaffm$.
Then $\rho=\orho+\dimaff\fwt_0$.
We again use the set of orthonormal vectors
$\be_1,\be_2,\ldots,\be_{\dimaff}$,
supplementing it with $\fwt_0$ and $\delta$ to give a basis for
$\cartan^\ast\cup\IC\be_0$.
In terms of these,
\begin{equation}
\alpha_0=\be_\dimaff-\be_1+\delta
\qquad\text{and}\qquad
\alpha_i=\be_i-\be_{i+1}
\end{equation}
for $i\in\indaffm$.
The inner product $\inner{\cdot}{\cdot}$ on $\cartan^\ast\cup\IC\be_0$
is defined by setting
\begin{equation}
\inner{\delta}{\fwt_0}=1,\qquad
\inner{\delta}{\delta}=\inner{\fwt_0}{\fwt_0}=
\inner{\delta}{\be_i}=\inner{\fwt_0}{\be_i}=0\,,
\end{equation}
in addition to $\inner{\be_i}{\be_j}=\delta_{ij}$,
for $i,j\in\indaffm$.
Then
\begin{equation}\label{Eq:AffInners}
\inner{\alpha_i}{\alpha_j}=A_{ij},\qquad
\inner{\alpha_i}{\fwt_j}=\delta_{ij},\qquad
\inner{\delta}{\fwt_j}=1,\qquad
\inner{\fwt_i}{\fwt_j}
=\min\{i,j\}-\frac{ij}{\dimaff}
\end{equation}
for $i,j\in\indaff$.
It follows that for each $\beta\in\cartan^\ast$,
we have $\beta(D)=\inner{\beta}{\fwt_0}$ and
$\beta(H_i)=\inner{\beta}{\alpha_i}$ for $i\in\indaff$,
as is easily checked.

The $\slchap{\dimaff}$ weight lattice $P_{\dimaff}$,
the level-$\levelaff$ weight lattice $P_{\dimaff,\levelaff}$,
the dominant weight lattice $P^{+}_{\dimaff}$,
the level-$\levelaff$ dominant weight lattice $P^{+}_{\dimaff,\levelaff}$,
the regular dominant weight lattice $P^{++}_{\dimaff}$,
and the level-$\levelaff$ regular dominant weight lattice
$P^{++}_{\dimaff,\levelaff}$,
are defined by%
\footnote{\,
Usually, the set $\IC\delta$ is adjoined to these sets.
The theory then proceeds with little change.}
\begin{align}
\begin{split}
P_{\dimaff}&=\bigoplus_{i\in\indaff}\IZ\fwt_i\,,
\qquad\quad
P_{\dimaff,\levelaff}=\{\Lambda\in P_{\dimaff}\,|\,
     \inner{\delta}{\Lambda}=\levelaff\}\,,
\\
P^{+}_{\dimaff}&=\bigoplus_{i\in\indaff}\IZ_{\ge0}\fwt_i\,,
\qquad
P^{+}_{\dimaff,\levelaff}=P^{+}_{\dimaff}\cap P_{\dimaff,\levelaff}\,,
\\
P^{++}_{\dimaff}&=\bigoplus_{i\in\indaff}\IZ_{>0}\fwt_i\,,
\qquad
P^{++}_{\dimaff,\levelaff}=P^{++}_{\dimaff}\cap P_{\dimaff,\levelaff}\,.
\end{split}
\end{align}
We will often use $[d_0,d_1,\ldots,d_{\dimaff-1}]$ to denote
the element $\sum_{i\in\indaff} d_i\fwt_i$ from any of these sets.
Note that if $[d_0,d_1,\ldots,d_{\dimaff-1}]\in P_{\dimaff,\levelaff}$
then $\sum_{i\in\indaff} d_i=\levelaff$.

\subsection{Affine weights and partitions}
\label{Sec:AffPars}

For $\Lambda=[d_0,d_1,\ldots,d_{\dimaff-1}]\in P^{+}_{\dimaff}$,
it is convenient to define a partition
$\lambda=(\lambda_1,\lambda_2,\ldots)$
by setting
\begin{equation}\label{Eq:Wt2Par}
\lambda_i=
\begin{cases}
\sum_{j=i}^{\dimaff-1} d_j&\text{if $1\le i<\dimaff$},\\
0&\text{if $i\ge\dimaff$}\,.
\end{cases}
\end{equation}
We will denote the partition so obtained by $\partit(\Lambda)$.
Note that if
$\Lambda\in P^{+}_{\dimaff,\levelaff}$
and $\lambda=\partit(\Lambda)$ then
$\lambda_1\le\levelaff$ and $\lambda_{\dimaff}=0$.
On the other hand, given a partition $\lambda$ with
$\lambda_1\le\levelaff$ and $\lambda_{\dimaff}=0$,
there is a unique $\Lambda\in P^{+}_{\dimaff,\levelaff}$
such that $\lambda=\partit(\Lambda)$.
We then write $\Lambda=\partit^{-1}(\lambda)$.
Note that $\partit^{-1}$ is well-defined only if $\levelaff$ is specified.

\begin{lemm}\label{Lem:Par2Wt}
Consider a partition $\bsig=(\sigma_1,\sigma_2,\ldots)$
for which $\sigma_1<\dimaff$ and $\sigma_{\levelaff+1}=0$,
and define $\Lambda\in P^{+}_{\dimaff,\levelaff}$ by
$\Lambda=\sum_{i=1}^{\levelaff}\fwt_{\sigma_i}$.
If $\lambda=\partit(\Lambda)$
then $\lambda=\bsig^T$, the partition conjugate to $\bsig$.
\end{lemm}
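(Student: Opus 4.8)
The plan is to prove Lemma \ref{Lem:Par2Wt} by a direct computation, unwinding the definitions of $\Lambda$ in terms of the $\sigma_i$, of $\partit(\Lambda)$ via \eqref{Eq:Wt2Par}, and of conjugation of partitions. First I would write $\Lambda = [d_0, d_1, \ldots, d_{\dimaff-1}] = \sum_{i=1}^{\levelaff}\fwt_{\sigma_i}$, which means $d_j = \#\{i : 1 \le i \le \levelaff,\ \sigma_i = j\}$ for each $j \in \indaff$; in other words $d_j$ is the multiplicity of the part value $j$ among $\sigma_1, \ldots, \sigma_\levelaff$ (note $\sigma_1 < \dimaff$ guarantees $d_j = 0$ for $j \ge \dimaff$ is not an issue and that all parts land in $\{0,1,\ldots,\dimaff-1\}$; the zero parts $\sigma_i = 0$ for $i > $ the number of nonzero parts contribute to $d_0$). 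Then by \eqref{Eq:Wt2Par}, for $1 \le k < \dimaff$ we have $\lambda_k = \sum_{j=k}^{\dimaff-1} d_j = \#\{i : 1 \le i \le \levelaff,\ \sigma_i \ge k\}$, and $\lambda_k = 0$ for $k \ge \dimaff$.

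Next I would recall the standard description of the conjugate partition: $(\bsig^T)_k = \#\{i : \sigma_i \ge k\}$, where the count runs over all parts of $\bsig$. Since $\bsig = (\sigma_1, \sigma_2, \ldots)$ has at most $\levelaff$ nonzero parts (because $\sigma_{\levelaff+1} = 0$ and $\bsig$ is a partition, hence weakly decreasing), for any $k \ge 1$ the parts with $\sigma_i \ge k$ are all among $i \in \{1, \ldots, \levelaff\}$, so $(\bsig^T)_k = \#\{i : 1 \le i \le \levelaff,\ \sigma_i \ge k\}$. This matches the expression for $\lambda_k$ obtained above when $1 \le k < \dimaff$. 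For $k \ge \dimaff$: $\lambda_k = 0$ by definition, and $(\bsig^T)_k = \#\{i : \sigma_i \ge k\} = 0$ as well because $\sigma_1 < \dimaff \le k$. Hence $\lambda_k = (\bsig^T)_k$ for all $k \ge 1$, i.e.\ $\lambda = \bsig^T$.

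There is essentially no serious obstacle here — the lemma is a bookkeeping identity between three equivalent encodings of the same combinatorial data (a multiset of integers in $\{0, \ldots, \dimaff-1\}$ of size $\levelaff$). The one point requiring a little care is the role of the zero parts: when $\bsig$ has fewer than $\levelaff$ nonzero parts, the remaining $\sigma_i$ (for $i$ up to $\levelaff$) are zero and feed into $d_0$, so that $\sum_{j=0}^{\dimaff-1} d_j = \levelaff$ and $\Lambda \in P^{+}_{\dimaff,\levelaff}$ as claimed implicitly; but since conjugation only ever counts parts $\ge k$ with $k \ge 1$, these zero parts are invisible to both $\lambda$ and $\bsig^T$, so they cause no discrepancy. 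I would also note in passing that the hypothesis $\sigma_1 < \dimaff$ is exactly what ensures $\Lambda$ is a genuine $\slchap{\dimaff}$ weight (i.e.\ $\fwt_{\sigma_i}$ is defined for all $i$) and that $\lambda_{\dimaff} = 0$, so that $\lambda$ is a legitimate partition with at most $\dimaff - 1$ columns, making $\bsig = \lambda^T$ the consistent inverse statement used elsewhere (e.g.\ in footnote \ref{Ftn:Vector2Weight} and Corollary \ref{Cor:Chern=Char2}).
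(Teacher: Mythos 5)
Your proof is correct and is essentially the same argument as the paper's: the paper's one-line proof simply observes that in frequency notation $\bsig=((\dimaff-1)^{d_{\dimaff-1}},\ldots,2^{d_2},1^{d_1})$, from which $(\bsig^T)_k=\sum_{j\ge k}d_j=\lambda_k$ follows immediately, which is exactly the computation you carry out explicitly (with the additional, correct, care about zero parts and the range $k\ge\dimaff$).
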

\begin{proof}
This follows after noting that in frequency notation $\bsig$ is expressed
$((\dimaff-1)^{d_{\dimaff-1}},\ldots,2^{d_2},1^{d_1})$.
\end{proof}

\subsection{Representations and characters of $\slchap{\dimaff}$}
\label{Sec:AffChars}

For $\Lambda\in P^{+}_{\dimaff}$,
let $L(\Lambda)$ denote the highest weight $\slchap{\dimaff}$-module
whose highest weight vector $v_\Lambda$ is such that
$H(v_\Lambda)=\Lambda(H)v_\Lambda$ for all $H\in\cartan$.
Then, with $\Lambda=[d_0,d_1,\ldots,d_{\dimaff-1}]$,
we have $\Lambda(H_i)=d_i\in\IZ_{\ge0}$ for $i\in\indaff$
and $\Lambda(D)=0$.
If $\Lambda\in P^{+}_{\dimaff,\levelaff}$
then the module $L(\Lambda)$ is said to be of level $\levelaff$.
The formal character of $L(\Lambda)$ is defined to be
\begin{equation}\label{Eq:CharAffForm}
\text{ch}\,L(\Lambda)
=\sum_{\beta\in\cartan^\ast} (\dim V_\beta)\, \e^\beta,
\end{equation}
where $V_\beta$ is the subspace of $L(\Lambda)$ for which,
for each $v\in V_\beta$, we have
$H(v)=\beta(H)v$ for all $H\in\cartan$.
Instead of using the formal exponentials $\e^\beta$,
it is convenient to set $\e^{-\delta}=\qvar$ and $\e^{-\be_i}=x_i$
for $1\le i\le\dimaff$, and define
\begin{equation}\label{def_wzw_app}
\xchi^{\, \slchap{\dimaff}}_\Lambda(\qvar,\bx)=
\text{ch} \, L(\Lambda)
\Bigl\vert_{\{\e^{-\delta}=\qvar,\e^{-\be_i}=x_i\,|\,1\le i\le\dimaff\}}\,.
\end{equation}

For $\Lambda\in P^{+}_{\dimaff}$, the Weyl-Kac character formula
(\cite[eqn.~(10.4.5)]{kac.book.1990}) yields
\begin{equation}\label{Eq:CharAff}
\xchi^{\, \slchap{\dimaff}}_\Lambda(\qvar,\bx)
=
\e^\Lambda\,
\frac{{\mathcal N}^{\, \slchap{\dimaff}}_\Lambda(\qvar,\bx)}
     {{\mathcal N}^{\, \slchap{\dimaff}}_0(\qvar,\bx)}
\end{equation}
with ${\mathcal N}^{\, \slchap{\dimaff}}_\Lambda(\qvar,\bx)$ given by
\begin{equation}\label{Eq:CharAffNum}
{\mathcal N}^{\, \slchap{\dimaff}}_\Lambda(\qvar,\bx)=
\sum_{\begin{subarray}{c}
       k_1,\ldots,k_\dimaff\in\IZ\\[0.4ex]
       k_1+ \cdots+k_\dimaff=0
       \end{subarray}}
\det_{1\le i,j\le \dimaff}
\left( x_{i}^{-(\dimaff+\levelaff)k_{i}-\lambda_j+j+\lambda_{i}-i}
\qvar^{\, (\lambda_j-j)k_{i} +\frac12 (\dimaff+\levelaff) k_i^2 } \right),
\end{equation}
where $\lambda=\partit(\Lambda)$.
Using the $\slchap{\dimaff}$ Macdonald identity
\cite[eqn.~(10.4.4)]{kac.book.1990},
the denominator of \eqref{Eq:CharAff} may be alternatively expressed:
\begin{equation}\label{Eq:CharAffDen}
{\mathcal N}^{\, \slchap{\dimaff}}_0(\qvar,\bx)
=\pochinf{\qvar}{\qvar}^{\dimaff-1}
\prod_{1\le i<j\le \dimaff}
\vpochinf{\frac{x_i}{x_j}}{\qvar}
\vpochinf{\frac{x_j}{x_i}\, \fq}{\qvar}\,.
\end{equation}
(For a more detailed derivation of \eqref{Eq:CharAffNum}
and \eqref{Eq:CharAffDen}, see \cite[Appendix B.2]{Foda:2015bsa}).

In the case in which $\Lambda$ is of level one,
so that $\Lambda=\fwt_k$ for $k\in\indaff$,
the character $\text{ch}\,L(\Lambda)$ has the explicit expression
\cite[eqn.~(12.13.6)]{kac.book.1990}
\begin{equation}
\text{ch}\,L(\Lambda_k)=
\frac{\e^{\fwt_k}}{\pochinf{\qvar}{\qvar}^{\dimaff-1}}
\sum_{\eta\in \bigoplus_{i=1}^{\dimaff-1} {\IZ}\, \alpha_i}
\e^{\eta+\frac12\inner{\fwt_k}{\fwt_k}\delta
-\frac12\inner{\eta+\ofwt_k}{\eta+\ofwt_k}\delta}\,.
\end{equation}
Then, use of the inner products in Appendix \ref{Sec:Notationaff},
and noting that
$\inner{\eta+\ofwt_k}{\eta+\ofwt_k}=
\inner{\eta+\fwt_k}{\eta+\fwt_k}=
\inner{\eta}{\eta}+
2\inner{\eta}{\fwt_k}+\inner{\fwt_k}{\fwt_k}$,
leads to
\begin{equation}\label{Eq:CharAffLevel1}
\xchi^{\, \slchap{\dimaff}}_{\Lambda_k}(\qvar,\bx)=
\frac{
\e^{\fwt_k}}{\pochinf{\qvar}{\qvar}^{\dimaff-1}}
\sum_{\bell\in\IZ^{\dimaff-1}}
\qvar^{\, \ell_k+\sum_{i=1}^{\dimaff-1}(\ell_i^2-\ell_i\ell_{i-1})}
\prod_{i=1}^{\dimaff-1}
\left(\frac{x_{i+1}}{x_i}\right)^{\ell_i},
\end{equation}
where $\bell=(\ell_1,\ldots,\ell_{\dimaff-1})$ with $\ell_0=0$.

It will be useful to note that
$\xchi^{\, \slchap{\dimaff}}_{\Lambda}(\qvar,\bx)$
is invariant on multiplying each of the $x_i$ by the same non-zero constant.
This is obvious in \eqref{Eq:CharAffLevel1}, and consideration of
the determinant in \eqref{Eq:CharAffNum} shows that it holds also
for both the numerator and denominator of \eqref{Eq:CharAff}.

\subsection{Principally specialised characters of $\slchap{\dimaff}$}
\label{Sec:Principal}

Although the denominator
${\mathcal N}^{\, \slchap{\dimaff}}_0(\qvar,\bx)$
of \eqref{Eq:CharAff} can be written in product form,
this is not the case with the numerator
${\mathcal N}^{\, \slchap{\dimaff}}_\Lambda(\qvar,\bx)$.
However, after substituting $\e^{-\alpha_i}\to \qvar^{1/\dimaff}$
for each simple root $\alpha_i$,
${\mathcal N}^{\, \slchap{\dimaff}}_\Lambda(\qvar,\bx)$
can be written in product form.
The same is then true for
$\xchi^{\, \slchap{\dimaff}}_\Lambda(\qvar,\bx)$
\cite[Proposition 10.9]{kac.book.1990}.
This specialisation is effected by substituting
$x_i\to \qvar^{-i/\dimaff}$ into \eqref{Eq:CharAffNum}.
So define the principally specialised character
\begin{equation}\label{Eq:PrincipalDef}
\mathrm{Pr}\,{\chi}^{\slchap{\dimaff}}_\Lambda(\qvar)
=\left.
e^{-\Lambda}\, \xchi^{\, \slchap{\dimaff}}_\Lambda(\qvar,\bx)
\right|_{\{x_i\to \qvar^{-i/\dimaff},1\le i\le\dimaff\}}\,.
\end{equation}
The result can be conveniently expressed by, for
$\Lambda\in P^{+}_{\dimaff,\levelaff}$,
setting $\lambda=\partit(\Lambda)$ and
defining the set
$\Omega(\Lambda)=\{\levelaff+j-\lambda_j\,|\,j=1,\ldots,\dimaff\}$.
Then
\begin{equation} \label{Eq:Principal}
\mathrm{Pr}\,{\chi}^{\slchap{\dimaff}}_\Lambda(\qvar)=
\frac
{\vpochinf{\qvar}{\qvar}}
{\vpochinf{\qvar^{1+\frac{\levelaff}{\dimaff}}}{\qvar^{1+\frac{\levelaff}{\dimaff}}}}
\prod_{\begin{subarray}{c}
        1\le i<j\le \dimaff+\levelaff\\[.5ex]
        i\notin\Omega(\Lambda),j\in\Omega(\Lambda)
       \end{subarray}}
\frac{1}
{\vpochinf{\qvar^{\frac{j-i}{\dimaff}}}
          {\qvar^{1+\frac{\levelaff}{\dimaff}}}}
\prod_{\begin{subarray}{c}
        1\le i<j\le \dimaff+\levelaff\\[.4ex]
        i\in\Omega(\Lambda),j\notin\Omega(\Lambda)
       \end{subarray}}
\frac{1}
{\vpochinf{\qvar^{1+\frac{\levelaff-j+i}{\dimaff}}}
          {\qvar^{1+\frac{\levelaff}{\dimaff}}}}\,.
\end{equation}
(See \cite{Foda:2015bsa} for more details: the substitution
$q\to \qvar^{1/\dimaff}$ there gives the normalisation used here.)

\subsection{Characters of $\slchap{\dimaff}$ via crystal graphs}
\label{Sec:Crystal}

Here we describe the crystal graph enumeration of characters
of $\slchap{\dimaff}$
that was developed by the Kyoto group
\cite{DJKMO:1989c,DJKMO:1989,JMMO:1990}.
The formulation that we use is similar to that in
\cite[Section 2]{FLOTW:1999}.

For $\Lambda=[d_0,d_1,\ldots,d_{\dimaff-1}]\in P^+_{\dimaff,\levelaff}$,
let $\bsig=(\sigma_1,\sigma_2,\ldots)$ be the partition
for which $\sigma_{\levelaff+1}=0$ and
$\Lambda=\sum_{i=1}^{\levelaff}\fwt_{\sigma_i}$.%
\footnote{\, 
\label{Fn:Partit}By Lemma \ref{Lem:Par2Wt},
if $\lambda=\partit(\Lambda)$ then $\bsig=\lambda^T$.}
Let $\multicol{\bsig}$ be the set of $\levelaff$-tuples of
coloured Young diagrams $\bY=(Y_{1},Y_{2},\ldots,Y_{\levelaff})$
whose row lengths $Y_{\ell,i}$ are constrained by
\begin{equation}\label{Eq:Cylindrical}
\begin{split}
&Y_{\ell,i}\ge Y_{\ell+1,i+\sigma_\ell-\sigma_{\ell+1}}\quad
\text{for }i\ge1, \ 1\le\ell<\levelaff;\\
&Y_{\levelaff,i}\ge Y_{1,i+\sigma_\levelaff-\sigma_{1}+\dimaff}\quad
\text{for }i\ge1,
\end{split}
\end{equation}
and where the box $(i,j)$ of $Y_\ell$ is coloured
$(\sigma_\ell+i-j)\bmod\dimaff$.%
\footnote{\,
Note that this colouring convention differs from that
defined in Section \ref{sec:characterisation}.
It also differs from that used in \cite{FLOTW:1999}, but
is appropriate to the conventions used here and in \cite{Foda:2015bsa}.}
The elements of $\multicol{\bsig}$ are called cylindrical multipartitions
in \cite{FLOTW:1999}.
For $\bY\in\multicol{\bsig}$ and $i\in\indaff$, define $k_i(\bY)$
to be the number of boxes in $\bY$ that are coloured $i$,
and then set $\delta k_{i}(\bY)=k_{i}(\bY)-k_0(\bY)$.

The crystal graph theory
shows that the character $\text{ch}\,L(\Lambda)$ of the
irreducible $\slchap{\dimaff}$ representation with highest weight
$\Lambda$ can be expressed as a sum over a subset 
$\multicolHL{\bsig}$ of $\multicol{\bsig}$,
the elements of $\multicolHL{\bsig}$ being
known as highest-lift multipartitions.
We will not define $\multicolHL{\bsig}$ here
(the definition can be found in \cite[Proposition 2.11]{FLOTW:1999}),
because we will just use and state its pivotal property.
Using, as before, $\e^{-\delta}=\qvar$ and $\e^{-\be_i}=x_i$
for $i\in\indaffm$, this expression is \cite[Theorem 1.2]{DJKMO:1989}:
\begin{equation}\label{Eq:CrystalCharAst}
\begin{split}
\xchi^{\, \slchap{\dimaff}}_\Lambda(\qvar,\bx)
&=
\e^\Lambda\,\sum_{\bY\in\multicolHL{\bsig}}
\qvar^{\, k_0(\bY)}
\prod_{i=0}^{\dimaff-1} \left(\frac{x_i}{x_{i+1}}\right)^{k_i(\bY)}\\
&=
\e^\Lambda\,\sum_{\bY\in\multicolHL{\bsig}}
\qvar^{\, k_0(\bY)}
\prod_{i=1}^{\dimaff-1}
\left(\frac{x_i}{x_{i+1}}\right)^{\delta k_i(\bY)}\,,
\end{split}
\end{equation}
the second equality following on using
$k_i(\bY)=\delta k_i(\bY)+k_0(\bY)$ for $i\in\indaff$,
and noting that, in particular, $\delta k_0(\bY)=0$.

The property of $\multicolHL{\bsig}$ that is needed is that there
is a natural bijection%
\footnote{\,
Obtained using an abacus with $\dimaff$ rungs --- see 
\cite{Foda:2015bsa}.}
\begin{equation}
\multicol{\bsig}\to\multicolHL{\bsig}\times\partitset,
\end{equation}
where $\partitset$ is the set of all partitions, such that
if $\bY\mapsto(\bY^*,\lambda)$ then
$\delta k_i(\bY)=\delta k_i(\bY^\ast)$
for $i\in\indaff$, and $k_0(\bY)=k_0(\bY^\ast)+|\lambda|$.
Because the generating function for $\partitset$ is
\begin{equation}
\sum_{\lambda\in\partitset} \qvar^{\, |\lambda|}=\frac{1}{\pochinf{\qvar}{\qvar}},
\end{equation}
the expression \eqref{Eq:CrystalCharAst} yields
\begin{equation}\label{Eq:CrystalChar}
\begin{split}
\xchi^{\, \slchap{\dimaff}}_\Lambda(\qvar,\bx)
=
\e^\Lambda\, \pochinf{\qvar}{\qvar}
\sum_{\bY\in\multicol{\bsig}}
\qvar^{\, k_0(\bY)}
\prod_{i=1}^{\dimaff-1}
\left(\frac{x_i}{x_{i+1}}\right)^{\delta k_i(\bY)}\,.
\end{split}
\end{equation}
Now for a vector
$\bell=(\ell_1,\ell_2,\ldots,\ell_{\dimaff-1})
 \in\IZ^{\dimaff-1}$,
define $\multicolref{\bsig}{\bell}\subset\multicol{\bsig}$
to be the set of all $\bY\in\multicol{\bsig}$
for which $\delta k_i(\bY)=\ell_i$ for each $i\in\indfin$.
Also set $\ell_0=\ell_{\dimaff}=0$ for convenience.
We can then write \eqref{Eq:CrystalChar} in the form
\begin{equation}\label{Eq:CrystalChar2}
\begin{split}
\xchi^{\, \slchap{\dimaff}}_\Lambda(\qvar,\bx)
&=
\e^\Lambda\, 
\pochinf{\qvar}{\qvar}
\sum_{\bell\in\IZ^{\dimaff-1}}
\sum_{\bY\in\multicolref{\bsig}{\bell}}
\qvar^{\, k_0(\bY)}
\prod_{i=1}^{\dimaff-1}
\left(\frac{x_i}{x_{i+1}}\right)^{\ell_i}
\\
&=
\e^\Lambda
\sum_{\bell\in\IZ^{\dimaff-1}}
\xa^\Lambda_{\bell}(\qvar)
\prod_{i=1}^{\dimaff}
x_i^{\, \ell_i-\ell_{i-1}},
\end{split}
\end{equation}
where the $\slchap{\dimaff}$ (normalised) string function
$\xa^\Lambda_{\bell}(\qvar)$ is given by
\begin{equation}\label{Eq:SLstring}
\xa^\Lambda_{\bell}(\qvar)
=
\pochinf{\qvar}{\qvar}
\sum_{\bY\in\multicolref{\bsig}{\bell}}
\qvar^{\, k_0(\bY)}\,.
\end{equation}
%
Tabulations of the coefficients of the string functions
$\xa^\Lambda_{\bell}(\qvar)$ in the cases $2\le\dimaff\le 9$ for weights
$\Lambda\in P^{+}_{\dimaff,\levelaff}$ of various
small levels $m$ can be found in \cite{KMPS:1990}.%
\footnote{\,
This string function $\xa^\Lambda_{\bell}(\qvar)$
is usually denoted as $a^\Lambda_\gamma(\fq)$
where $\gamma=\Lambda-\sum_{i=1}^{\dimaff-1}\ell_i\alpha_i$
(see Appendix \ref{Sec:WZWChars}, especially (\ref{Eq:StringEq})).}


\subsection{WZW Characters}
\label{Sec:WZWChars}

For any affine Lie algebra $\mathfrak g$,
the Sugawara construction demonstrates a homomorphism from
the Virasoro algebra \emph{Vir} to $U_c(\mathfrak g)$,
a completion of the universal enveloping algebra of
$\mathfrak g$, for any level $\levelaff\ge0$
(see \cite[\S12.8]{kac.book.1990} for details).
Consequently, for $\Lambda\in P^{+}_{\dimaff,\levelaff}$,
the $\slchap{\dimaff}$-module $L(\Lambda)$ also
serves as a \emph{Vir}-module.
The central charge $c$ and
conformal dimension $h_\Lambda$ of this \emph{Vir}-module are given
in \eqref{Eq:CC-CD}.

Through the homomorphism $\emph{Vir}\to U_c(\slchap{\dimaff})$,
the Virasoro generator $L_0$ acts on $L(\Lambda)$ by
$L_0\mapsto h_\Lambda\text{Id}-D$,
where $\text{Id}$ is the identity operator
\cite[Corollary 12.8]{kac.book.1990}.
Consequently, the definition \eqref{Eq:WZWCharDef}
yields
\begin{equation}\label{Eq:WZWChar}
\chi_{\Lambda}^{\slchap{\dimaff}_\levelaff}(\qvar,\hat{\bft})
=
\fq^{\, h_\Lambda}\,
\mathrm{Tr}_{L(\Lambda)}\,
\qvar^{-D}
\prod_{i=1}^{\dimaff-1} \hat{\ft}_{i}^{\, H_i }\,.
\end{equation} 
Now note that \eqref{Eq:CrystalChar2} can be written as a
sum of terms $\exp(\beta)$ with $\beta$ of the form
\begin{equation}\label{Eq:BetaForm}
\beta
=\Lambda-k\delta-\sum_{j=1}^\dimaff \e_j(\ell_j-\ell_{j-1})
=\Lambda-k\delta-\sum_{j=1}^\dimaff \ell_j\alpha_j
\end{equation}
for some $k\in\IZ$.
Then, because $\beta(D)=-k$ and
$\beta(H_i)=d_i+\ell_{i-1}-2\ell_i+\ell_{i+1}$,
\eqref{Eq:WZWChar} yields:
\begin{equation}\label{Eq:SLCharDynkin}
\qchi^{\slchap{\dimaff}_{\levelaff}}_\Lambda(\qvar,\hat{\bft})
=
\fq^{\, h_{\Lambda}}\,
\sum_{\bell\in\IZ^{\dimaff-1}}
\xa^\Lambda_{\bell}(\qvar)
\prod_{i=1}^{\dimaff-1}
\hat{\ft}_i^{\, d_i+\ell_{i-1}-2\ell_i+\ell_{i+1}}\,.
\end{equation}
Alternatively, this may be expressed as
\begin{equation}\label{Eq:SLCharDynkin2}
\qchi^{\slchap{\dimaff}_{\levelaff}}_\Lambda(\qvar,\hat{\bft})
=
\fq^{\, h_{\Lambda}}\,
\sum_{\bell\in\IZ^{\dimaff-1}}
a^\Lambda_{\gamma(\bell)}(\qvar)
\prod_{i=1}^{\dimaff-1}
\hat{\ft}_i^{\, \gamma(\bell)_i},
\end{equation}
after defining
$\gamma(\bell)=[\gamma_0,\gamma_1,\ldots,\gamma_{\dimaff-1}]
\in P_{\dimaff,\levelaff}$
by setting
\begin{equation}\label{Eq:ell2dinkin}
\gamma_i=
d_i+\ell_{i-1}-2\ell_i+\ell_{i+1}
=d_i-\sum_{j=1}^{\dimaff-1} A_{ij} \ell_j
\end{equation}
for each $i\in\indaff$, and defining
\begin{equation}\label{Eq:StringEq}
a^\Lambda_{\gamma(\bell)}(\fq)=\xa^\Lambda_{\bell}(\fq)\,.
\end{equation}
By using \eqref{Eq:SLstring}, we can also express \eqref{Eq:SLCharDynkin}
as
\begin{equation}\label{Eq:SLCharDynkin3}
\qchi^{\slchap{\dimaff}_{\levelaff}}_\Lambda(\qvar,\hat{\bft})
=
\fq^{\, h_{\Lambda}}\,
\pochinf{\qvar}{\qvar}
\sum_{\bY\in\multicol{\bsig}}
\qvar^{\, k_0(\bY)}
\prod_{i=1}^{\dimaff-1}
\hat{\ft}_i^{\, d_i+\delta k_{i-1}(\bY)-2\delta k_i(\bY)+\delta k_{i+1}(\bY)},
\end{equation}
where we set $\delta k_0(\bY)=\delta k_\dimaff(\bY)=0$.

In the level one case where $\Lambda=\Lambda_k$,
comparing $\eqref{Eq:CharAffLevel1}$ with \eqref{Eq:CrystalChar2}
shows that
\begin{equation}\label{Eq:CharAffLevel1xa}
\xa^{\Lambda_k}_{\bell}(\fq)=
\frac{1}
{\pochinf{\qvar}{\qvar}^{\dimaff-1}}\,
\qvar^{-\ell_k+\sum_{i=1}^{\dimaff-1}(\ell_i^2-\ell_i\ell_{i-1})},
\end{equation}
where $\ell_0=0$.
Then \eqref{Eq:SLCharDynkin} gives
\begin{equation}\label{Eq:CharAffLevel1t}
\qchi^{\slchap{\dimaff}_{1}}_{\Lambda_k}(\qvar,\hat{\bft})
=
\frac{\fq^{\, h_{\Lambda_k}}}{\pochinf{\qvar}{\qvar}^{\dimaff-1}}
\sum_{\bell\in\IZ^{\dimaff-1}}
\qvar^{-\ell_k+\sum_{i=1}^{\dimaff-1}(\ell_i^2-\ell_i\ell_{i-1})}
\prod_{i=1}^{\dimaff-1}
\hat{\ft}_i^{\, \delta_{ik}+\ell_{i-1}-2\ell_i+\ell_{i+1}},
\end{equation}
where $\ell_0=\ell_{\dimaff}=0$.

\subsection{Converting between the $\bfx$ and $\hat{\bft}$ variables}
\label{Sec:Convert}

In Appendix \ref{Sec:AffChars},
in expressing the character of $L(\Lambda)$,
the formal exponentials $e^\beta$ were exchanged
for $\e^{-\delta}=\qvar$ and $\e^{-\be_i}=x_i$ for $1\le i\le\dimaff$.
However, in Appendix \ref{Sec:WZWChars}, the same character
was expressed using
$\qvar$ along with $\hat{\ft}_1,\ldots,\hat{\ft}_{\dimaff-1}$.
Here, we convert between these variables, and give a form
of the Weyl-Kac formula that expresses
$\qchi^{\slchap{\dimaff}_\levelaff}_\Lambda(\qvar,\hat{\bft})$
in the $\hat{\bft}$ variables.

In terms of Dynkin components, \eqref{Eq:BetaForm} takes the form
$\beta=-k\delta+\sum_{i\in\indaff}
       (d_i+\ell_{i-1}-2\ell_i+\ell_{i+1})\Lambda_i.$
Therefore (see \eqref{Eq:SLCharDynkin}),
$\qchi^{\slchap{\dimaff}_{\levelaff}}_\Lambda(\qvar,\hat{\bft})$
is obtained from $\fq^{h_\Lambda} L(\Lambda)$ by
exchanging $\e^{-\delta}=\qvar$, $e^{\fwt_0}=1$ and
$\e^{\fwt_i}=\hat{\ft}_i$ for $1\le i<\dimaff$.

Because $\alpha_i=\be_i-\be_{i+1}$ and
$\alpha_i=-\fwt_{i-1}+2\fwt_i-\fwt_{i+1}$,
the $\bx$ and $\hat{\bft}$ variables are related by
\begin{equation}
\frac{x_i}{x_{i+1}}=\frac{\hat{\ft}_{i-1}\hat{\ft}_{i+1}}{\hat{\ft}_i^2}
\quad
\Longleftrightarrow
\quad
x_i=\frac{\hat{\ft}_{i-1}}{\hat{\ft}_{i}}
\frac{\hat{\ft}_{M}}{\hat{\ft}_{M-1}} \, x_M
\end{equation}
for $1\le i<\dimaff$, where we set $\hat{\ft}_0=\hat{\ft}_{\dimaff}=1$.
In view of the last paragraph of Appendix \ref{Sec:AffChars},
we then obtain 
$\fq^{-h_\Lambda}\qchi^{\slchap{\dimaff}_{\levelaff}}_\Lambda(\qvar,\hat{\bft})$
by substituting $x_i\to\hat{\ft}_{i-1}/\hat{\ft}_i$ into \eqref{Eq:CharAff}.
For $\Lambda=[d_0,d_1,\ldots,d_{\dimaff-1}]\in P^{+}_{\dimaff,\levelaff}$
this gives, after also making use of \eqref{Eq:CharAffNum}
and \eqref{Eq:CharAffDen},
\begin{equation}\label{wzw_ch_formula}
\qchi^{\slchap{\dimaff}_\levelaff}_\Lambda(\qvar,\hat{\bft})
=
\fq^{\, h_\Lambda}\,
\frac{{\mathcal N}_\Lambda(\qvar,\hat{\bft})}
     {
\pochinf{\qvar}{\qvar}^{\dimaff-1}
\prod_{1\le i<j\le \dimaff}
\vpochinf{\hat{\ft}_{i-1}\hat{\ft}_j/\hat{\ft}_{i}\hat{\ft}_{j-1}}{\qvar}
\vpochinf{\qvar\,\hat{\ft}_{i}\hat{\ft}_{j-1}/\hat{\ft}_{i-1}\hat{\ft}_{j}}{\qvar}
}
\,\prod_{i=1}^{\dimaff-1}\hat{\ft}_i^{\, d_i},
\end{equation}
where, with $\lambda=\partit(\Lambda)$,
\begin{equation}\label{Eq:CharAffNum2t}
{\mathcal N}_\Lambda(\qvar,\hat{\bft})=
\sum_{\begin{subarray}{c}
       k_1,\ldots,k_\dimaff\in\IZ\\[0.4ex]
       k_1+ \cdots+k_\dimaff=0
       \end{subarray}}
\det_{1\le i,j\le \dimaff}
\left( \bigl(\hat{\ft}_i/\hat{\ft}_{i-1}\bigr)^{(\dimaff+\levelaff)k_{i}
                                    +\lambda_j-j-\lambda_{i}+i} \,
\qvar^{\, (\lambda_j-j)k_{i} +\frac12 (\dimaff+\levelaff) k_i^2 } \right)\,.
\end{equation}

\section{Some AGT correspondences}\label{app:agt_check}

\textit{\noindent
Following \cite{Alday:2009aq, Wyllard:2009hg, Mironov:2009by, Belavin:2011tb}, 
we summarize some explicit AGT correspondences to identify 
our conventions 
in Section \ref{subsec:4pt_inst} and to confirm the $U(1)$ 
factor $Z_{\mathcal{H}}\left( \bm, \bm^{\prime}; \fq \right)$ 
in \eqref{u1_factor}.}

\subsection{$(N,n)=(2,1)$ and Virasoro conformal blocks}

For $(N,n)=(2,1)$, the $SU(2)$ instanton partition function \eqref{inst_vec_pf} 
with $a_1=-a_2=a$ is computed as
\begin{align}
\begin{split}
Z_{\boldsymbol{0};\emptyset}^{\boldsymbol{0},\boldsymbol{0}}
(  a, \bm, \bm^{\prime}; \fq ) &=
1+\fq \left\lgroup \frac{(a-m_1)\, (a-m_2)\, (a+m_1^{\prime}-\epsilon_1-\epsilon_2)  
\, (a+m_2^{\prime}-\epsilon_1-\epsilon_2) }
{2\, a\, \epsilon_1\, \epsilon_2 \, (-2\, a+\epsilon_1+\epsilon_2) }
\right.
\\
&\qquad\left.
-\frac{(a+m_1)\, (a+m_2) \, (a-m_1^{\prime}+\epsilon_1+\epsilon_2)  
\, (a-m_2^{\prime}+\epsilon_1+\epsilon_2) }
{2\, a\, \epsilon_1\, \epsilon_2 \, (2\, a+\epsilon_1+\epsilon_2) }
\right\rgroup 
+O\ll \fq^2\rr.
\end{split}
\end{align}
In \cite{Alday:2009aq} 
(see \cite{Gaiotto:2009ma, Marshakov:2009gn} for non-conformal/Whittaker limits) 
it was found that, by subtracting the $U(1)$ factor \eqref{u1_factor} for 
$(N,n)=(2,1)$, the normalized instanton partition function
\begin{align}
\widehat{Z}_{\boldsymbol{0};\emptyset}^{\boldsymbol{0},\boldsymbol{0}}
( a, \bm, \bm^{\prime}; \fq ) :=
\left( 1-\fq \right) ^{-\frac{\ll \sum_{I=1}^2 m_I\rr 
\ll \epsilon_1+\epsilon_2 - \frac{1}{2}\sum_{I=1}^2 m_I^{\prime}\rr }{\epsilon_1\, \epsilon_2}}\,
Z_{\boldsymbol{0};\emptyset}^{\boldsymbol{0},\boldsymbol{0}}
(  a, \bm, \bm^{\prime}; \fq ) 
\label{nek_N2n1_norm}
\end{align}
gives the 
$c=1+6\,\frac{(\epsilon_1+\epsilon_2)^2}{\epsilon_1\, \epsilon_2}$ 
Virasoro conformal blocks of 4-point function \eqref{para_WN_block} on ${\IP}^1$ 
by the parameter identifications \eqref{mass_rel_base} and \eqref{coulomb_rel}:
\begin{align}
\begin{split}
\mu^v=\frac{\epsilon_1+\epsilon_2}{2} + a,
\quad
\mu_1&=\frac{\epsilon_1+\epsilon_2}{2} + \frac{m_1-m_2}{2},
\quad
\mu_2=\frac{m_1+m_2}{2},
\\
\mu_4&=\frac{\epsilon_1+\epsilon_2}{2} - \frac{m_1^{\prime}-m_2^{\prime}}{2},
\quad
\mu_3=\frac{m_1^{\prime}+m_2^{\prime}}{2}.
\label{agt_para_N2n1}
\end{split}
\end{align}
Here, note that, in the $n=1$ cases, 
the WZW factor $\widehat{\mathfrak{sl}}(n)_N$ in 
the algebra \eqref{alg_A} is absent. 
For example, the Virasoro conformal block at level 1,
\begin{align}
\frac{
\ll \Delta_{\mu^v} - \Delta_{\mu_1} + \Delta_{\mu_2}\rr  
\ll \Delta_{\mu^v} + \Delta_{\mu_3} - \Delta_{\mu_4}\rr  
}{
2\, \Delta_{\mu^v}},
\qquad
\Delta_{\mu} =
\frac{
\mu \ll \epsilon_1+\epsilon_2-\mu \rr  
}{
\epsilon_1\, \epsilon _2},
\end{align} 
agrees with the coefficient of $\fq$ in \eqref{nek_N2n1_norm}.

\subsection{$(N,n)=(3,1)$ and $\mathcal{W}_3$ conformal blocks}

For $(N,n)=(3,1)$, the $SU(3)$ instanton partition function \eqref{inst_vec_pf} 
with $a_3=-a_1-a_2$ is computed as
\begin{tiny}
\begin{align}
\begin{split}
&
Z_{\boldsymbol{0};\emptyset}^{\boldsymbol{0},\boldsymbol{0}}
( \ba, \bm, \bm^{\prime}; \fq ) 
\\
&= 1
+\fq \left\lgroup \frac{(a_1-m_1)\, (a_1-m_2)\, (a_1-m_3) 
\, ( -a_1-m_1^{\prime}+\epsilon_1+\epsilon_2)  
\, ( -a_1-m_2^{\prime}+\epsilon_1+\epsilon_2)  
\, ( -a_1-m_3^{\prime}+\epsilon_1+\epsilon_2) }
{\epsilon_1\, \epsilon_2\, (a_1-a_2)\, (2\, a_1+a_2) 
\, ( -2\, a_1-a_2+\epsilon_1+\epsilon_2)  
\, ( -a_1+a_2+\epsilon_1+\epsilon_2) }
\right.
\\
&\quad
+
\frac{(a_2-m_1)\, (a_2-m_2)\, (a_2-m_3) 
\, (-a_2-m_1^{\prime}+\epsilon_1+\epsilon_2)  
\, (-a_2-m_2^{\prime}+\epsilon_1+\epsilon_2)  
\, (-a_2-m_3^{\prime}+\epsilon_1+\epsilon_2) }
{\epsilon_1\, \epsilon_2\, (a_2-a_1)\, (a_1+2\, a_2) 
\, (-a_1-2\, a_2+\epsilon_1+\epsilon_2)  
\, (a_1-a_2+\epsilon_1+\epsilon_2) }
\\
&\quad
\left.
-\frac{(a_1+a_2+m_1)\, (a_1+a_2+m_2)\, (a_1+a_2+m_3) 
\, (a_1+a_2-m_1^{\prime}+\epsilon_1+\epsilon_2)  
\, (a_1+a_2-m_2^{\prime}+\epsilon_1+\epsilon_2)  
\, (a_1+a_2-m_3^{\prime}+\epsilon_1+\epsilon_2) }
{\epsilon_1\, \epsilon_2\, (2\, a_1+a_2)\, (a_1+2\, a_2) 
\, (2\, a_1+a_2+\epsilon_1+\epsilon_2)  
\, (a_1+2\, a_2+\epsilon_1+\epsilon_2) }\right\rgroup 
\\
&\quad
+O\ll \fq^2\rr.
\end{split}
\end{align}
\end{tiny}

By subtracting the $U(1)$ factor \eqref{u1_factor} for $(N,n)=(3,1)$, 
one finds that the normalized instanton partition function
\begin{align}
\widehat{Z}_{\boldsymbol{0};\emptyset}^{\boldsymbol{0},\boldsymbol{0}}
( \ba, \bm, \bm^{\prime}; \fq ) :=
\left( 1-\fq \right) ^{-\frac{\ll \sum_{I=1}^3 m_I\rr 
\ll \epsilon_1+\epsilon_2 - \frac{1}{3}\sum_{I=1}^3 m_I^{\prime}\rr }{\epsilon_1\, \epsilon_2}}\,
Z_{\boldsymbol{0};\emptyset}^{\boldsymbol{0},\boldsymbol{0}}
( \ba, \bm, \bm^{\prime}; \fq )
\label{nek_N3n1_norm}
\end{align}
gives the $\mathcal{W}_3$ conformal blocks of 4-point 
function \eqref{para_WN_block} on ${\IP}^1$, with 
$c=2+24\,\frac{(\epsilon_1+\epsilon_2)^2}{\epsilon_1\, \epsilon_2}$, 
by the parameter identifications \eqref{mass_rel_base} and \eqref{coulomb_rel} 
\cite{Wyllard:2009hg, Mironov:2009by} 
(see \cite{Taki:2009zd, Keller:2011ek, Kanno:2012xt} for non-conformal/Whittaker limits):
\begin{align}
\begin{split}
\mu^v_1&=
\frac{\epsilon_1+\epsilon_2}{2} + \frac{a_1-a_2}{2},
\quad
\mu^v_2=
\frac{\epsilon_1+\epsilon_2}{2} + \frac{a_1+2\, a_2}{2},
\\
\mu_{1,1}&=
\frac{\epsilon_1+\epsilon_2}{2} + \frac{m_1-m_2}{2},
\quad
\mu_{1,2}=
\frac{\epsilon_1+\epsilon_2}{2} + \frac{m_2-m_3}{2},
\quad
\mu_{2}=\frac{m_1+m_2+m_3}{2},
\\
\mu_{4,1}&=
\frac{\epsilon_1+\epsilon_2}{2} - \frac{m_1^{\prime}-m_2^{\prime}}{2},
\quad
\mu_{4,2}=
\frac{\epsilon_1+\epsilon_2}{2} - \frac{m_2^{\prime}-m_3^{\prime}}{2},
\quad
\mu_{3}=\frac{m_1^{\prime}+m_2^{\prime}+m_3^{\prime}}{2}.
\end{split}
\end{align}
For example, the $\mathcal{W}_3$ conformal block at level 1,

\begin{align}
\begin{split}
&
\frac{
\ll  \Delta_{\bmu^v} - \Delta_{\bmu_1} + \Delta_{0,\mu_2}\rr   
\ll  \Delta_{\bmu^v} - \Delta_{\bmu_4} + \Delta_{\mu_3,0}\rr 
}{
2\, \Delta_{\bmu^v}}
\\
&+
\ll -\frac{w_{\bmu^v}}{2} - w_{\bmu_1} + \frac{w_{0,\mu_2}}{2} 
+ \frac{3\left( \Delta_{\bmu^v}-\Delta_{\bmu_1}\right) w_{0,\mu_2}}
{2\, \Delta_{0,\mu_2}}
- \frac{3\left( \Delta_{0,\mu_2}-\Delta_{\bmu_1}\right) w_{\bmu^v}}
{2\, \Delta_{\bmu^v}}\rr 
\\
&\times
\ll -\frac{w_{\bmu^v}}{2} - w_{\bmu_4} + \frac{w_{\mu_3,0}}{2} 
+ \frac{3\left( \Delta_{\bmu^v}-\Delta_{\bmu_4}\right) w_{\mu_3,0}}
{2\, \Delta_{\mu_3,0}}
- \frac{3\left( \Delta_{\mu_3,0}-\Delta_{\bmu_4}\right) w_{\bmu^v}}
{2\, \Delta_{\bmu^v}}\rr 
\\
&\times
\ll \Delta_{\bmu^v}
\ll \frac{4\, \epsilon_1\, \epsilon_2\, \Delta_{\bmu^v}}
{4\, \epsilon_1\, \epsilon_2 + 15 \left( \epsilon_1+\epsilon_2\right) ^2}
-\frac{3 \left( \epsilon_1+\epsilon_2\right) ^2}
{4\, \epsilon_1\, \epsilon_2 + 15 \left( \epsilon_1+\epsilon_2\right)^2}\rr 
-\frac{9\, w_{\bmu^v}^2}{2\, \Delta_{\bmu^v}}\rr ^{-1},
\end{split}
\end{align}
agrees with the coefficient of $\fq$ in \eqref{nek_N3n1_norm}, 
where
\begin{footnotesize}
\begin{align}
\begin{split}
\Delta_{\bmu}&=\Delta_{\mu_1,\mu_2}=
-\frac{2 \ll 2\, \mu_1^2 + 2\, \mu_1\, \mu_2 + 2\, \mu_2^2
-3\, (\epsilon_1+\epsilon_2)\, (\mu_1+\mu_2) \rr }
{3\, \epsilon_1\, \epsilon_2},
\\
w_{\bmu}&=w_{\mu_1,\mu_2}=
\frac{1}{\epsilon_1\, \epsilon_2}
\ll \frac23\, (2\, \mu_1+\mu_2) -\, (\epsilon_1+\epsilon_2) \rr 
\ll \frac23\, (\mu_1+2\, \mu_2) -\, (\epsilon_1+\epsilon_2) \rr 
\ll \frac23\, (\mu_1-\mu_2) \rr 
\\
&\qquad\times
\ll \frac{-6}{4\, \epsilon_1\, \epsilon_2 + 15\, (\epsilon_1+\epsilon_2)^2}\rr ^{\frac12}.
\end{split}
\end{align}
\end{footnotesize}

\subsection{$(N,n)=(2,2)$ and $\mathcal{N}=1$ super-Virasoro conformal blocks}

For $(N,n)=(2,2)$, the $SU(2)$ instanton partition functions \eqref{inst_vec_pf} 
with $a_1=-a_2=a$ are computed as \textit{e.g.},
\begin{tiny}
\begin{align}
\begin{split}
&
Z_{(0,0);(0)}^{(0,0),(0,0)}
(  a, \bm, \bm^{\prime}; \fq ) 
\\
&=
1 + \fq \left\lgroup \frac{(a-m_1)\, (a-m_2)\, (a+m_1^{\prime}-\epsilon_1-\epsilon_2)  
\, (a+m_2^{\prime}-\epsilon_1-\epsilon_2) }
{4\, a \, \epsilon_2\, (\epsilon_1-\epsilon_2)  
\, (-2\, a+\epsilon_1+\epsilon_2) }
+\frac{(a-m_1)\, (a-m_2) \, (a+m_1^{\prime}-\epsilon_1-\epsilon_2)  
\, (a+m_2^{\prime}-\epsilon_1-\epsilon_2) }
{4\, a\, \epsilon_1 \, (\epsilon_2-\epsilon_1)  
\, (-2\, a+\epsilon_1+\epsilon_2) }\right.
\\
&\left.
+\frac{(a+m_1)\, (a+m_2) \, (a-m_1^{\prime}+\epsilon_1+\epsilon_2)  
\, (a-m_2^{\prime}+\epsilon_1+\epsilon_2) }
{4\, a\, \epsilon_1 \, (\epsilon_1-\epsilon_2)  
\, (2\, a+\epsilon_1+\epsilon_2) }
-\frac{(a+m_1)\, (a+m_2) \, (a-m_1^{\prime}+\epsilon_1+\epsilon_2)  
\, (a-m_2^{\prime}+\epsilon_1+\epsilon_2) }
{4\, a \,  \epsilon_2 \, (\epsilon_1-\epsilon_2) 
\, (2\, a+\epsilon_1+\epsilon_2) }\right\rgroup 
+O\ll \fq^2\rr ,
\\
&
Z_{(1,1);(1)}^{(0,0),(0,0)}
(  a, \bm, \bm^{\prime}; \fq ) 
=
\fq^{\frac12} \ll \frac{1}{2\, a \, (-2\, a+\epsilon_1+\epsilon_2) }
-\frac{1}{2\, a \, (2\, a+\epsilon_1+\epsilon_2) }\rr 
+O\ll \fq^{\frac32}\rr, 
\label{N2n2_inst1}
\end{split}
\end{align}
\end{tiny}\noindent
for the vanishing first Chern class $c_1=0$ in \eqref{first_chern}.

We consider the subtraction of the $U(1)$ factor \eqref{u1_factor} 
for $(N,n)=(2,2)$ from the instanton partition functions
\begin{align}
\widehat{Z}_{\bsig;(\ell)}^{\bb,\bb^{\prime}}
(  a, \bm, \bm^{\prime}; \fq ) :=
\left( 1-\fq \right) ^{-\frac{\ll \sum_{I=1}^2 m_I\rr 
\ll \epsilon_1+\epsilon_2 - \frac{1}{2}\sum_{I=1}^2 m_I^{\prime}\rr }{2\,\epsilon_1\, \epsilon_2}}\,
Z_{\bsig;(\ell)}^{\bb,\bb^{\prime}}
(  a, \bm, \bm^{\prime}; \fq ).
\label{nek_N2n2_norm}
\end{align}
In \cite{Belavin:2011tb, Belavin:2012aa} (see also \cite{Bonelli:2011kv}), 
it was shown that the normalized instanton partition functions \eqref{nek_N2n2_norm} give 
the $\mathcal{N}=1$ super-Virasoro conformal blocks of 4-point 
function \eqref{para_WN_block} on ${\IP}^1$, with 
$c=\frac{3}{2}+3\,\frac{( \epsilon_1+\epsilon_2)^2}{\epsilon_1\, \epsilon_2}$, by 
the parameter identifications \eqref{agt_para_N2n1} 
(see \cite{Belavin:2011pp, Bonelli:2011jx, Ito:2011mw} for non-conformal/Whittaker limits). 
For example, the instanton partition functions \eqref{N2n2_inst1} 
correspond to the conformal blocks of four NS primary fields, and 
actually the conformal block at level 1,
\begin{align}
\frac{
\ll \Delta_{\mu^v} - \Delta_{\mu_1} + \Delta_{\mu_2} \rr 
\ll \Delta_{\mu^v} + \Delta_{\mu_3} - \Delta_{\mu_4} \rr 
}{
2\, \Delta_{\mu^v}},
\qquad
\Delta_{\mu} = 
\frac{
\mu \ll \epsilon_1+\epsilon_2-\mu \rr 
}{
2\,\epsilon_1\, \epsilon _2},
\end{align} 
agrees with the coefficient of $\fq$ in 
$\widehat{Z}_{(0,0);(0)}^{(0,0),(0,0)} (a, \bm, \bm^{\prime}; \fq )$, and 
the conformal blocks
\begin{align}
\begin{split}
\textrm{at level $\frac12$}:&\qquad
\frac{1}{2\, \Delta_{\mu^v}},
\\
\textrm{at level $\frac32$}:&\qquad
\frac{
\ll 1+2\, \Delta_{\mu^v} - 2\, \Delta_{\mu_1} + 2\, \Delta_{\mu_2}\rr 
\ll 1+2\, \Delta_{\mu^v} + 2\, \Delta_{\mu_3} - 2\, \Delta_{\mu_4}\rr }
{8\, \Delta_{\mu^v} \ll 1+2\, \Delta_{\mu^v} \rr }
\\
&\qquad
+ \frac{
6 
\ll \Delta_{\mu_2} - \Delta_{\mu_1}\rr  
\ll \Delta_{\mu_3} - \Delta_{\mu_4}\rr 
}{
\ll  c - \ll  9-2\, c\rr \Delta_{\mu^v} + 6\, \Delta_{\mu^v}^2 \rr 
\ll  1 + 2\, \Delta_{\mu^v} \rr },
\end{split}
\end{align}
agree with the coefficients of $\fq^{\frac12}$ and $\fq^{\frac32}$ in 
$2\epsilon_1\, \epsilon_2\, \widehat{Z}_{(1,1);(1)}^{(0,0),(0,0)} (a, \bm, \bm^{\prime}; \fq )$ \cite{Belavin:2011tb}.

\section{Integrable $\widehat{\mathfrak{sl}}(n)_N$ WZW 4-point conformal blocks 
for fundamental representations}
\label{app:wzw_4pt}

The integrable $\widehat{\mathfrak{sl}}(n)_N$ WZW conformal blocks of 
4-point function on ${\IP}^1$ 
of primary fields with (anti-)fundamental representations $\tableau{1}$, 
$\tableau{1}$, $\overline{\tableau{1}}$, and $\overline{\tableau{1}}$, 
schematically denoted by
\begin{align}
\left<\overline{\tableau{1}}(\infty)\,  \tableau{1}(1)\,  \tableau{1}(z)\,  \overline{\tableau{1}}(0) \right>_{\IP^1}^{\widehat{\mathfrak{sl}}(n)_N},
\end{align}
were obtained in 
\cite{Knizhnik:1984nr} (see also \cite{DiFrancesco:1997nk}), 
as solutions to the Knizhnik-Zamolodchikov equation, as
\begin{align}
\begin{split}
\mathcal{F}_1^{(0)}(z)&=
z^{-2h_{\tableau{1}}} \left( 1-z\right) ^{h_{\theta}-2h_{\tableau{1}}}
{_{2}F_{1}}\left( -\frac{1}{n+N},\frac{1}{n+N};\frac{N}{n+N};z\right) ,
\\
\mathcal{F}_2^{(0)}(z)&=\frac{1}{N}\,
z^{1-2h_{\tableau{1}}} \left( 1-z\right) ^{h_{\theta}-2h_{\tableau{1}}}
{_{2}F_{1}}\left( 1-\frac{1}{n+N},1+\frac{1}{n+N};1+\frac{N}{n+N};z\right) ,
\\
\mathcal{F}_1^{(1)}(z)&=
z^{h_{\theta}-2h_{\tableau{1}}} \left( 1-z\right) ^{h_{\theta}-2h_{\tableau{1}}}
{_{2}F_{1}}\left( \frac{n-1}{n+N},\frac{n+1}{n+N};1+\frac{n}{n+N};z\right) ,
\\
\mathcal{F}_2^{(1)}(z)&=-n\,
z^{h_{\theta}-2h_{\tableau{1}}} \left( 1-z\right) ^{h_{\theta}-2h_{\tableau{1}}}
{_{2}F_{1}}\left( \frac{n-1}{n+N},\frac{n+1}{n+N};\frac{n}{n+N};z\right) ,
\label{wzw_bc_kz}
\end{split}
\end{align}
where $h_{\tableau{1}}=\frac{n^2-1}{2n(n+N)}$ is the conformal dimension of 
the four primary fields, and $h_{\theta}=\frac{n}{n+N}$ is 
the conformal dimension of the adjoint field with weight $\theta=[N-1,1,0,\ldots,0,1]$. 
These four solutions correspond to two choices of the representations of 
states in the internal channel which follow from the fusion of 
$\tableau{1}$ and $\overline{\tableau{1}}$, 
and $\mathcal{F}_1^{(0)}(z), \mathcal{F}_2^{(0)}(z)$ 
(resp. $\mathcal{F}_1^{(1)}(z), \mathcal{F}_2^{(1)}(z))$ corresponds to 
the identity (resp. adjoint) field conformal block of ``$s$-channel''. 
Under a hypergeometric transformation
\begin{align}
z \quad \to \quad
\fq:=\frac{z}{z-1},
\end{align}
the Gauss hypergeometric function transforms as
\begin{align}
{_{2}F_{1}}\left( \alpha,\beta;\gamma;z\right) =
\left( 1-\fq \right) ^{\alpha}
{_{2}F_{1}}\left( \alpha,\gamma-\beta;\gamma;\fq\right) ,
\end{align}
and the $\widehat{\mathfrak{sl}}(n)_N$ WZW 4-point conformal blocks \eqref{wzw_bc_kz} 
are expressed, in the $\fq$-module, as
\begin{align}
\begin{split}
\widehat{\mathcal{F}}_1^{(0)}(\fq)&:=
z^{2h_{\tableau{1}}} \mathcal{F}_1^{(0)}(z)=
\left( 1-\fq \right) ^{2h_{\tableau{1}}-\frac{n+1}{n+N}}
{_{2}F_{1}}\left( -\frac{1}{n+N},\frac{N-1}{n+N};\frac{N}{n+N};\fq\right) ,
\\
\widehat{\mathcal{F}}_2^{(0)}(\fq)&:=
z^{2h_{\tableau{1}}} \mathcal{F}_2^{(0)}(z)=
-\frac{\fq}{N}
\left( 1-\fq \right) ^{2h_{\tableau{1}}-\frac{n+1}{n+N}}
{_{2}F_{1}}\left( \frac{N-1}{n+N},1-\frac{1}{n+N};1+\frac{N}{n+N};\fq\right) ,
\\
\widehat{\mathcal{F}}_1^{(1)}(\fq)&:=
\frac{z^{2h_{\tableau{1}}}}{n} \mathcal{F}_1^{(1)}(z)=
\frac{\left( -\fq \right) ^{h_{\theta}}}{n} 
\left( 1-\fq \right) ^{2h_{\tableau{1}}-\frac{n+1}{n+N}}
{_{2}F_{1}}\left( \frac{n-1}{n+N},1-\frac{1}{n+N};1+\frac{n}{n+N};\fq\right) ,
\\
\widehat{\mathcal{F}}_2^{(1)}(\fq)&:=
\frac{z^{2h_{\tableau{1}}}}{n} \mathcal{F}_2^{(1)}(z)=
-\left( -\fq \right) ^{h_{\theta}} 
\left( 1-\fq \right) ^{2h_{\tableau{1}}-\frac{n+1}{n+N}}
{_{2}F_{1}}\left( -\frac{1}{n+N},\frac{n-1}{n+N};\frac{n}{n+N};\fq\right).
\label{wzw_bc_kz_trans}
\end{split}
\end{align}


\end{document}